\documentclass[
 reprint,
 twocolumn,
 amsmath,amssymb,
 aps,
 prl,
 superscriptaddress,
]{revtex4-2}
\usepackage{graphicx}
\usepackage{bm}
\usepackage{subcaption}
\graphicspath{{figures/}{images/}{plots/}}
\usepackage{amsmath}
\usepackage{amssymb}
\usepackage{amsthm}
\usepackage{graphicx}
\usepackage{mathrsfs}
\usepackage{braket}
\usepackage{comment}
\usepackage[hidelinks]{hyperref}
\usepackage[capitalize]{cleveref}
\usepackage[dvipsnames]{xcolor}
\usepackage{soul,xcolor}
\usepackage{orcidlink}
\usepackage{color, xcolor, colortbl}
\usepackage{graphicx}
\usepackage{amsthm,amsmath,amssymb,amsfonts}
\usepackage{dcolumn}
\usepackage{url}
\usepackage{epstopdf}
\usepackage{enumitem}
\usepackage{algpseudocode}
\usepackage{bm}
\usepackage{appendix}
\usepackage{multirow}
\usepackage{braket}
\usepackage[english]{babel}
\usepackage{aliascnt}
\usepackage{hyperref}
\usepackage[capitalize]{cleveref}
\usepackage{algorithm}
\usepackage{xpatch}
\usepackage{tikz}
\usepackage{adjustbox}
\usepackage{xspace}
\usetikzlibrary{quantikz}
\usepackage{pifont}

\newcommand{\Or}{\mathcal{O}}

\newcommand{\diag}{\operatorname{diag}}

\newcommand{\poly}{\operatorname{poly}}
\newcommand{\Tr}{\operatorname{Tr}}

\DeclareMathOperator{\ad}{ad}


\newcommand{\mc}[1]{\mathcal{#1}}

\newcommand{\wt}[1]{\widetilde{#1}}

\newtheorem{thm}{Theorem}
\crefname{thm}{Theorem}{Theorems}
\Crefname{thm}{Theorem}{Theorems}

\newaliascnt{lem}{thm}
\newaliascnt{rem}{thm}
\newaliascnt{prop}{thm}
\newaliascnt{cor}{thm}
\newaliascnt{assumption}{thm}
\newaliascnt{defn}{thm}
\newaliascnt{notation}{thm}
\newaliascnt{fact}{thm}

\newtheorem{lem}[lem]{Lemma}
\newtheorem{rem}[rem]{Remark}

\newtheorem{cor}[cor]{Corollary}

\aliascntresetthe{lem}
\aliascntresetthe{rem}
\aliascntresetthe{prop}
\aliascntresetthe{cor}
\aliascntresetthe{assumption}
\aliascntresetthe{defn}
\aliascntresetthe{notation}
\aliascntresetthe{fact}


\crefname{lem}{Lemma}{Lemmas}
\Crefname{lem}{Lemma}{Lemmas}

\crefname{rem}{Remark}{Remarks}
\Crefname{rem}{Remark}{Remarks}

\crefname{prop}{Proposition}{Propositions}
\Crefname{prop}{Proposition}{Propositions}

\crefname{cor}{Corollary}{Corollaries}
\Crefname{cor}{Corollary}{Corollaries}

\crefname{assumption}{Assumption}{Assumptions}
\Crefname{assumption}{Assumption}{Assumptions}

\crefname{defn}{Definition}{Definitions}
\Crefname{defn}{Definition}{Definitions}

\crefname{notation}{Notation}{Notations}
\Crefname{notation}{Notation}{Notations}

\crefname{fact}{Fact}{Facts}
\Crefname{fact}{Fact}{Facts}

\usepackage{bbm}

\NewDocumentCommand{\ketbra}{mG{#1}}{\mathinner{|{#1}\rangle\!\langle{#2}|}}

\usepackage{silence}
\WarningFilter{revtex4-2}{Repair the float}

%

\usetikzlibrary{fit}
\tikzset{%
  highlight/.style={rectangle,rounded corners,fill=blue!15,draw,fill opacity=0.3,thick,inner sep=0pt}
}

%


\graphicspath{{figures/}}

\newcommand{\DeptMath}{Department of Mathematics, University of Michigan, Ann Arbor, MI 48109, USA}
\begin{document}

\title{Beyond Lindblad Dynamics: Rigorous Guarantees for Thermal and Ground State Preservation under System-Bath Interactions}

\author{Ke Wang\orcidlink{0009-0001-3225-887X}}
\email{kwmath@umich.edu}
\affiliation{\DeptMath}
\author{Zhiyan Ding\orcidlink{0000-0001-8863-403X}}
\email{zyding@umich.edu}
\affiliation{\DeptMath}
\begin{abstract}
We establish new theoretical results demonstrating the efficiency and robustness of system-bath interaction algorithms for quantum thermal and ground state preparation. {We rigorously show that, even when the coupling strength is chosen independently of the desired accuracy, the induced quantum channel admits the target thermal and ground states as approximate fixed points with arbitrarily high precision. This contrasts with prior analyses, which typically rely on the leading order Lindbladian approximation and require the coupling strength to decrease polynomially with the target error tolerance. Our proof introduces new techniques for controlling all orders of the Dyson expansion and for analyzing
the associated multidimensional operator Fourier transforms.
Building on these new estimation, we demonstrate an improved end-to-end complexity analysis of thermal and ground state preparation for the system-bath interaction algorithms.}
These bounds substantially improve upon prior results, and numerical simulations further confirm the robustness of the system-bath interaction framework across both weak and strong coupling regimes.
\end{abstract}
\maketitle

Quantum thermal state and ground state preparation are fundamental primitives with broad applications in quantum many body physics, quantum chemistry, and materials science.
Dissipative quantum algorithms provide a natural route to these tasks by engineering an open quantum systems dynamics whose fixed point is approximately the desired thermal or ground state~\cite{VerstraeteWolfIgnacioCirac2009,RoyChalkerGornyiEtAl2020,zhou2021symmetry,Cubitt2023,WangSnizhkoRomitoEtAl2023,LuLessaKimEtAl2022,Temme_2011,Mozgunov2020,gilyen2024,li2024dissipative,langbehn2025universal}.

A particularly prominent realization of the dissipative paradigm is {Lindblad dynamics}, which describes the {weak coupling limit of {microscopic system-bath interactions models} under the Born-Markov and secular approximations}~\cite{Lindblad1976,GoriniKossakowskiSudarshan1976,breuer2002theory,PhysRevB.102.115109}. This tractable structure of Lindblad dynamics has enabled numerous Lindblad-based algorithms for thermal state preparation~\cite{ChenKastoryanoBrandaoEtAl2025,Ding_2025} and ground state preparation~\cite{Ding2024,zhan2025rapidquantumgroundstate} with state preservation and mixing time guarantees~\cite{TemmeKastoryanoRuskaiEtAl2010,KastoryanoTemme2013,BardetCapelGaoEtAl2023,kochanowski2024rapid,bergamaschi2026fastmixingquantumspin,DingLiLinZhang2024,tong2024fast,zhan2025rapidquantumgroundstate,smid2025rapidmixingquantumgibbs,smid2025polynomial,rouz2024,rouze2024optimal}.
Despite the mathematical convenience, their practical implementation needs highly structured quantum subroutines. Implementing these algorithms requires quantum oracles for complicated jump operators, which are constructed as operator Fourier transforms of system operators in the Heisenberg picture. Such block encodings typically rely on linear combinations of unitaries, quantum Fourier transform, multiple ancilla registers, and controlled Hamiltonian evolutions~\cite{cleve_2017,li_et_2023,ChenKastoryanoBrandaoEtAl2023,PRXQuantum.5.020332}. These requirements are difficult to realize on near term noisy devices and can remain resource intensive in early fault tolerant architectures.

System-bath interaction algorithms provide a more direct alternative~\cite{hahn2025provably,lloyd2025quantumthermal,ding2025endtoendefficientquantumthermal}. {They closely mirror the microscopic interaction model and naturally fit the repeated interaction paradigm~\cite{strasberg2017quantum,PhysRevLett.126.130403,ramonescandell2025thermalstatepreparationrepeated,CICCARELLO20221,PocrnicSegalWiebe2025}.}
The algorithms couple the system to a small ancillary bath, simulate the joint Hamiltonian evolution, trace out and reset the bath after each step. Equivalently, each reset can be viewed as replacing the bath by a fresh identically prepared ancillary system.
{ Compared with Lindblad-based state preparation proposals, these algorithms require only forward Hamiltonian simulation and couplings through local system operators, thereby avoiding the explicit implementation of complicated jump operators. Earlier works explored this idea through heuristic and numerical studies~\cite{andersen2025thermalization,shtanko2021preparing,hagan2025thermodynamic,PhysRevA.104.012414,PhysRevA.101.012328,MiMichailidisShabaniEtAl2024,Lloyd2025Quasiparticle,schlomer2026}, while recent analyses have established rigorous state preservation and end-to-end complexity bounds
~\cite{hahn2025provably,lloyd2025quantumthermal,ding2025endtoendefficientquantumthermal,slezak2026polynomialtime}.}

{Despite their more direct implementation, existing rigorous analyses of interaction-based algorithms still rely largely on the analysis of the Lindblad dynamics~\cite{hahn2025provably,lloyd2025quantumthermal,ding2025endtoendefficientquantumthermal,slezak2026polynomialtime}.}
For coupling parameter $\Gamma$,
the induced discrete quantum channel admits the expansion
\begin{equation}\label{eqn:Lindblad_approximation_weak}
{\Phi_{\Gamma}=e^{\Gamma^2\mathcal{L}_{\rm Lind}} +
\mathcal{O}(\Gamma^4)}\,,
\end{equation}
where $\mathcal{L}_{\rm Lind}$ is the effective Lindbladian. Arbitrarily accurate fixed point preservation is therefore established only after replacing the full interaction channel with its Lindblad approximation in the weak coupling limit $\Gamma \to 0$,
and verifying the resulting Lindblad dynamics approximately preserves the target state.
This follows the standard approach in {microscopic open system}~\cite{breuer2002theory} and {repeated interaction models~\cite{strasberg2017quantum,PhysRevLett.126.130403}, where thermalization and cooling are established by reducing the dynamics to Lindblad form.}
Moreover, the resulting algorithms remain {confined to the \emph{weak coupling regime}}, where each channel application contracts only weakly toward the target state, {increasing both the iteration count and the total algorithmic cost}.

In this letter, we prove that system-bath interaction algorithms can preserve thermal and ground states beyond the Lindblad approximation. {The key technical ingredient is a direct control of the full Dyson expansion of the channel, including the associated multidimensional operator Fourier transforms.}
{Conceptually, our analysis shows that dissipative state preparation algorithms \textit{need not} be designed or understood only through their leading order Lindblad approximation.} The system-bath interaction algorithms provide an analytically tractable setting in which the full reduced system dynamics can be analyzed at finite coupling.
From this viewpoint, higher-order corrections are not simply error terms: they can contain useful dynamical information and can sharpen state preservation and convergence guarantees.
This enlarges the design space for dissipative algorithms by allowing one to {use the full reduced dynamics rather than only its effective Lindblad generator.}
Although our analysis focuses on an engineered system-bath interaction model, it demonstrates that {thermalization and cooling can persist \emph{beyond the weak coupling regime}, providing a distinct perspective on open system dynamics. }

To quantify the resulting efficiency improvement of the interaction-based algorithm, {we develop a perturbative framework that controls the higher order corrections relative to the leading order dissipative dynamics.} Combined with the finite coupling state preservation result, this allows us to choose the {coupling strength independently of the target accuracy} and to establish improved end-to-end complexity bounds for thermal state preparation in certain noncommuting systems and ground state preparation in free fermionic systems. We further support our results with numerical experiments on transverse-field Ising, Hubbard, and ANNNI models, confirming the predicted speedups. The numerics also show robust convergence in strong coupling regimes beyond the scope of our present analysis.

{\it System-bath interaction algorithm} --
Given a $N$-qubit system Hamiltonian $H$ and inverse temperature $0<\beta<\infty$, the thermal state is defined as $\sigma_\beta = e^{-\beta H}/\mathrm{Tr}(e^{-\beta H})$, with the ground state corresponding to the limit $\beta \to \infty$. The system-bath interaction algorithm~\cite{hahn2025provably,lloyd2025quantumthermal,ding2025endtoendefficientquantumthermal} aims to prepare the target state by iteratively applying a quantum channel $\Phi_{\Gamma}$ generated from a system-bath interaction evolution:
\begin{equation}\label{eqn:Phi_alpha}
\rho_{n+1} = \Phi_{\Gamma}(\rho_n) := \mathbb{E}\left[\mathrm{Tr}_E\left( U^\Gamma(T) \left( \rho_n \otimes \rho_E \right) U^{\Gamma}(T)^\dagger \right)\right]\,.
\end{equation}
Here $U^\Gamma(t) := \mathcal{T} \exp\left(-i \int_{-T}^t H_{\Gamma}(s)\mathrm{d}s\right)$ denotes the time evolution operator of the joint system-bath dynamics up to time $t$. The total Hamiltonian is given by
\begin{equation}\label{eqn:Ham_total}
H_{\Gamma}(t) = H + H_E + \Gamma f(t) \left( A_S \otimes B_E + A_S^\dagger \otimes B_E^\dagger \right)\,,
\end{equation}
where $H$ is the system Hamiltonian, $H_E,B_E$ are the bath Hamiltonian and operator. Here, $\Gamma$ is the coupling strength between the system and the bath, and $f(t)$ is a filter function.

 The algorithm approximates the target state by iteratively applying the quantum channel $\Phi_\Gamma$.
 Each iteration requires simulating the time-dependent Hamiltonian $H_{\Gamma}(t)$ over the interval $t \in [-T, T]$. Since the auxiliary terms $H_E$ and $A_S\otimes B_E$ are strictly local, their computational overhead relative to simulating $H$ is negligible.

 We define the main cost metric as the total simulation time, $T_{\rm total} := 2T \tau_{\rm mix}$. Here $2T$ is the Hamiltonian simulation time at each step, and $\tau_{\rm mix}$ is the mixing time of the quantum channel $\Phi_{\Gamma}$,
 i.e., the number of channel applications to prepare the target state to the desired accuracy.
 See~\cref{sec:preliminaries}~\cref{def:mixing_time} for the formal  definition.
{ We use the standard asymptotic notation, $f = \mathcal{O}(g)$ if $f\leq c g$ for some constant $c>0$, and  $f = \Theta(g)$ if both $f = \mathcal{O}(g)$ and $g = \mathcal{O}(f)$. The notations $ \widetilde{\mathcal{O}}$ and $\widetilde{\Theta}$ suppress polylogarithmic factors.}

In the following, we focus on the interaction-based algorithm in~\cite{ding2025endtoendefficientquantumthermal} as a concrete setting to present our results. The bath consists of a single ancilla qubit  with $H_E=-{\omega}Z/2$ and $B_E=(X_E-iY_E)/2=\ket{1}\bra{0}$, where the bath frequency $\omega$ is randomly sampled from a probability density $g(\omega)$ to be specified below.
The system coupling operator $A_S$ generates transitions between different energy levels of the system and is  sampled uniformly from a set  $\mc{A}=\{A^i,-A^i\}_i$ where $\{(A^i)^\dagger\}_i=\{A^i\}_i$ and $\|A^i\|\leq 1$. For spin systems, we take $\mc{A}=\{\pm X_j,\pm Y_j,\pm Z_j\}_{j=1}^n$, while for fermionic systems, we set $\mc{A}=\{\pm c_j,\pm c_j^\dagger\}_{j=1}^n$.
 The filter function is chosen as $f(t)= \frac{1}{(2\pi)^{1/4} \sigma} \exp\left(-\frac{t^2}{4\sigma^2}\right)$. Both coupling strength $\Gamma$ and filter width $\sigma$ are user-specified parameters that control the accuracy of the algorithm. We refer to \textit{finite coupling} regime when $\Gamma$ is independent of $\epsilon$, and the full channel is no longer controlled solely by its leading order Lindblad approximation as in~\cite{ding2025endtoendefficientquantumthermal,hahn2025provably,lloyd2025quantumthermal,slezak2026polynomialtime}.

{\it{Main Result}}-- In this work, we prove the following state preservation results in the finite coupling regime.

\begin{thm}[Informal: { Fixed-Point approximation of} thermal state]\label{thm:main_informal_thermal} For any inverse temperature $\beta>0$, accuracy $\epsilon>0$ we can choose the coupling strength  {$\Gamma = \mathcal{O}(1)$}, $T = \widetilde{\Theta}(\sigma)$ and {$\sigma = \mathcal{O}(\beta t_{\rm mix}/\epsilon)$ such that $\left\|\rho_{\rm fix}(\Phi_{\Gamma})-\sigma_\beta\right\|_1\leq \epsilon$.}
\end{thm}
\begin{thm}[Informal: {Fixed-Point approximation of} ground state]\label{thm:main_informal_ground}
Assume $H$ has a unique ground state $\ket{\psi_0}$ with spectral gap $\Delta$. Given precision $\epsilon>0$, we can choose {$\Gamma = \mathcal{O}(1)$}, $T = \widetilde{\Theta}(\sigma)$, and {$\sigma=\widetilde{\Theta}(\Delta^{-1}\mathrm{log}(t_{\rm mix}/\epsilon))$} such that $\left\|\rho_{\rm fix}(\Phi_{\Gamma})-{\ketbra{\psi_0}}\right\|_1\leq \epsilon$.
\end{thm}

Here, we omit logarithmic factors in $\|H\|$, $\beta$, and $1/\epsilon$ for simplicity. The $t_{\rm mix}:=\Gamma^2\tau_{\rm mix}/\sigma$ defined in~\cref{sec:preliminaries}~\cref{eqn:rescaled_mixing_time} is the rescaled mixing time of $\Phi_{\Gamma}$. This is a standard characterization in the literature~\cite{ding2025endtoendefficientquantumthermal,slezak2026polynomialtime}, and removes the explicit $\sigma$ dependence. See \emph{Analysis overview} for details.
The full statements and proofs are provided in~\cref{sec:thermal_state_appendix} and~\cref{sec:ground_state_appendix}.

According to~\cref{thm:main_informal_thermal,thm:main_informal_ground},  {as long as $t_{\rm mix}$ is upper bounded, choosing $\Gamma$ as a sufficiently small constant independent of all other parameters} ensures that the fixed point of $\Phi_{\Gamma}$ can be made arbitrarily close to the target thermal or ground state by taking $\sigma$ sufficiently large. This contrasts with previous analyses on the weak coupling regime.
As indicated in~\cref{eqn:Lindblad_approximation_weak}, previous proofs first replace the full interaction-induced channel by its leading order Lindblad approximation, and then prove state preservation for the resulting Lindblad dynamics.
A rigorous proof
that the fixed point of $\Phi_\Gamma$ approximates the target state to high accuracy therefore build on the vanishing higher order term $\mathcal{O}(\Gamma^4)$ with $\epsilon$. To guarantee an arbitrary target accuracy $\epsilon$, existing analyses choose a vanishing coupling strength of
order $\mathcal{O}(\mathrm{poly}(\epsilon))$~\cite{hahn2025provably,lloyd2025quantumthermal,ding2025endtoendefficientquantumthermal,slezak2026polynomialtime}.
In particular,~\cite{ding2025endtoendefficientquantumthermal} imposes {$\Gamma = \mathcal{O}(\epsilon)$} for both thermal and ground state preparation.
{On the other hand, this result can be combined with the perturbative framework for mixing time analysis. The coupling strength $\Gamma$ remains  independent of $\epsilon$, although it may scale with the system size.
For instance, in the following~\cref{thm:total_run_time}, we take $\Gamma$ scales as $\mathcal{O}(\mathrm{poly}(N^{-1}))$ to control the higher-order terms.}

{\it{{Application: End-to-end complexity beyond Lindblad dynamics}}}--
Building on the state preservation result in~\cref{thm:main_informal_thermal,thm:main_informal_ground}, we develop a novel perturbation framework for analyzing the mixing time of the system-bath interaction algorithm \emph{beyond the Lindblad approximation}.
This yields improved end-to-end complexity bounds for both thermal and ground state preparation.

For the thermal state preparation, existing literature has studied spectral gaps of Lindblad dynamics for a broad class of physical models~\cite{rouz2024,tong2024fast,smid2025polynomial,smid2025rapidmixingquantumgibbs,rouze2024optimal,bergamaschi2026fastmixingquantumspin}, including high‑temperature local spin Hamiltonians~\cite{rouz2024,rouze2024optimal}, weakly interacting fermionic/spin systems at all temperatures~\cite{smid2025rapidmixingquantumgibbs}, and 1D local Hamiltonians at all temperatures~\cite{bergamaschi2026fastmixingquantumspin}.
For \textbf{all these models}, we summarize their complexity in the following informal theorem.
\begin{thm}[Informal: Total simulation time bound {of thermal state}]\label{thm:total_run_time} For
all the above models {with $N$ qubit
and} proper choice of parameter. The system-bath interaction algorithm can prepare an $\epsilon$-approximation of the thermal state $\sigma_\beta$ with total Hamiltonian simulation time $T_{\rm total}=\widetilde{\mathcal{O}}\left(\frac{{N}^7}{\epsilon^2}\right)$ {where $\Gamma = \mathcal{O}(N^{-1/2})$}.
\end{thm}
We put the detailed version of \cref{thm:total_run_time}
and its proof in \cref{sec:mixing_time_thermal}. The above theorem is a direct generalization of the end-to-end complexity results in~\cite{slezak2026polynomialtime} beyond the Lindblad dynamics. Because $\Gamma$ is not required to depend on $\epsilon$, the above result saves a factor of ${N}^3/\epsilon^2$ in total runtime compared to~\cite[Theorem 7]{slezak2026polynomialtime}.

For ground state preparation, the perturbative framework with finite $\Gamma$ yields generalized end‑to‑end complexity bounds for quadratic fermionic systems~\cite{ding2025endtoendefficientquantumthermal}.
{\begin{thm}[Informal: Total simulation time bound of ground state]\label{thm:ground_run_time}
For ground state preparation of $N$ qubit gapped quadratic fermionic systems, given precision $\epsilon>0$ and proper choice of $A_S$, we can choose $\Gamma = {\mathcal{O}}(1)$, $\sigma = \widetilde{\Theta}(\Delta^{-1}\log(N/\epsilon)),  T = \widetilde{\Theta}(\sigma)$ such that
the mixing time $t_{\rm mix} (\epsilon)= \widetilde{\mathcal{O}}(\|h\|N\log(1/\epsilon))$ and $T_{\rm total} = \widetilde{\mathcal{O}}(\Delta^{-2}N\log(1/\epsilon))$.
\end{thm}}
{This result generalizes the end-to-end complexity result for quadratic fermionic systems in~\cite{ding2025endtoendefficientquantumthermal} to the finite coupling regime, while also improving the dependence on $\epsilon$ from $\mathrm{poly}(1/\epsilon)$ to $\log(1/\epsilon)$.} A detailed version of \cref{thm:ground_run_time}, together with its proof, is given in~\cref{sec:ground_mixing_time_1}. We note that the construction of $A_S$ uses structural information about the system Hamiltonian, which may be difficult to access in practice. To address this issue, in~\cref{sec:ground_mixing_time_2} we present an alternative choice of $A_S$ that avoids using such structural information, but leads to worse dependence on $N$ in $\Gamma$, $t_{\rm mix}$, and $T_{\rm total}$; see \cref{thm:S16} for details.

{\it Analysis Overview}--
Our analysis {extends the previous work that focuses on the leading order Lindbladian term by providing explicit control of higher order terms.} Our analysis consists of two components: (i) rigorous guarantees that the target thermal or ground state is approximately preserved by the discrete quantum channel $\Phi_\Gamma$ when $\Gamma=\mathcal{O}(1)$; and (ii) a perturbative framework for the end-to-end complexity analysis of thermal  and ground state preparation when $\Gamma$ is independent of $\epsilon$. {We emphasize that the dependence of $\Gamma$ on $N$ in \cref{thm:total_run_time} is needed only for the mixing time bound and might be improved by a refined analysis (see~\cref{re:improvement}). In contrast, the fixed point approximation results in \cref{thm:main_informal_thermal} and \cref{thm:main_informal_ground} hold for $\Gamma=\mathcal{O}(1)$, which can be chosen independently of $N$, with the dependence on $t_{\rm mix}$ left implicit.
}

First, the discrete quantum channel $\Phi_\Gamma$ induced by the system-bath interaction admits a Dyson series expansion. Inspired by recent progress on error analysis for open quantum systems~\cite{liu2025error,huang2024unified}, the influence of the environment on the system can be fully characterized by multi-point bath correlation functions. This leads to an expansion of $\Phi_\Gamma$ in even powers of the coupling strength, $\Gamma^{2n}$. The leading order term, of order $\mathcal{O}(\Gamma^2)$, gives an effective Lindbladian that approximately preserves the target state, as shown in~\cite{ding2025endtoendefficientquantumthermal}. The main technical challenge is to control the higher-order terms $\mathcal{O}(\Gamma^{2n})$ when $\Gamma=\mathcal{O}(1)$. These higher-order terms involve multidimensional operator Fourier transforms of the form $G^\dagger_{k}(\omega)\rho G_{2n-k}(\omega)$; see~\cref{thm:dyson_Series_phi_alpha} for details. Here $G_{[\cdot]}(\omega)$ denotes a multidimensional operator Fourier transform with $k$ time variables. Our key observation is that \emph{a generalized detailed balance condition} holds for these higher-order terms. More precisely, for any $\omega>0$, we show that conjugating with the Gibbs state $\sigma^{-1}_\beta G_{k}(\omega) \sigma_\beta$ preserves the form of the multidimensional Fourier transforms with shifts and rescaling in the filter function, see~\cref{sec:thermal_state_appendix}~\cref{lm:diff}. The rigorous justification of this approximate detailed balance structure relies on a time domain framework for the detailed balance transformation in~\cref{lm:1}. This framework is designed to handle the multidimensional operator Fourier transform structure and differs substantially from previous Bohr-frequency-domain approximation techniques~\cite{ChenKastoryanoBrandaoEtAl2023,hahn2025provably}. Together with the unitality of the adjoint channel, $\Phi_\Gamma^\dagger(I)=I$, this approximate detailed balance structure shows that $\Phi_\Gamma$ approximately fixes the Gibbs state. This gives the fixed-point approximation result for thermal state preparation in~\cref{thm:main_informal_thermal}. For ground state preservation, we evaluate $G_{k}\ket{\psi_0}$ using the Bohr-frequency-domain approximation technique~\cite{ChenKastoryanoBrandaoEtAl2025,ding2025endtoendefficientquantumthermal}, where $\ket{\psi_0}$ denotes the ground state. Thanks to the Gaussian filter function $f$, we show that $G_{k}\ket{\psi_0}$ lies predominantly in the subspace spanned by $\ket{\psi_0}$, up to an error that is exponentially small in $\sigma^2\Delta^2$. This implies that the higher-order terms in the Dyson expansion also approximately preserve the ground state, with an error exponentially small in $\sigma^2\Delta^2$, as shown in~\cref{thm:main_informal_ground}.

Using the previous state preservation result,~\cref{thm:total_run_time,thm:ground_run_time} establish the end-to-end complexity by constructing a perturbative framework relate the mixing time of the leading Lindblad term with higher order Dyson series in $\Phi_\Gamma$ properly. To the best of our knowledge, existing literature focuses exclusively on the mixing time of Lindbladian or near Lindbladian dynamics; consequently, these techniques are not directly applicable to our setting where $\Gamma$ is independent of $\epsilon$. To bridge this gap, we decompose $\Phi_\Gamma$ into a second-order term and a higher-order remainder as~\cref{eqn:Lindblad_approximation_weak}.
\begin{equation}
\Phi_\Gamma
=
\rho
+
\underbrace{
    \overbrace{\Gamma^2 \mathcal{L}_{\mathrm{Lind}}(\rho)}
    ^{\text{Previous works}}
    +
    \Gamma^4(\cdots)
}_{\text{Our analysis}} .
\end{equation}
We note that the choice of $f$ makes the norm of $\mathcal{L}_{\mathrm{Lind}}$ scale as $\mathcal{O}(1/\sigma)$. Consequently, the contraction rate of the second-order term is of order $\mathcal{O}(\Gamma^2/\sigma)$, which is consistent with the definition of the rescaled mixing time $t_{\rm mix}$. For thermal state preparation, the second-order term corresponds to a {Kubo-Martin-Schwinger (KMS)} detailed balance Lindbladian. Leveraging a recent monotonic spectral gap result from~\cite{slezak2026polynomialtime}, we lower bound this gap by $C/\sigma$, where $C$ is a uniform constant. While the second-order term possesses the desired spectral gap and fixed point, our regime of interest includes higher-order terms that could potentially close the gap and slow down mixing. To address this, we develop a general perturbation framework for quantum channels, inspired by~\cite{ChenKastoryanoBrandaoEtAl2023,ding2025endtoendefficientquantumthermal,slezak2026polynomialtime}, that bounds the spectral gap stability under perturbations. By utilizing the time domain framework developed for our fixed point analysis, we directly control the higher-order terms within the KMS inner product structure. Specifically, we demonstrate that if the second order truncation has a spectral gap $\lambda_{\rm gap}$, choosing $\Gamma^2 = \mathcal{O} (\lambda_{\rm gap})$ suffices to preserve the gap up to a constant factor. This analysis establishes the mixing time bound in Theorem~\ref{thm:total_run_time}. {The $N$ dependence in $\lambda_{\rm gap}$ leads to the $N$ dependence in $\Gamma$ in~\cref{thm:total_run_time}. For ground state preparation, we focus on gapped quadratic fermionic systems and provide an generalization of the result in~\cite{ding2025endtoendefficientquantumthermal}. Follow the idea in~\cite{zhan2025rapidquantumgroundstate}, it is enough to control  $\Tr(\rho \Phi_\Gamma^\dagger(\hat{N}))$ where $\hat{N}$ measures the number of excitations above the ground state.   A key step is to bound the multidimensional Fourier transforms generated by the filter function. This allows us to improve the accuracy dependence from $\mathcal{O}(\poly(1/\epsilon))$ to $\mathcal{O}(\log(1/\epsilon))$.}

{\it Numerical results} --
\begin{figure*}[t]
  \centering
  \begin{subfigure}[b]{0.32\textwidth}
    \centering
    \includegraphics[width=\linewidth]{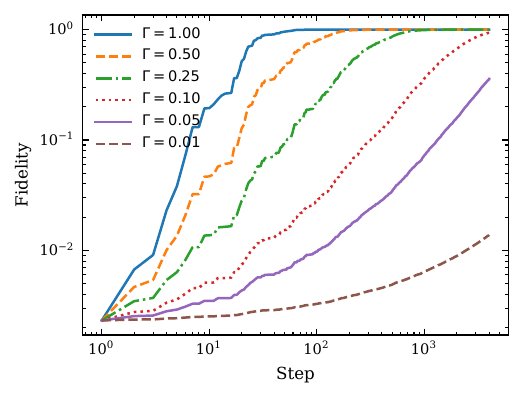}
    \caption{$\Gamma = \Theta(1)$}
    \label{Fig:TFIM_4_thermal_main}
  \end{subfigure}\hfill
  \begin{subfigure}[b]{0.32\textwidth}
    \centering
    \includegraphics[width=\linewidth]{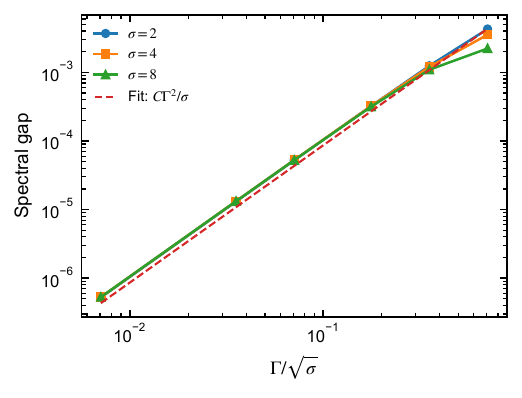}
    \caption{$\Gamma =  \Theta(1)$}
    \label{fig3:spectral_gap}
  \end{subfigure}\hfill
  \begin{subfigure}[b]{0.32\textwidth}
    \centering
    \includegraphics[width=\linewidth]{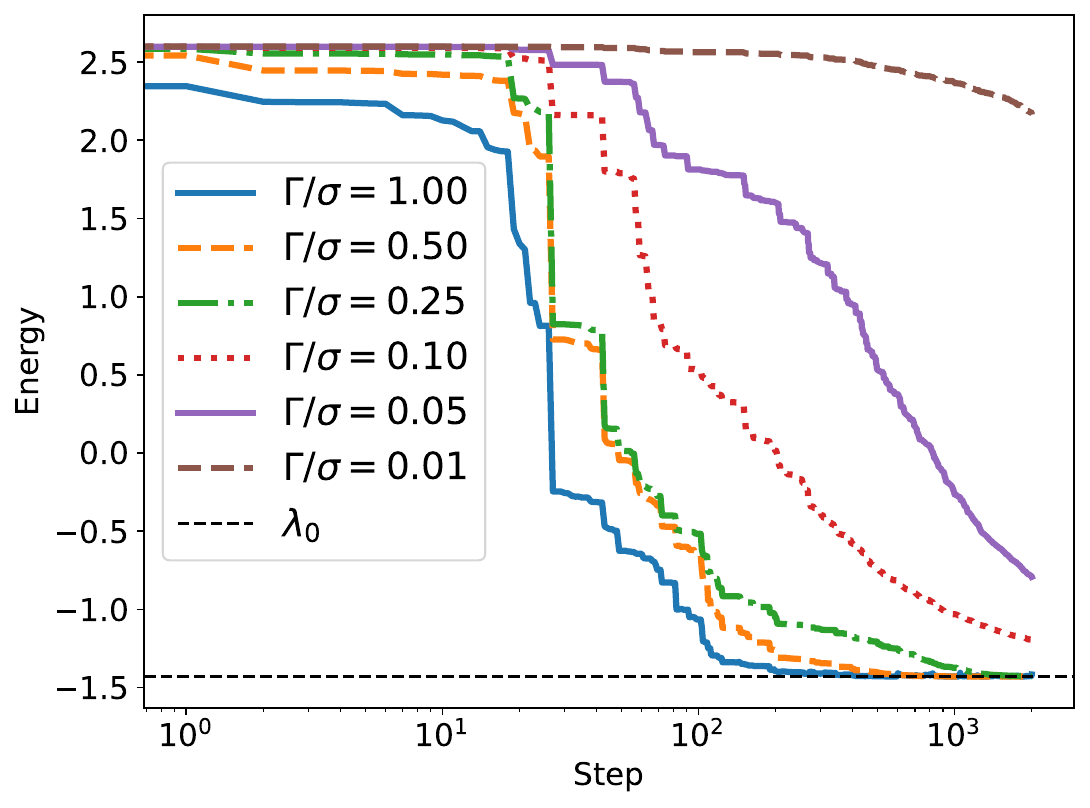}
    \caption{$\Gamma =  \Theta(\sigma)$}
    \label{fig:ANNNI_energy}
  \end{subfigure}
  \caption{ Fig(a)-(b): Thermal State Preparation in the finite coupling $\Gamma=\mathcal{O}(1)$ regime with frequency $\omega$ is sampled from Gaussian distribution $g(\omega) = \frac{\beta}{\sqrt{2\pi(2-\beta^2/(4\sigma^2))}}\exp\left(-\frac{(\beta\omega+1)^2}{2(2-\beta^2/(4\sigma^2))}\right)$: (a) Fidelity of TFIM with $L=4$ sites with $\beta=1$, $\sigma=2$, and $T=5\sigma$; (b) Spectral gap of $\Phi_{\Gamma}$ for the Hubbard model with $L = 2$ sites.
  The spectral gap scales as ${\Gamma^{2}/\sigma}$ and is independent of $\sigma$.
  (c) Evolution of energy for ground state preparation of the ANNNI model in the $\Gamma=\Theta(\sigma)$ regime
  , $\lambda_0$ is the ground state energy. The frequency $\omega$ is sampled uniformly from the interval $[0,5]$. }
  \label{fig:ABC}
\end{figure*}
We numerically implement the system bath interaction algorithm to validate our analysis results in the constant coupling regime $\Gamma =\mathcal{O}(1)$ and to probe its behavior in the strong coupling regime $\Gamma = \mathcal{O}(\sigma)$. As we discussed before, we expect the rescaled mixing time $t_{\rm mix}:=\Gamma^2\tau_{\rm mix}/\sigma$ to remain constant as $\sigma$ increases.

We first consider the transverse field Ising model (TFIM) with up to $L = 8$ sites:
\begin{equation}\label{eqn:TFIM}
    H = -J\sum_{i=1}^{L-1} Z_i Z_{i+1} - g \sum_{i = 1}^L X_i\,,
\end{equation}
with $ J = 1$, $g = 1.2$.
We sample $A_S$ uniformly chosen from the single qubit Pauli set, with positive and negative signs assigned with equal probability. Starting from the initial state, we simulate the dynamics by iteratively applying the  discrete quantum channel $\Phi_\Gamma$ and track the evolution via the fidelity $F(\rho, \sigma_\beta) = \left(\mathrm{Tr}\sqrt{\sqrt{\sigma_\beta}\rho\sqrt{\sigma_\beta}}\right)^2$
recorded as a function of the number of channel applications along a single trajectory.
~\cref{Fig:TFIM_4_thermal_main} shows that, upon fixing the filter width $\sigma = 1$ and varying the coupling strength $\Gamma = \mathcal{O}(1)$, the iterates converge to $\sigma_\beta$, with a faster convergence for larger $\Gamma$. This observation is in quantitative agreement with the scaling predicted in~\cref{thm:main_informal_thermal}.

Next we consider the 1-D Hubbard model defined with up to $L=4$ spinful sites with open boundary conditions,
\begin{equation}\label{eqn:Hubbard}
    H = -t\sum_{j=1}^{L-1} \sum_{\sigma\in\{\uparrow,\downarrow\}} c_{j,\sigma}^\dag c_{j+1,\sigma}+U\sum_{j=1}^L (n_{j,\uparrow}-\frac{1}{2})(n_{j,\downarrow}-\frac{1}{2})\,.
\end{equation}
where the number operator $n_{j,\sigma}= c_{j,\sigma}^\dag c_{j,\sigma}$ and the dimension is $2^{2L}$. We choose $t = 1, U =-4$, and sample $A_S$ uniformly from $\mathcal{A} = \left\{\pm c_{j,\sigma}, \pm c_{j,\sigma}^\dag\right\}_{j=1,\cdots, L, \sigma = \uparrow,\downarrow}$.
We numerically calculate the quantum channel $\Phi_{\Gamma}$ as a superoperator and compute its fixed point and spectral gap. This enables us to quantify the accuracy of the fixed point in approximating the target state and to assess the spectral gap of the channel under various parameter settings, which directly characterize the accuracy of the fixed point and the mixing time. \cref{fig3:spectral_gap} shows that the spectral gap remains independent of $\sigma$ and scales quadratically as $\Gamma^2/\sigma$. The mixing time follows $\mathcal{O}(\sigma/\Gamma^2)$ scaling over the tested range, with no sensitive dependence on $\sigma$.

Finally, we consider the ground state preparation for the one-dimensional axial next-nearest-neighbor Ising (ANNNI) model~\cite{Selke1988} in the \textit{strong coupling} regime $\Gamma =\mathcal{O}(\sigma)$. In~\cite{zhan2025rapidquantumgroundstate}, it was shown that preparing the ground state of $H_{\mathrm{ANNNI}}$ via adiabatic evolution is challenging, as the effective spectral gap closes multiple times along the adiabatic path if the initial Hamiltonian is set to be $H_0=-Z$. In our work, we evaluate the performance of our ground state preparation algorithm on this model with the same choice of parameters as that in the case of TFIM or Hubbard models.  We continue to observe clear convergence of the energy toward the ground state energy versus the number of iterative steps, which is consistent with the state preservation result in~\cref{thm:main_informal_ground}. Moreover, the flexibility to choose $\Gamma = \mathcal{O}(\sigma)$ in ~\cref{fig:ANNNI_energy}  enables the use of larger couplings, leading to a larger spectral gap and faster mixing. Developing a corresponding theoretical understanding in this strong coupling regime remains an interesting direction for future work.

We note that the strong coupling regime $\Gamma = \mathcal{O}(\sigma)$ is not a special case of ANNNI model. In~\cref{sec:numerics_appendix}, we conduct more detailed numerical experiments for all three models in both constant coupling regime $\Gamma = \mathcal{O}(1)$ and the strong coupling regime $\Gamma = \mathcal{O}(\sigma)$.

{\it Conclusion and outlook}--
In this work, we theoretically demonstrated that system-bath interaction algorithms can prepare thermal and ground states {beyond the weak coupling regime.} We prove that, even when the interaction strength per iteration is chosen independently of the target accuracy, the fixed point of the induced quantum channel can still closely approximate the target state. Building on this preservation result, we improve previous end-to-end complexity for dissipative state preparation algorithms.
{Moreover, the techniques developed in this work for analyzing system-bath interaction models beyond the Lindblad approximation may have broader applications in dissipative state prepation algorithm design, and
in the study of open quantum systems, particularly in regimes where the coupling strength is not necessarily weak. This perspective may help guide algorithm design utilizing features arising from the higher order Dyson series. Since open system dynamics also provide a realistic model of hardware noise~\cite{van2023probabilistic,wang2025non}, intrinsic hardware noise may be viewed as from an effective system-bath coupling. This connection enables both the analysis of quantum noise and the design of robust quantum algorithms for near-term quantum devices.
}

Beyond our theoretical results, several promising directions for future research remain. First, our numerical experiments suggest that system-bath interaction models may perform even better in practice than our current bounds indicate, particularly in the stronger-coupling regime where $\Gamma=\mathcal{O}(\sigma)$. Developing a theoretical understanding of this regime would further clarify the robustness and efficiency of these algorithms.
{Second, }the complexity bound in~\cref{thm:total_run_time} is likely not tight, and sharpening it is an interesting direction for future work (see the detailed discussion in~\cref{sec:application}).
Third, our work varies the coupling parameter $\Gamma$ while keeping $\sigma$ large in order to ensure an accurate fixed-point approximation. A very recent work~\cite{chen2026overcominglambshiftsystembath} shows that, in the weak coupling regime $\Gamma\to 0$, one can choose $\sigma=\mathcal{O}(1)$ while still maintaining an accurate fixed-point approximation. This provides a complementary result to our work. An interesting direction for future investigation is whether the advantages of these two regimes can be combined to further improve the overall performance.

\emph{Acknowledgments--}
This work was supported in part by the University of Michigan through a startup grant of Z.D. (Z.D., K.W.) and the Van Loo Postdoctoral Fellowship (K.W.). The authors thank Yongtao Zhan, Lin Lin, Daniel Stilck França, Chi-Fang Chen, Andre\'as Gily\'en for helpful discussions.

\emph{Data availability--}
The code for the numerical experiments is openly available~\cite{wangding2025github}.
\bibliographystyle{apsrev4-2}
\bibliography{ref.bib}

\newpage
\clearpage
\thispagestyle{empty}
\onecolumngrid

\begin{center}
\textbf{\large Supplemental materials for \\
"Beyond Lindblad Dynamics: Rigorous Guarantees for Thermal and Ground State
Preservation under System-Bath Interactions"\\
}
\smallskip
Ke Wang$^1$, Zhiyan Ding$^1$\\
\smallskip
\small{\emph{$^1$ Department of Mathematics, University of Michigan, Ann Arbor, MI 48109, USA\\}}
(Dated: \today)
\end{center}
\setcounter{figure}{0}
\setcounter{table}{0}
\setcounter{page}{1}
\definecolor{RED}{RGB}{255,0,0}
\makeatletter
\renewcommand{\theequation}{S\arabic{equation}}
\renewcommand{\thefigure}{S\arabic{figure}}
\renewcommand{\thetable}{S\arabic{table}}
\renewcommand{\bibnumfmt}[1]{[S#1]}

\renewcommand{\thecor}{S\arabic{cor}}
\setcounter{thm}{0}
\setcounter{lem}{0}

\setcounter{defn}{0}
\renewcommand{\thethm}{S\arabic{thm}}
\renewcommand{\thelem}{S\arabic{lem}}
\renewcommand{\thedefn}{S\arabic{defn}}

\renewcommand{\theassumption}{S\arabic{assumption}}
\setcounter{secnumdepth}{3}
\setcounter{section}{0}
\renewcommand{\thesection}{S\arabic{section}}
\renewcommand{\thesubsection}{S\arabic{section}.\arabic{subsection}}
\renewcommand{\thesubsubsection}{S\arabic{section}.\arabic{subsection}.\arabic{subsubsection}}
The supplemental materials is organized as follows. After introducing the notation and basic preliminaries in~\cref{sec:preliminaries}, we derive in~\cref{sec:dyson_expansion} the full Dyson series expansion of the induced quantum channel generated by the system-bath interaction algorithm. The proof of the thermal and ground state preservation in~\cref{thm:main_informal_thermal} and~\cref{thm:main_informal_ground}, are given in
\cref{sec:thermal_state_appendix} and \cref{sec:ground_state_appendix}.
\cref{sec:mixing_time_thermal} contains the perturbative mixing time analysis for thermal state preparation and completes the proof of \cref{thm:total_run_time}. \cref{sec:mixing_time_ground} proves the end-to-end complexity of the ground state preparation beyond the Lindblad approximation in~\cref{thm:ground_run_time}.
Additional numerical results are deferred to \cref{sec:numerics_appendix}. {The proof structure is summarized in \cref{fig:proof-flowchart}.}

\begin{figure}[!htp]
    \centering
    \resizebox{0.8\columnwidth}{!}{%
    \begin{tikzpicture}[
        node distance=7mm and 9mm,
        box/.style={rectangle, rounded corners=4pt,
                    draw=blue!55!cyan!75!black, line width=0.7pt,
                    fill=blue!5, minimum height=9mm,
                    minimum width=27mm, align=center, inner sep=3.5pt,
                    font=\small},
        widebox/.style={box, minimum width=33mm},
        smallbox/.style={box, minimum width=25mm, minimum height=7mm,
                         font=\scriptsize},
        lbl/.style={inner sep=1pt, font=\scriptsize},
        ar/.style={-{Latex[length=1.8mm,width=1.6mm]},
                   line width=0.7pt, draw=blue!55!cyan!75!black},
    ]

        \node[box] (Dyson)
            {Dyson expansion\\of $\Phi_\Gamma$\\Sec.~\ref{sec:dyson_expansion}};

        \node[widebox, right=of Dyson, yshift=11mm] (Tools1)
            {Operator Fourier\\transform};

        \node[widebox, right=of Dyson] (Tools2)
            {Unital property\\of $\Phi_\Gamma^\dagger$};

        \node[widebox, right=of Dyson, yshift=-13mm] (Tools3)
            { Orthogonal\\decomposition,\\Fourier transform};

        \node[box, right=11mm of Tools2, yshift=7mm] (ThermalPres)
            {\textbf{Thermal state}\\\textbf{preservation}\\Sec.~\ref{sec:thermal_state_appendix}};

        \node[box, right=11mm of Tools2, yshift=-13mm] (GroundPres)
            {\textbf{Ground state}\\\textbf{preservation}\\Sec.~\ref{sec:ground_state_appendix}};

        \node[box, right=of ThermalPres] (ThermalAlg)
            {Thermal state\\preparation\\Sec.~\ref{sec:mixing_time_thermal}};

        \node[box, right=of GroundPres] (GroundAlg)
            {Ground state\\preparation\\Sec.~\ref{sec:mixing_time_ground}};

        \node[smallbox, above=4mm of ThermalAlg] (Mixing)
            {Lindbladian\\mixing};

        \node[smallbox, below=4mm of GroundAlg] (Contract)
            {Contraction of\\$\Tr(\rho\hat{N})$};

        \draw[ar] (Dyson.east) -- (Tools1.west);
        \draw[ar] (Dyson.east) -- (Tools2.west);
        \draw[ar] (Dyson.east) -- (Tools3.west);

        \draw[ar] (Tools1.east) -- (ThermalPres.west);
        \draw[ar] (Tools2.east) -- (ThermalPres.west);

        \draw[ar] (Tools2.east) -- (GroundPres.west);
        \draw[ar] (Tools3.east) -- (GroundPres.west);

        \draw[ar] (ThermalPres.east) -- (ThermalAlg.west);
        \draw[ar] (GroundPres.east) -- (GroundAlg.west);

        \draw[ar] (Mixing.south) --
            node[lbl, right=1pt] {perturbation}
            (ThermalAlg.north);

        \draw[ar] (Contract.north) -- (GroundAlg.south);

    \end{tikzpicture}
    }
    \caption{Flowchart of the proof structure in the supplemental material.}
    \label{fig:proof-flowchart}
\end{figure}

\section{Notations and preliminaries}\label{sec:preliminaries}

We use $A^*$, $A^T$, and $A^\dagger$ to denote the complex conjugate, transpose, and Hermitian adjoint of a matrix $A \in \mathbb{C}^{N \times N}$, respectively.
The Schatten $p$-norm is denoted by $\|A\|_p := (\sum_i \sigma_i(A)^p)^{1/p}$, where $\sigma_i(A)$ are the singular values of $A$. This encompasses the trace norm ($\|A\|_1$), the Hilbert–Schmidt/Frobenius norm ($\|A\|_2$), and the operator norm ($\|A\|_\infty \equiv \|A\|$). For a superoperator $\Phi: \mathbb{C}^{N \times N} \to \mathbb{C}^{N \times N}$, the induced trace norm and Frobenius norm  are defined as
$$\|\Phi\|_{1\to 1} := \sup_{\|A\|_1=1} \|\Phi(A)\|_1 ,\quad  {\|\Phi\|_{2\to 2} := \sup_{\|A\|_2=1} \|\Phi(A)\|_2 }\,.$$
We use the standard asymptotic notations. We write $f=\Omega(g)$ if $g=\Or(f)$. The notations $\wt{\Or}$, $\wt{\Omega}$, $\wt{\Theta}$ are used to suppress subdominant polylogarithmic factors. If not specified, $f = \wt{\Or}(g)$ if $f = \Or(g\operatorname{polylog}(g))$; $f = \wt{\Omega}(g)$ if $f = \Omega(g\operatorname{polylog}(g))$; $f = \wt{\Theta}(g)$ if $f = \Theta(g\operatorname{polylog}(g))$.

Let the eigendecomposition of the Hamiltonian as $H=\sum_i \lambda_i \ket{\psi_i}\bra{\psi_i}$. The Bohr frequencies of $H$ are the energy differences $\lambda_i-\lambda_j$, denoted as $B(H)$.
For $\nu\in B(H)$, the component of the matrix $A$ at Bohr frequency $\nu$ is
\begin{equation}\label{eq:Anu}
A(\nu)=\sum_{\lambda_j-\lambda_i=\nu}\ket{\psi_j}\bra{\psi_j}A\ket{\psi_i}\bra{\psi_i}.
\end{equation}

Let $\sigma_\beta = \exp(-\beta H)/\Tr(\exp(-\beta H))$ denote the thermal state at the inverse temperature $\beta$. We use  the Kubo–Martin–Schwinger~(KMS) inner product
\[
\langle A, B \rangle_{1/2, \sigma_\beta} =\langle A, \sigma_\beta^{1/2} B \sigma_\beta^{1/2}  \rangle =  \mathrm{Tr}\left( A^\dagger \sigma_\beta^{1/2} B \sigma_\beta^{1/2} \right) \,.
\]
A Lindbladian $\mathcal{L}$ satisfies the Kubo–Martin–Schwinger detailed balance condition if its adjoint $\mathcal{L}^\dagger$ is self-adjoint under this inner product.
In this case, the spectral gap is
\[
\lambda_{\rm gap}(\mathcal{L}) = \inf_{\substack{A \neq 0 \\ \mathrm{Tr}(A\sigma_\beta)=0}} \frac{-\langle A, \mathcal{L}^\dagger(A) \rangle_{1/2, \sigma_\beta}}{\langle A, A \rangle_{1/2, \sigma_\beta}} \,.
\]

Our main object is the discrete channel $\Phi_{\Gamma}$. For the quantum channel $\Phi_\Gamma$ with a unique fixed point $\rho_{\rm fix}(\Phi_\Gamma)$,
we define the integer mixing time of $\Phi_{\Gamma}$
\begin{equation}\label{def:mixing_time}
\tau_{{\rm mix}, \Phi_\Gamma}(\epsilon) = \min \left\{ t \in \mathbb{N} \middle| \sup_{\rho} \| \Phi_\Gamma^t(\rho) - \rho_{\rm fix}(\Phi_\Gamma) \|_1 \leq \epsilon \right\} \,.
\end{equation}
According to~\cite{ding2025endtoendefficientquantumthermal}, we define the rescaled mixing time as
\begin{equation}\label{eqn:rescaled_mixing_time}
t_{{\rm mix}, \Phi_\Gamma}(\epsilon) = {\frac{\Gamma^2}{\sigma}} \tau_{{\rm mix}, \Phi_\Gamma}(\epsilon)\,.
\end{equation}

\section{Dyson series expansion of \texorpdfstring{$\Phi_\Gamma$}{Phi alpha}}\label{sec:dyson_expansion}

In this section, we introduce the Dyson series expansion of $\Phi_\Gamma$ in~\cref{eqn:Phi_alpha}. We use the subindex $\{-1, 1\}$ to relabel the system and bath operator as
\begin{equation*}
    S_1 = A_S,\quad S_{-1} = A_S^\dag,\quad B_1 = B_E,\quad B_{-1} = B_E^\dag.
\end{equation*}
Define the system evolution operator $U_S(t)=\exp(-itH)$. The Dyson series expansion of $\Phi_\Gamma$ is summarized in the following theorem:
\begin{thm}\label{thm:dyson_Series_phi_alpha} Let $\rho_{n+1}=\Phi_\Gamma(\rho_n)$, then $\rho_{n+1}$ can be generated by the following three steps:
\begin{itemize}
\item
\[
\rho_{n+1/3}=U_S(T)\rho_n U_S^\dagger(T)
\]
\item \begin{equation}\label{eqn:middle_evolution}
    \begin{aligned}
        \rho_{n+2/3} &= \rho_{n+1/3}  + \mathbb{E}_{A_S}\left(\sum_n \Gamma^{2n} (-1)^n \sum_{k = 0}^{2n}(-1)^k \int \frac{g(\omega)}{1+ e^{\beta\omega}}{G}_{2n-k,A_S}^\dag(\omega) \rho_{n+1/3} {G}_{k,A_S}(\omega) \mathrm{d}\omega\right.\\
        & + \left.\sum_n \Gamma^{2n} (-1)^n \sum_{k = 0}^{2n}(-1)^k \int \frac{g(\omega)e^{\beta\omega}}{1+ e^{\beta\omega}}{F}_{2n-k,A_S}^\dag(\omega) \rho_{n+1/3}  {F}_{k,A_S}(\omega)\mathrm{d}\omega\right)=\Phi_{\rm part,\Gamma}(\rho_{n+1/3})\,;\\
    \end{aligned}
\end{equation}
\item
\[
\rho_{n+1}=U_S(T)\rho_{n+2/3} U_S^\dagger(T)\,.
\]
\end{itemize}
Here
\begin{equation}\label{eq:GF}
    \begin{aligned}
        {G}_{k,A_S}(\omega) & =\int_{-T<t_1\leq \cdots \leq t_k < T} A_S(t_1)A^\dagger_S(t_2)\cdots S_{(-1)^{k-1}}(t_k)e^{-i\omega\sum_{p=1}^k (-1)^p t_p} f(t_1)\cdots f(t_k)\mathrm{d}t_1 \cdots\mathrm{d}t_k \\
        {F}_{k,A_S}(\omega) & = \int_{-T<t_1\leq \cdots \leq t_k < T} A_S^\dag(t_1)A_S(t_2)\cdots S_{(-1)^{k}}(t_k)e^{i\omega\sum_{p=1}^k (-1)^p t_p} f(t_1)\cdots f(t_k)\mathrm{d}t_1 \cdots\mathrm{d}t_k \,.\\
    \end{aligned}
\end{equation}
Define $\gamma(\omega)=\frac{g(\omega)+g(-\omega)}{1+\exp(\beta\omega)}$. In the case when $A_S$ is uniformly sampled from a set of coupling operators $\mc{A}=\{A^i,-A^i\}_i$ with $\{(A^i)^\dagger\}_i=\{A^i\}_i$, we can rewrite~\eqref{eqn:middle_evolution} as
\begin{equation}\label{eqn:rho_n_1/3_update}
    \begin{aligned}
        \rho_{n+2/3} &= \rho_{n+1/3}  + \mathbb{E}_{A_S}\left(\sum_n \Gamma^{2n} (-1)^n \sum_{k = 0}^{2n}(-1)^k \int \gamma(\omega){G}_{2n-k,A_S}^\dag(\omega) \rho_{n+1/3}  {G}_{k,A_S}(\omega)\mathrm{d}\omega\right)\,.\\
    \end{aligned}
\end{equation}
\end{thm}

\begin{proof}
For simplicity, we fixed $A_S$. We use the subindex $\{-1, 1\}$ to relabel the system and bath operator as
\begin{equation*}
    S_1 = A_S, S_{-1} = A_S^\dag, B_1 = B_E, B_{-1} = B_E^\dag.
\end{equation*}
Recall~\cref{eqn:Ham_total}:
\[
H_{\Gamma}(t) = H + H_E + \Gamma f(t) \left( A_S \otimes B_E + A_S^\dagger \otimes B_E^\dagger \right),\quad H_E=-\omega Z/2,\quad B_E=\ket{1}\bra{0}\,.
\]
For the dynamics described by
\begin{equation}
    \begin{cases}
        \partial_t\rho(t) = -i[H_{\Gamma}(t), \rho(t)],\\
        \rho(-T) = \rho_n\otimes \rho_E,\\
        \rho_{n+1} = \mathbb{E}_{H_E, A_S} (\mathrm{Tr}_E \rho(T)),
    \end{cases}
\end{equation}
 the evolution operator of the time-dependent Hamiltonian can be expressed using the time-ordered exponential as $U^{\Gamma}(t) \coloneqq U^{\Gamma}(t; - T) = \mathcal{T}\left(\exp\left(-i\int_{-T}^t H_{\Gamma}(s)\mathrm{d}s\right)\right)$, and it satisfies
\begin{equation}
    \partial_t U^{\Gamma}(t; - T) = -i H_{\Gamma}(t) U^{\Gamma}(t; - T)\,.
\end{equation}
By Duhamel's expression, the Dyson series expansion can be written as $U^{\Gamma}(t; -T) = \sum_{n\geq 0} (-i\Gamma)^n U_n(t; - T)$, where
\begin{equation}
     U_n(t; - T) = \int_{-T}^t \int_{-T}^{s_1}\cdots \int_{-T}^{s_{n-1}} f(s_1)f(s_2)\cdots f(s_n) U_0(t; s_1) H_{S,B} U_0(s_1; s_2) \cdots H_{S,B} U_0(s_n;-T)\mathrm{d}s_n \cdots \mathrm{d}s_1\,.
\end{equation}
Here $H_{S,B} = S_1 \otimes B_1 + S_{-1}\otimes B_{-1}$ and $U_0(t;-T) = \exp(-i(t+T)(H+H_E))$. The Heisenberg
picture evolution of the bath operator follows $B_{(-1)^p}(t) = e^{(-1)^pit\omega}B_{(-1)^p}$.

After taking expectation in $\omega$ and tracing out the degree of freedom in the environment,
\begin{equation}\label{eq:Phia0}
    \begin{aligned}
        \Phi_\Gamma\rho &=\int g(\omega)\mathrm{Tr}_E[U^{\Gamma}(T)(\rho\otimes \rho_E)U^{\Gamma\dag}(T)](\omega)\mathrm{d}\omega \\
        & = \sum_{n,m\geq 0}(-i\Gamma)^n (i\Gamma)^m\int g(\omega)\mathrm{Tr}_E[U_n(T;-T)(\rho\otimes \rho_E)U^{\dag}_m(T;-T)](\omega)\mathrm{d}\omega\\
        & = \sum_{n,m\geq 0}(-1)^n(i\Gamma)^{n+m}\int \mathrm{d}\omega \int_{-T\leq s_n\leq \cdots \leq s_1 \leq T, -T\leq \tau_m\leq \cdots\leq \tau_1 \leq T}\mathrm{d}s_1 \cdots \mathrm{d}s_n \mathrm{d}\tau_1\cdots \mathrm{d}\tau_m g(\omega)\\
        &\sum_{\alpha_i,\beta_j\in\{-1,1\}}U_S(T)S_{\alpha_1}(s_1)\cdots S_{\alpha_n}(s_n)f(s_1)\cdots f(s_n) U_S(T)\rho U_S^\dag(T) S_{\beta_m}^\dag(\tau_m) \cdots S_{\beta_1}^\dag (\tau_1)U_S^\dag(T)\\
        & f(\tau_1)\cdots f(\tau_m)\mathrm{Tr}_E(B_{\alpha_1}(s_1)\cdots B_{\alpha_n}(s_n)\rho_E  B_{\beta_m}^\dag(\tau_m) \cdots B_{\beta_1}^\dag (\tau_1))\,, \\
    \end{aligned}
\end{equation}
where $U_S(T) = \exp(-iTH)$ and $S_{\alpha}(t)=\exp(iHt)S_{\alpha}\exp(-iHt)$. Notice that $B_\alpha^\dag(t) = B_{-\alpha}(t)$,
the multi-point bath correlation function follows
\[
\begin{aligned}
&\mathrm{Tr}\big[
  B_{\alpha_1}(s_1)\cdots B_{\alpha_n}(s_n)\,\rho_E\,
  B_{\beta_m}^\dag(\tau_m)\cdots B_{\beta_1}^\dag(\tau_1)
\big] \\[2pt]
&= \mathrm{Tr}\big[
  B_{-\beta_m}(\tau_m)\cdots B_{-\beta_1}(\tau_1)\,
  B_{\alpha_1}(s_1)\cdots B_{\alpha_n}(s_n)\,\rho_E
\big] \\[2pt]
&= e^{\,i\phi}\,
\mathrm{Tr}\big[
  B_{-\beta_m}\cdots B_{-\beta_1}\,B_{\alpha_1}\cdots B_{\alpha_n}\,\rho_E
\big],
\end{aligned}
\]
where \(\displaystyle \phi=\omega\!\left(\sum_{j=1}^n \alpha_j s_j-\sum_{k=1}^m \beta_k \tau_k\right).\) It can be rewritten as
\begin{equation}\label{eq:npointBCFs}
   (\cdots)
=
e^{\,i\phi}\times
\begin{cases}
\mathrm{Tr}(\ketbra{1}\rho_E)
= \dfrac{1}{e^{\beta\omega}+1},
& \!\!\!\!\text{if } m{+}n \text{ even and }
(-\beta_m,\ldots,-\beta_1,\alpha_1,\ldots,\alpha_n)=(1,-1)^{\otimes \frac{m+n}{2}},\\[10pt]
\mathrm{Tr}(\ketbra{0}\rho_E)
= \dfrac{e^{\beta\omega}}{e^{\beta\omega}+1},
& \!\!\!\!\text{if } m{+}n \text{ even and }
(-\beta_m,\ldots,-\beta_1,\alpha_1,\ldots,\alpha_n)=(-1,1)^{\otimes \frac{m+n}{2}},\\[10pt]
0, & \!\!\!\!\text{otherwise.}
\end{cases}
\end{equation}

Substituting the expression of the bath correlation functions back to \cref{eq:Phia0}, we have
\begin{equation}
    \begin{aligned}
        \Phi_\Gamma \rho_n &= U_S(2T)\rho_n U_S(2T)^\dag \\
        & + \sum_{n\geq 1}\Gamma^{2n} (-1)^n \sum_{k = 0}^{2n}(-1)^k \int \frac{g(\omega)}{1+ e^{\beta\omega}}U_S(T){G}_{2n-k,A_S}^\dag(\omega) U_S(T)\rho_n U_S^\dag(T){G}_{k,A_S}(\omega) U_S^\dag(T)\mathrm{d}\omega\\
        & + \sum_{n\geq 1} \Gamma^{2n}(-1)^n \sum_{k = 0}^{2n}(-1)^k \int \frac{g(\omega)e^{\beta\omega}}{1+ e^{\beta\omega}}U_S(T)F_{2n-k,A_S}^\dag(\omega) U_S(T)\rho_n U_S^\dag(T) {F}_{k,A_S}(\omega)U_S^\dag(T) \mathrm{d}\omega\,,\\
    \end{aligned}
\end{equation}
where ${G}_{k,A_S}(\omega)$ and ${F}_{k,A_S}(\omega)$ are defined in~\cref{eq:GF}.
Additionally, using the relation $F_{k, A_S^\dag}(-\omega) = G_{k, A_S}(\omega)$, we can further simplify the expression to~\cref{eqn:rho_n_1/3_update} in the case $\mathcal{A}^\dag = \mathcal{A}$.
This concludes the proof.
\end{proof}

\section{Thermal state preparation beyond Lindblad limit}\label{sec:thermal_state_appendix}
In this section, we introduce the rigorous version of~\cref{thm:main_informal_thermal} and provide the proof. First, recall~\cref{thm:dyson_Series_phi_alpha}~\eqref{eqn:rho_n_1/3_update},
\begin{itemize}
\item $\rho_{n+1/3}=U_S(T)\rho_n U_S^\dagger(T);$
\item $
 \rho_{n+2/3} = \rho_{n+1/3}  + \mathbb{E}_{A_S}\left(\sum_n \Gamma^{2n} (-1)^n \sum_{k = 0}^{2n}(-1)^k \int \gamma(\omega){G}_{2n-k,A_S}^\dag(\omega) \rho_{n+1/3}  {G}_{k,A_S}(\omega)\mathrm{d}\omega\right);
$
\item
$
\rho_{n+1}=U_S(T)\rho_{n+2/3} U_S^\dagger(T)\,.
$
\end{itemize}
We present the rigorous version of~\cref{thm:main_informal_thermal} in the following.
\begin{thm}[Thermal state, rigorous version]\label{thm:main_rigor_thermal} Assume $0\leq \beta<\infty$ and $\gamma(\omega)$ satisfies the property:
\begin{itemize}
    \item decay fast at infinity, i.e. $\lim_{|\omega|\rightarrow\infty}\gamma(\omega)=0$.
    \item $\gamma'(\omega),  (\widetilde{\gamma}(\omega))'\in L^1\,.$
\end{itemize}
  where $\widetilde{\gamma}(\omega) = \gamma(\omega)e^{\beta\omega}$.
  For any $\epsilon>0$,
  {\begin{equation}
    \begin{aligned}
        &\left\|\rho_{\rm fix}(\Phi_\Gamma)-\sigma_\beta\right\|_1\leq
        \mathcal{O}\left(t_{{\rm mix},\Phi_\Gamma}(\epsilon/2)\frac{\beta}{\sigma}\left( (\|\gamma'\|_{L^1}+\|\widetilde{\gamma}'\|_{L^1})\log(\sigma)\right)
        + t_{{\rm mix},\Phi_\Gamma}(\epsilon/2)\frac{\sigma^2}{T} e^{-T^2/(4\sigma^2)}\right)+ \epsilon/2\\
    \end{aligned}
    \end{equation}}
  {In particular, if  $\|\gamma'\|_{L^1}+\|\widetilde{\gamma}'\|_{L^1}=\mathcal{O}(1)$, then for $\Gamma=\mathcal{O}(1)$, the choice  $\sigma=\widetilde{\Theta}\left(\beta t_{{\rm mix},\Phi_\Gamma}(\epsilon/2)/\epsilon\right)$ and $T = \widetilde{\Theta}(\sigma \log^{1/2}(t_{\rm mix,\Phi_\Gamma}(\epsilon/2)\sigma/\epsilon))$ ensures that $\left\|\rho_{\rm fix}(\Phi_\Gamma)-\sigma_\beta\right\|_1\leq\epsilon$.  }
\end{thm}
Now, we provide the proof of the first part of~\cref{thm:main_rigor_thermal}. Letting $T\rightarrow \infty$, we define
\begin{equation}
\label{eq:tildeGk}
    \begin{aligned}
        \widetilde{G}_{k,A_S}(\omega) & \coloneqq\int_{-\infty<t_1\leq \cdots \leq t_k < \infty} A_S(t_1)A^\dagger_S(t_2)\cdots e^{-i\omega\sum_{p=1}^k (-1)^p t_p} f(t_1)\cdots f(t_k)\mathrm{d}t_1 \cdots\mathrm{d}t_k\,, \\
    \end{aligned}
\end{equation}
and the limiting CPTP map
\begin{equation}\label{eqn:Phi_alpha_limit}
    \begin{aligned}
        \widetilde \Phi_\Gamma& \coloneqq  U_S(2T)\rho_n U_S^\dag(2T)
         + \mathbb{E}_{A_S}\left(\sum_{n\geq 1} \Gamma^{2n}(-1)^n \sum_{k = 0}^{2n} (-1)^k \int \gamma(\omega) U_S(T) \widetilde{G}^\dag_{2n-k,A_S}(\omega)U_S(T)\rho_nU_S^\dag(T)\widetilde{G}_{k,A_S}(\omega) U_S^\dag(T)\mathrm{d}\omega\right)\,.
    \end{aligned}
\end{equation}
The distance between $\Phi_\Gamma$ and $\widetilde\Phi_\Gamma$ can be controlled in the following lemma:
\begin{lem}\label[lem]{lm:Tinf}
When $\Gamma=\mathcal{O}(1)$, we have
\begin{equation}
\|\Phi_\Gamma - \widetilde \Phi_\Gamma\|_{1\rightarrow 1} = \mathcal{O}\left(\frac{\Gamma^2\sigma}{T} e^{-T^2/(4\sigma^2)}\right)\,.
\end{equation}
\end{lem}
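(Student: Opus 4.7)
The plan is to exploit the fact that $\Phi_\alpha$ and $\widetilde\Phi_\alpha$ share the identical Dyson expansion from~\cref{thm:dyson_Series_phi_alpha}, and differ only in the domain of the time integrals defining $G_{k,A_S}(\omega)$ versus $\widetilde G_{k,A_S}(\omega)$. Consequently the error is controlled entirely by the Gaussian tail of $f(t)$ outside $[-T,T]$. The first step is to write the difference $\Phi_\alpha - \widetilde\Phi_\alpha$ level-by-level in the Dyson series; since the $\alpha^{2n}(-1)^n$ prefactors, the $U_S(T)$ conjugations (which are unitary and therefore trace-norm preserving), and the $\gamma(\omega)\,\mathrm{d}\omega$ and $\mathbb{E}_{A_S}$ averages are identical on both sides, it suffices to bound $\|G_{2n-k}^\dagger \rho G_k - \widetilde G_{2n-k}^\dagger \rho \widetilde G_k\|_1$ for each $n,k$ and then sum.

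The key pointwise operator-norm estimates are
\[
\|G_{k,A_S}(\omega)\|,\;\|\widetilde G_{k,A_S}(\omega)\|\;\le\;\frac{\|f\|_{L^1}^{k}}{k!},\qquad
\|\widetilde G_{k,A_S}(\omega)-G_{k,A_S}(\omega)\|\;\le\;\frac{k\,\|f\|_{L^1}^{k-1}\,\|f\|_{L^1(\{|t|>T\})}}{k!},
\]
which I would obtain by bounding the integrand by $1$ in operator norm (using $\|A_S(t)\|\le 1$ and $|e^{-i\omega\cdot}|=1$) and integrating $|f|^{\otimes k}$ over the ordered simplex, together with the elementary estimate $\|f\|_{L^1}^{k}-(\int_{-T}^{T}|f|)^{k}\le k\|f\|_{L^1}^{k-1}\|f\|_{L^1(\{|t|>T\})}$ for the truncation error. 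Direct Gaussian computations give $\|f\|_{L^1}=\Theta(\sigma^{1/2})$ and $\|f\|_{L^1(\{|t|>T\})}=\mathcal{O}(\sigma^{3/2}T^{-1}e^{-T^2/(4\sigma^2)})$.

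Next I would use the telescoping identity
\[
G_{2n-k}^\dagger \rho G_k - \widetilde G_{2n-k}^\dagger \rho \widetilde G_k
= (G_{2n-k}^\dagger - \widetilde G_{2n-k}^\dagger)\,\rho\, G_k + \widetilde G_{2n-k}^\dagger \,\rho\, (G_k - \widetilde G_k),
\]
the trace-norm H\"older inequality, and the bounds above; summing over $k\in\{0,\dots,2n\}$ and exploiting $\sum_{k=0}^{2n}\binom{2n}{k}=4^n$ yields an $n$-th-level estimate of order $\alpha^{2n}\cdot\|\gamma\|_{L^1}\cdot\sigma^{n+1}T^{-1}e^{-T^2/(4\sigma^2)}\cdot\tfrac{2n\cdot 4^n}{(2n)!}$. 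The $\mathbb{E}_{A_S}$ passes inside the trace norm by Jensen, and $\|\gamma\|_{L^1}\le 2$ since $\gamma(\omega)\le g(\omega)+g(-\omega)$ with $g$ a probability density. Factoring out $\sigma T^{-1}e^{-T^2/(4\sigma^2)}$, the residual sum becomes $\sum_{n\ge 1}2n(2\alpha\sqrt\sigma)^{2n}/(2n)!=2\alpha\sqrt\sigma\,\sinh(2\alpha\sqrt\sigma)$, which under the hypothesis $\alpha=\mathcal{O}(\sigma^{-1/2})$ is $\mathcal{O}((\alpha\sqrt\sigma)^{2})=\mathcal{O}(\alpha^{2}\sigma)$, delivering the claimed bound $\mathcal{O}(\alpha^{2}\sigma^{2}T^{-1}e^{-T^2/(4\sigma^2)})$.

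The main obstacle I anticipate is the combinatorial bookkeeping: one must arrange the triangle-inequality split so that every summand carries exactly one factor of $\|f\|_{L^1(\{|t|>T\})}$, while the remaining $k-1$ growing factors $\|f\|_{L^1}=\Theta(\sigma^{1/2})$ are absorbed by the $k!$ coming from the ordered-simplex volume and by $\alpha^{2n}$. The hypothesis $\alpha=\mathcal{O}(\sigma^{-1/2})$ is precisely what keeps $2\alpha\sqrt\sigma\sinh(2\alpha\sqrt\sigma)$ of order $\alpha^{2}\sigma$; without it the Dyson sum still converges (thanks to $(2n)!$) but grows exponentially in $\alpha\sqrt\sigma$, which is consistent with the paper's remark that $\alpha\sqrt\sigma=\mathcal{O}(1)$ is the natural threshold beyond which each Dyson term ceases to be norm-bounded.
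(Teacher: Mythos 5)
Your proposal is correct and follows essentially the same route as the paper's proof: the same level-by-level telescoping split $(G_{2n-k}^\dagger-\widetilde G_{2n-k}^\dagger)\rho\,G_k+\widetilde G_{2n-k}^\dagger\rho\,(G_k-\widetilde G_k)$, the same ordered-simplex bounds $\|G_k\|,\|\widetilde G_k\|\le \|f\|_{L^1}^{k}/k!$ and $\|G_k-\widetilde G_k\|\le \|f\|_{L^1}^{k-1}\|f\|_{L^1(\{|t|>T\})}/(k-1)!$ from the Gaussian tail, and the same resummation of the Dyson series under $\alpha\sqrt{\sigma}=\mathcal{O}(1)$. Your closed-form evaluation via $2\alpha\sqrt{\sigma}\,\sinh(2\alpha\sqrt{\sigma})$ and the cruder estimate $\|\gamma\|_{L^1}\le 2$ (the paper uses the exact value $\|\gamma\|_{L^1}=1$ and manipulates the factorials directly into a $\sum_{n\ge 1}(\mathcal{O}(\alpha\sigma^{1/2}))^{2n-2}/(2n-2)!$ series) are only cosmetic variations that do not change the argument or the final bound.
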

\begin{proof}[Proof of~\cref{lm:Tinf}]
For simplicity, we only consider the case when $\mathcal{A}=\{A_S,-A_S\}$ is fixed. The general case is almost the same. By the general expression of the dynamic map, we have
\begin{equation}
\begin{aligned}
&\|\Phi_\Gamma - \widetilde\Phi_\Gamma\|_{1\rightarrow 1}\\
\leq & \max_{\rho_n: \|\rho_n\| = 1}\sum_{n\geq 1}\Gamma^{2n}\sum_{k=0}^{2n} \int |\gamma(\omega)| \left\| \widetilde{G}^\dag_{2n-k,A_S}(\omega)U_S(T)\rho_nU_S^\dag(T)\widetilde{G}_{k,A_S}(\omega) - {G}^\dag_{2n-k,A_S}(\omega)U_S(T)\rho_nU_S^\dag(T) {G}_{k,A_S}(\omega)\right\|_1\mathrm{d}\omega\\
\leq &\sum_{n\geq 1}\Gamma^{2n}\sum_{k=0}^{2n} \int |\gamma(\omega)| \left(\left\| \widetilde{G}^\dag_{2n-k,A_S}(\omega)- {G}^\dag_{2n-k,A_S}(\omega)\right\|\left\|\widetilde{G}_{k,A_S}(\omega) \right\|{+\left\| {G}^\dag_{2n-k,A_S}(\omega)\right\|\left\|G_{k,A_S}(\omega)- \widetilde{G}_{k,A_S}(\omega)\right\|}\right)\mathrm{d}\omega
\\
\end{aligned}
\end{equation}
It is sufficient to bound $\left\|G_{k,A_S}(\omega)- \widetilde{G}_{k,A_S}(\omega)\right\|$ and $\left\|G_{k,A_S}(\omega)\right\|$, $\left\|\widetilde G_{k,A_S}(\omega)\right\|$. For the first term, we have
\begin{equation}
\begin{aligned}
&\left\|G_{k,A_S}(\omega)- \widetilde{G}_{k,A_S}(\omega)\right\|\\
\leq & \left\|\left(\int_{-\infty<t_1\leq \cdots \leq t_k < \infty} -\int_{-T<t_1\leq \cdots \leq t_k < T} \right)A_S(t_1)A^\dagger_S(t_2)\cdots e^{-i\omega\sum_{p=1}^k (-1)^p t_p} f(t_1)\cdots f(t_k)\mathrm{d}t_1 \cdots\mathrm{d}t_k\right\|\\
\leq & \left\|\int_{t_1\leq \cdots \leq t_k, \max_i|t_i|\geq T }  A_S(t_1)A^\dagger_S(t_2)\cdots e^{-i\omega\sum_{p=1}^k (-1)^p t_p} f(t_1)\cdots f(t_k)\mathrm{d}t_1 \cdots\mathrm{d}t_k\right\|\\
\leq & \|A_S\|^k \int_{t_1\leq \cdots \leq t_k}\chi_{\max|t_i|\geq T}f(t_1)\cdots f(t_k)\mathrm{d}t_1 \cdots\mathrm{d}t_k =\frac{\|A_S\|^k }{(k-1)!}\left(\int_{\mathbb{R}}f(t)\mathrm{d}t\right)^{k-1} \int_{|t|\geq T}f(t)\mathrm{d}t\\
 =& \frac{\mathcal{O}(\sigma)}{(k-1)!}\frac{1}{T}e^{-T^2/(4\sigma^2)}
\end{aligned}
\end{equation}
where we use $\|A_S\|\leq 1$ in the last inequality.

For the second term, we have
\begin{equation}
\begin{aligned}
\|G_{k,A_S}(\omega)\|\leq\left\|\widetilde{G}_{k,A_S}(\omega)\right\|\leq \|A_S\|^k\int_{-\infty<t_1\leq \cdots \leq t_k<\infty}  \left|f(t_1)\right|\cdots \left|f(t_k)\right|\mathrm{d}t_1 \cdots\mathrm{d}t_k=\frac{\mathcal{O}(1)}{k!}
\end{aligned}
\end{equation}
Combining these two bounds and $\|\gamma\|_{L^1}=1$, we have
\begin{equation}
    \begin{aligned}
        &\|\Phi_\Gamma - \widetilde\Phi_\Gamma\|_{1\rightarrow 1}\leq \frac{\sigma}{T}e^{-T^2/(4\sigma^2)}\sum_{n\geq 1}\Gamma^{2n}\sum_{k=0}^{2n} \left(\frac{\left(\mathcal{O}(1)\right)}{(2n-k-1)!k!}+\frac{\left(\mathcal{O}(1)\right)}{(2n-k)!(k-1)!}\right)\\
         = &\mathcal{O}\left(\frac{\Gamma^2\sigma}{T}e^{-T^2/(4\sigma^2)}\sum_{n\geq 1}\frac{1}{(2n-2)!}\left(\mathcal{O}(\Gamma)\right)^{2n-2}\right)= \mathcal{O}\left(\frac{\Gamma^2\sigma}{T} e^{-T^2/(4\sigma^2)}\right)\,.
    \end{aligned}
\end{equation}
\end{proof}
Using $\widetilde\Phi_\Gamma$, we have
    \begin{equation}
        \|\Phi_\Gamma \sigma_\beta - \sigma_\beta \|_1\leq \|\widetilde\Phi_\Gamma \sigma_\beta - \Phi_\Gamma\sigma_\beta \|_1+ \|\widetilde\Phi_\Gamma \sigma_\beta - \sigma_\beta \|_1\leq \|\widetilde\Phi_\Gamma  - \Phi_\Gamma \|_{1\rightarrow 1}+ \|\widetilde\Phi_\Gamma \sigma_\beta - \sigma_\beta \|_1\,.
    \end{equation}
Thus, it suffices to show $\|\widetilde\Phi_\Gamma \sigma_\beta - \sigma_\beta \|_1$ is small. Instead of relying directly on the detailed-balance condition of the Lindblad equation, we make use of {unital property} of the adjoint of the channel, namely $\widetilde\Phi_\Gamma^{\dag} I = I$. This condition is significantly easier to incorporate into the arbitrary-order expansion.
\begin{lem}({Unital property})\label[lem]{lm:AvoidDB}
We have
\begin{equation}
\begin{aligned}
\mathbb{E}_{A_S}\left(\sum_{n} \Gamma^{2n}(-1)^{n}
\sum_{k=0}^{2n} (-1)^{k}\int\gamma\left((-1)^{k}\omega\right)
\widetilde{G}^{\dagger}_{2n-k,A_S}(\omega)\,
\widetilde{G}_{k,A_S}(\omega)\,\mathrm d\omega\right)  = 0\,.
\end{aligned}
\end{equation}
\end{lem}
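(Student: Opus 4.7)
The plan is to recognize~\cref{lm:AvoidDB} as an algebraic consequence of the trace-preservation (CPTP) property of the channel $\widetilde{\Phi}_\alpha$. Since $\Phi_\alpha$ is CPTP for every finite $T$ (it is a partial trace of a unitary dilation), so is the $T\to\infty$ limit $\widetilde{\Phi}_\alpha$, and in particular $\widetilde{\Phi}_\alpha^\dagger(I)=I$. Because the bath is a single qubit, a stronger pointwise-in-$\omega$ version of CPTP is available: decompose the thermal bath state as $\rho_E(\omega)=\tfrac{e^{\beta\omega}}{1+e^{\beta\omega}}\ketbra{0}{0}+\tfrac{1}{1+e^{\beta\omega}}\ketbra{1}{1}$, and apply CPTP separately to each conditional channel $\widetilde{\Phi}_\alpha^{(i,\omega)}$ obtained by initializing the bath in $\ketbra{i}{i}$ for $i\in\{0,1\}$. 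Each $\widetilde{\Phi}_\alpha^{(i,\omega)}$ is CPTP as a partial trace of a unitary dilation on a pure bath state.

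Concretely, I would first expand $(\widetilde{\Phi}_\alpha^{(i,\omega)})^\dagger(I)=I$ through the Dyson expansion using the adjoint rule $(\rho\mapsto K\rho L)^\dagger:Y\mapsto LYK$ together with the cancellation $U_S(T)U_S^\dagger(T)=I$. The bath correlator in~\eqref{eq:npointBCFs} forces conditioning on $\ketbra{1}{1}$ to retain only the $\widetilde{G}$-type Dyson terms and conditioning on $\ketbra{0}{0}$ to retain only the $\widetilde{F}$-type terms. At each order $\alpha^{2n}$ and each $\omega$ this yields two pointwise identities involving $\widetilde{G}_{k,A_S}(\omega)\widetilde{G}^\dagger_{2n-k,A_S}(\omega)$ and $\widetilde{F}_{k,A_S}(\omega)\widetilde{F}^\dagger_{2n-k,A_S}(\omega)$, respectively. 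Then I would convert the $\widetilde{F}$-identity back into the $\widetilde{G}$-language using the relation $\widetilde{F}_{k,A_S}(\omega)=\widetilde{G}_{k,A_S^\dagger}(-\omega)$ established in the proof of~\cref{thm:dyson_Series_phi_alpha}, combined with the symmetry $\mathcal{A}^\dagger=\mathcal{A}$, and change variables $\omega\to-\omega$ to produce a second $\widetilde{G}$-identity at the reflected frequency.

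To reassemble the lemma, weight the two pointwise identities by the Gibbs factors $p_1(\omega)=1/(1+e^{\beta\omega})$ and $p_0(\omega)=e^{\beta\omega}/(1+e^{\beta\omega})$, multiply by $g(\omega)+g(-\omega)$, and integrate in $\omega$. The weights assemble into $\gamma(\omega)$ for even $k$ and into $\gamma(-\omega)=e^{\beta\omega}\gamma(\omega)$ for odd $k$, which together produce the weight $\gamma((-1)^k\omega)$ appearing in~\cref{lm:AvoidDB}. The key point is that the selection rule in~\eqref{eq:npointBCFs} pairs the parity of $k$ with the two bath pure states $\ketbra{0}{0},\ketbra{1}{1}$ in exactly the way needed to match the $k$-dependent weight in the lemma.

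The principal obstacle is flipping the operator ordering from $\widetilde{G}_k\widetilde{G}^\dagger_{2n-k}$ (the product that appears naturally from $\widetilde{\Phi}_\alpha^\dagger(I)=I$) to the desired $\widetilde{G}^\dagger_{2n-k}\widetilde{G}_k$ in the lemma. Hermitian conjugation followed by a relabeling $k\leftrightarrow 2n-k$ reproduces the same ordering, so the swap must instead come from combining the two conditional identities in concert with the $A_S\leftrightarrow A_S^\dagger$ symmetry and the $\omega\to-\omega$ change of variable. I expect the most delicate part of the argument to be tracking the signs $(-1)^k$, the parity-dependent factor $\gamma((-1)^k\omega)/\gamma(\omega)=e^{\beta\omega\mathbbm{1}[k\text{ odd}]}$, and the various index relabelings uniformly in $n$, so that all cross-terms cancel to give the exact zero identity.
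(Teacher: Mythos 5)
You have correctly isolated the crux---trace preservation of $\widetilde\Phi_\alpha$ naturally yields products with the dagger on the right, $\widetilde{G}_{k,A_S}(\omega)\widetilde{G}^\dagger_{2n-k,A_S}(\omega)$, whereas \cref{lm:AvoidDB} requires $\widetilde{G}^\dagger_{2n-k,A_S}(\omega)\widetilde{G}_{k,A_S}(\omega)$---but your proposed resolution of this obstacle fails. Your two conditional identities are not independent: by the very relation $\widetilde{F}_{k,A_S}(\omega)=\widetilde{G}_{k,A_S^\dagger}(-\omega)$ you invoke, the $\ket{0}$-conditioned identity is carried, after the substitution $\omega\to-\omega$ and the ensemble symmetry $\mathcal{A}^\dagger=\mathcal{A}$, exactly onto the $\ket{1}$-conditioned identity. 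The collection of identities at your disposal is therefore closed under every operation you list (Hermitian conjugation with the relabeling $k\leftrightarrow 2n-k$, $\omega\to-\omega$, $A_S\leftrightarrow A_S^\dagger$), and every member of that collection has the same dagger-on-the-right ordering; no scalar-weighted combination of them can produce the reversed ordering, so the step you flag as ``delicate'' is in fact impossible with these tools. A second error: the selection rule~\eqref{eq:npointBCFs} does not couple the parity of $k$ to the bath pure state. For each order $n$, every split $k\in\{0,\dots,2n\}$ appears in \emph{both} conditional channels, since the correlator constrains only the alternating pattern of the combined operator string, not the position of the split between the two evolution branches. Hence the Gibbs weights $1/(1+e^{\beta\omega})$ and $e^{\beta\omega}/(1+e^{\beta\omega})$ cannot assemble into the parity-dependent factor $\gamma((-1)^k\omega)$ in the way you describe. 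Note also that $\widetilde\Phi_\alpha$ is not unital in general, so the reversed ordering cannot be extracted from applying the channel (rather than its adjoint) to the identity either.

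The missing ingredient---and the paper's actual mechanism---is a per-factor time-reversal identity: substituting $s_j=-t_{k-j+1}$ in the nested time-ordered integral defining $\widetilde G_{k}$ gives $\widetilde{G}^\dagger_{k,H,A_S}(\omega)=\widetilde{G}_{k,-H,A_S^\dagger}(\omega)$ for odd $k$ and $\widetilde{G}^\dagger_{k,H,A_S}(\omega)=\widetilde{G}_{k,-H,A_S}(-\omega)$ for even $k$; that is, conjugation flips the sign of the \emph{Hamiltonian}, and for even $k$ also the frequency---this parity dichotomy is the true origin of the weight $\gamma((-1)^k\omega)$ in the lemma. Applying this relation to each factor converts the lemma's dagger-on-the-left product $\widetilde{G}^\dagger_{2n-k,H,A_S}(\omega)\widetilde{G}_{k,H,A_S}(\omega)$ into a dagger-on-the-right product at the reversed Hamiltonian $-H$, which is then recognized as the trace-preservation identity $\widetilde\Psi_\alpha^\dagger[I]=I$ of an auxiliary channel $\widetilde\Psi_\alpha$ built from $-H$ and inverse temperature $-\beta$ (carrying the weight $\gamma(-\omega)$), exactly as in the paper's proof. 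Your pointwise-in-$\omega$ conditional-CPTP observation is sound and would even let you dispense with the $\omega$-integration, but to close the argument you must supplement it with the $H\to-H$ conjugation relation, or equivalently invoke CPTP of the sign-reversed model rather than of $\widetilde\Phi_\alpha$ itself.
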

\begin{proof}
See \cref{appd: AvoidDB}.
\end{proof}
In order to prove $\widetilde\Phi_\Gamma$ preserves the thermal state, with~\cref{lm:AvoidDB}, it suffices to bound the following terms for each $A_S$:
\begin{equation}\label{eqn:separation}
\begin{aligned}
&\left\|\sum_{n\geq 1} \Gamma^{2n} (-1)^n \sum_{k = 0}^{2n}(-1)^k \int \gamma(\omega)\widetilde{G}_{2n-k,A_S}^\dag(\omega) \sigma_\beta \widetilde{G}_{k,A_S}(\omega) \mathrm{d}\omega\right\|_1\\
\leq &\underbrace{\left\|\sum_{n\geq 1}  \Gamma^{2n} (-1)^n \left(\sum_{k = 0, \rm even}^{2n} \int \gamma(\omega)\widetilde{G}_{2n-k,A_S}^\dag(\omega) \sigma_\beta \widetilde{G}_{k,A_S}(\omega) \mathrm{d}\omega-\sigma_\beta\int \gamma(\omega)\widetilde{G}_{2n-k,A_S}^\dag(\omega)\widetilde{G}_{k,A_S}(\omega) \mathrm{d}\omega\right)\right\|_1}_{\text{Term 1}}\\
+&\underbrace{\left\|\sum_{n\geq 1} \Gamma^{2n} (-1)^n \left(\sum_{k = 0, \rm odd}^{2n}-\int \gamma(\omega)\widetilde{G}_{2n-k,A_S}^\dag(\omega) \sigma_\beta \widetilde{G}_{k,A_S}(\omega) \mathrm{d}\omega-\sigma_\beta\int \underbrace{\gamma(\omega)\exp(\beta\omega)}_{=\gamma(-\omega)}\widetilde{G}_{2n-k,A_S}^\dag(\omega)\widetilde{G}_{k,A_S}(\omega) \mathrm{d}\omega\right)\right\|_1}_{\text{Term 2}}\\
+&{\underbrace{\left\|\sum_{n} \Gamma^{2n}(-1)^{n}
\sum_{k=0}^{2n} (-1)^{k}\int\gamma\left((-1)^{k}\omega\right)
\widetilde{G}^{\dagger}_{2n-k,A_S}(\omega)\,
\widetilde{G}_{k,A_S}(\omega)\,\mathrm d\omega\right\|_1}_{=0\ \text{by \cref{lm:AvoidDB}}}}\,.
\end{aligned}
\end{equation}
The analysis of both Term 1 and Term 2 can be simplified to the bound the following two terms:
\begin{equation}\label{eqn:diff}
\begin{aligned}
&\sigma_\beta^{-1}\widetilde G_{2n-k,A_S}^{\dagger}(\omega)\sigma_\beta - \widetilde G_{2n-k,A_S}^{\dagger}({\omega}),\ \text{when}\ k \text{ is even}\\
&\sigma_\beta^{-1}\widetilde G_{k,A_S}^{\dagger}(\omega)\sigma_\beta -\exp(\beta\omega)\widetilde G_{k,A_S}^{\dagger}({\omega}),\ \text{when}\ k \text{ is odd}\,.
\end{aligned}
\end{equation}
{Specifically, we have the following lemma:
\begin{lem}\label[lem]{lm:diff} We have
\begin{equation}\label{eq:diff_two_terms}
\emph{Term}\ 1=\mathcal{O}\left(\frac{\beta\Gamma^2\log(\sigma)}{\sigma^2}\|\gamma'\|_{L^1}\right),\quad \emph{Term}\ 2=\mathcal{O}\left(\frac{\beta\Gamma^2\log(\sigma)}{\sigma^2}\|\gamma'\|_{L^1}\right)
\end{equation}
\end{lem}
}
\begin{proof}[Proof of~\cref{lm:diff}]
Term 1 can be rewritten as
\begin{equation}\label{eq:Term1}
\mathrm{Term}\ 1=\left\lVert\sigma_\beta
\sum_{n\geq 1} \Gamma^{2n} (-1)^n
\left(
\sum_{\substack{k=0 \\ \text{even}}}^{2n}
\int \gamma(\omega)\left(\sigma_\beta^{-1}\widetilde G_{2n-k,A_S}^{\dagger}(\omega)\sigma_\beta - \widetilde G_{2n-k,A_S}^{\dagger}({\omega})\right)\widetilde G_{k,A_S}(\omega)\mathrm d\omega
\right)
\right\rVert_{1}
\end{equation}
In order to analyze this term, we introduce the following lemma to obtain an explicit expression for this difference.
\begin{lem}\label[lem]{lm:1} We have
\begin{equation}
\label{eqlm:sigmaGk}
    \begin{aligned}
        &\sigma_\beta^{-1}\widetilde{G}_{k,A_S}^\dag(\omega) \sigma_\beta = \frac{1}{\pi^{k/4}}\int_{-\infty< \tau_1\leq \cdots\leq \tau_{k}<\infty} \cdots A_S(2\sigma \tau_{2})A^\dagger_S(2\sigma\tau_1) \\
        &\exp\left(i2\sigma\omega \sum_{p=1}^{k} (-1)^p\tau_p-\omega \beta\Lambda(k)-\sum_{p=1}^{k} \tau_p^2+\frac{k\beta^2}{4\sigma^2}-i\frac{\beta}{\sigma}\sum_{p=1}^{k}\tau_p\right) \mathrm{d}\tau_1 \cdots \mathrm{d}\tau_{k}\,.
    \end{aligned}
\end{equation}
where $\Lambda(k) = \begin{cases}
    0 & \text{When }k \text{ is even}\\
    -1 & \text{When }k \text{ is odd}
\end{cases}$.
\end{lem}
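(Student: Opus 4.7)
The plan is to derive the identity by four clean manipulations: take the Hermitian adjoint of $\widetilde G_{k,A_S}(\omega)$, conjugate operator-by-operator with $\sigma_\beta$ using a KMS-type identity, deform each integration variable into the complex plane, and finally rescale $t_p\mapsto 2\sigma\tau_p$. First I would apply the adjoint inside the integral in \eqref{eq:tildeGk}: this reverses the product of Heisenberg factors and flips the sign of the phase, yielding
\begin{equation*}
\widetilde G_{k,A_S}^{\dagger}(\omega)=\int_{t_1\leq\cdots\leq t_k} S_{(-1)^{k-1}}^{\dagger}(t_k)\cdots A_S(t_2)A_S^{\dagger}(t_1)\,e^{i\omega\sum_{p}(-1)^{p}t_p}\,f(t_1)\cdots f(t_k)\,dt_1\cdots dt_k.
\end{equation*}

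Next I would insert $\sigma_\beta\sigma_\beta^{-1}=I$ between every consecutive pair of Heisenberg operators and use the KMS-type identity $\sigma_\beta^{-1}A(t)\sigma_\beta=A(t-i\beta)$, which follows from the finite-dimensional computation $e^{\beta H}\,e^{iHt}\,A\,e^{-iHt}\,e^{-\beta H}=e^{iH(t-i\beta)}\,A\,e^{-iH(t-i\beta)}$. After this step every operator factor has its time argument shifted by $-i\beta$, while the scalar weights $f(t_p)$ and the phase $e^{i\omega(-1)^p t_p}$ are untouched.

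The third step is a contour shift $t_p\mapsto s_p+i\beta$ applied to each of the $k$ integration variables. Because $f$ is an entire Gaussian whose modulus decays like $e^{-s^{2}/(4\sigma^{2})}$ uniformly in any horizontal strip, and because $H$ acts on a finite-dimensional Hilbert space so that $A_S(z)$ extends to an entire operator-valued function bounded by $\|A_S\|\,e^{|\mathrm{Im}\,z|\|H\|}$, the shifts are justified iteratively by Cauchy's theorem with vanishing contributions at infinity. After shifting, the operator arguments return to real values $s_p$; the phase picks up the factor $e^{-\omega\beta\sum_{p}(-1)^{p}}=e^{-\omega\beta\Lambda(k)}$, using that $\sum_{p=1}^{k}(-1)^{p}$ is $0$ for even $k$ and $-1$ for odd $k$; and each $f(s_p+i\beta)$ becomes $\frac{1}{(2\pi)^{1/4}\sigma^{1/2}}\exp\!\left(-\frac{s_p^{2}}{4\sigma^{2}}-\frac{i\beta s_p}{2\sigma^{2}}+\frac{\beta^{2}}{4\sigma^{2}}\right)$.

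Finally I would substitute $s_p=2\sigma\tau_p$. The Jacobian contributes $(2\sigma)^{k}$, which combines with the $k$-fold product of the $f$-prefactor $1/((2\pi)^{1/4}\sigma^{1/2})$ to produce exactly $(2\sigma^{1/2})^{k}/((2\pi)^{1/4})^{k}$. The substitution maps $s_p^{2}/(4\sigma^{2})\mapsto\tau_p^{2}$, $\beta s_p/(2\sigma^{2})\mapsto\beta\tau_p/\sigma$, and $\omega s_p\mapsto 2\sigma\omega\tau_p$, while the simplex ordering $t_1\leq\cdots\leq t_k$ transports directly to $\tau_1\leq\cdots\leq\tau_k$. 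Collecting all exponential factors reproduces the displayed identity. I expect the only non-routine step to be the contour deformation, but it is essentially automatic here: finite dimensionality gives a uniform bound on $A_S(z)$ in any strip, and the Gaussian decay of $f$ dominates the estimate at real infinity, so the deformation incurs no boundary terms.
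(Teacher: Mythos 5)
Your proposal is correct and reaches exactly the right formula, but it takes a genuinely different route from the paper. You implement the imaginary time shift analytically: take the adjoint, conjugate each Heisenberg factor via the KMS identity $\sigma_\beta^{-1}A_S(t)\sigma_\beta=A_S(t-i\beta)$, and then deform the contour $t_p\mapsto s_p+i\beta$ to return the operator arguments to the real axis, with all the bookkeeping ($e^{-\omega\beta\Lambda(k)}$ from the phase, the $-i\beta s_p/(2\sigma^2)+\beta^2/(4\sigma^2)$ terms from $f(s_p+i\beta)$, and the prefactor $(2\sigma^{1/2})^k/((2\pi)^{1/4})^k$ after rescaling) matching the lemma exactly. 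The paper instead stays entirely real-variable and algebraic: it first changes to difference coordinates $t_p=\sum_{q=p}^k s_q$ so that the global time translation is carried by the single variable $s_k$, observes that $[H,\cdot]$ acting on the operator product equals $-i\partial_{s_k}$, moves the derivatives onto the scalar Gaussian factors by integration by parts, and then resums the Baker--Campbell--Hausdorff series $\sum_m \frac{(-\beta)^m i^m}{m!}\partial_{s_k}^m$ as a Taylor expansion, which produces the shift $s_k\mapsto s_k-i\beta$ without ever invoking Cauchy's theorem. Your version is shorter and conceptually standard for anyone who knows KMS analyticity; the paper's version avoids complex-analytic justifications altogether, which fits its broader program of controlling every order of the Dyson expansion through explicit real integrals.

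One step in your argument deserves repair: the claim that the contour shifts are ``justified iteratively by Cauchy's theorem'' glosses over the fact that the integration domain is a simplex, not a product. Shifting one variable at a time in the ordered region $t_1\leq\cdots\leq t_k$ produces vertical boundary segments at the walls $t_p=t_{p+1}$ that do not individually vanish. The clean fix is to note that your shift is diagonal (the same $+i\beta$ in every coordinate), so after passing to the coordinates $u=t_1$, $v_p=t_{p+1}-t_p\geq 0$, only the single unconstrained variable $u$ needs to be shifted, and there the one-dimensional Cauchy argument with Gaussian decay and the strip bound $\|A_S(z)\|\leq\|A_S\|e^{2|\mathrm{Im}\,z|\,\|H\|}$ applies verbatim. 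It is a pleasant irony that this difference-coordinate device is precisely the change of variables with which the paper's proof begins.
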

\begin{proof}
    See \cref{appd:lm1proof}.
\end{proof}

In the following part of the proof, we assume that $A_S$ is Hermitian to simplify the notation. The calculation for non-Hermitian $A_S$ is the same. By substituting~\cref{eqlm:sigmaGk} to Term 1, we have
\begin{equation}\label{eq:Term1bd0}
\begin{aligned}
&\text{Term}\ 1
\leq \|\sigma_\beta\|_1\sum_{n\geq 1} \frac{(2\Gamma)^{2n}}{((2\pi)^{1/4})^{2n}}\|A_S\|^{2n} \sum_{\substack{k=0 \\ \text{even}}}^{2n}
\int_{\substack{-\infty< s_1\leq s_2 \leq \cdots \leq s_{2n-k}<\infty, \\-\infty < t_1 \leq t_2 \leq \cdots \leq t_k <\infty}}
e^{-\sum_{p=1}^{2n-k}s_p^2 - \sum_{p=1}^k t_p^2}\\
&\left|\int \gamma(\omega) e^{i2\sigma\omega (\sum_{p = 1}^{2n-k}(-1)^p s_p -  \sum_{p= 1}^k (-1)^pt_p)}\mathrm{d}\omega\right|\left|e^{\frac{(2n-k)\beta^2}{4\sigma^2}-i\frac{\beta}{\sigma} \sum_{p = 1}^{2n-k}s_p}-1\right|\mathrm{d}s_1 \cdots\mathrm{d}s_{2n-k}\mathrm{d}t_1 \cdots \mathrm{d}t_k\,.
\end{aligned}
\end{equation}
The above bound can be obtained by
\begin{equation}
    \text{Term}1 \leq \sum_{n\geq 1} \frac{2^{2n}\Gamma^{2n}}{\sqrt{2\pi}^{n}}\|A_S\|^{2n} \sum_{\substack{k=0 \\ \text{even}}}^{2n}\left(\text{Term}1.1 + \text{Term}1.2\right)\,.
\end{equation}
Here
\begin{equation}\label{eq:finalbdTerm11}
\begin{aligned}
   \text{Term 1.1}& \coloneqq \int_{\substack{-\infty< s_1\leq s_2 \leq \cdots \leq s_{2n-k}<\infty, \\-\infty < t_1 \leq t_2 \leq \cdots \leq t_k <\infty}}
e^{-\sum_{p=1}^{2n-k}s_p^2 - \sum_{p=1}^k t_p^2+\frac{(2n-k)\beta^2}{4\sigma^2}}\\
&\left|\int \gamma(\omega) e^{i2\sigma\omega (\sum_{p = 1}^{2n-k}(-1)^p s_p -  \sum_{p= 1}^k (-1)^pt_p)}\mathrm{d}\omega\right|\left|e^{-i\frac{\beta}{\sigma} \sum_{p = 1}^{2n-k}s_p}-1\right|\mathrm{d}s_1 \cdots\mathrm{d}s_{2n-k}\mathrm{d}t_1 \cdots \mathrm{d}t_k\\
   &=\frac{\beta \pi^n}{\sigma^2}\exp\left(\frac{(2n-k)\beta^2}{4\sigma^2}\right) \mathcal{O}\left(\|\gamma'\|_{L^1}\log(\sqrt{2n}\sigma)+1\right)\,,
\end{aligned}
\end{equation}
and
\begin{equation}\label{eq:finalbdTerm12}
\begin{aligned}
\text{Term 1.2}&\coloneqq\left|e^{\frac{(2n-k)\beta^2}{4\sigma^2}}-1\right|\int_{\substack{-\infty< s_1\leq s_2 \leq \cdots \leq s_{2n-k}<\infty, \\-\infty < t_1 \leq t_2 \leq \cdots \leq t_k <\infty}}
e^{-\sum_{p=1}^{2n-k}s_p^2 - \sum_{p=1}^k t_p^2}\\
&\left|\int \gamma(\omega) e^{i2\sigma\omega (\sum_{p = 1}^{2n-k}(-1)^p s_p -  \sum_{p= 1}^k (-1)^pt_p)}\mathrm{d}\omega\right| \mathrm{d}s_1 \cdots\mathrm{d}s_{2n-k}\mathrm{d}t_1 \cdots \mathrm{d}t_k \\
&=\frac{\pi^n}{\sigma}\left|e^{\frac{(2n-k)\beta^2}{4\sigma^2}}-1\right|\mathcal{O}\left(\frac{\|\gamma'\|_{L^1}}{\sqrt{n}} \log(\sqrt{2n}\sigma)+1 \right)\,.\\
\end{aligned}
\end{equation}
The detailed calculation for the scaling of Term 1.1 and Term 1.2 can be found in~\cref{appd:term1.1-2}.

Therefore, by $|e^x-1|\leq xe^x$ for $x\geq0$ and denoting $r = e^{\beta^2/(2\sigma^2)}$, we have
\begin{equation}
\begin{aligned}
&\text{Term 1} \leq \frac{\beta}{\sigma^2}\sum_{n\geq 1}(n+1)(2\sqrt{2\pi}\Gamma^2r)^n\mathcal{O}\left(\|\gamma'\|_{L^1}\log(\sqrt{2n}\sigma)+1\right)\\
+&\frac{\beta^2}{2\sigma^3}\sum_{n\geq 1}\sqrt{n}(n+1)(2\sqrt{2\pi}\Gamma^2r)^n\mathcal{O}\left(\log\left(\sqrt{2n}\sigma\right) \|\gamma'\|_{L^1}+ \sqrt{n}\right)
\end{aligned}
\end{equation}
When $2\sqrt{2\pi}\Gamma^2r<1$, the summation is dominated by a convergent geometric series. Therefore, with   $\sigma \gg 1$,
we have
\begin{equation}\label{eq:Term1bd1}
\begin{aligned}
&\text{Term 1}
= \mathcal{O}\left(\frac{\beta\Gamma^2\log(\sigma)}{\sigma^2}\|\gamma'\|_{L^1}\right)\,.
\end{aligned}
\end{equation}

When $k$ is odd, the bound for Term 2 is obtained in exactly the same way, with an additional factor of $\exp(\beta\omega)$. In particular, the additional factor $e^{\beta\omega}$ is precisely the one appearing in~\cref{eqlm:sigmaGk} for odd indices. Similar to \cref{eq:Term1bd0} and \cref{eq:Term1bd1}, we have
\begin{equation}\label{eq:Term2bd1}
\begin{aligned}
&\text{Term 2}
\leq \sum_{n\geq 1} \frac{(2\Gamma)^{2n}}{((2\pi)^{1/4})^{2n}}\|A_S\|^{2n} \sum_{\substack{k=0 \\ \text{odd}}}^{2n}
\int_{\substack{-\infty< s_1\leq s_2 \leq \cdots \leq s_{2n-k}<\infty, \\-\infty < t_1 \leq t_2 \leq \cdots \leq t_k <\infty}}
e^{-\sum_{p=1}^{2n-k}s_p^2 - \sum_{p=1}^k t_p^2}\\
&\left|\int  \gamma(\omega)\exp(\beta\omega) e^{i2\sigma\omega (\sum_{p = 1}^{2n-k}(-1)^p s_p -  \sum_{p= 1}^k (-1)^pt_p)}\mathrm{d}\omega\right|\left|\exp\left(\frac{(2n-k)\beta^2}{4\sigma^2}-i\frac{\beta}{\sigma} \sum_{p = 1}^{2n-k}s_p\right)-1\right|\mathrm{d}s_1 \cdots\mathrm{d}s_{2n-k}\mathrm{d}t_1 \cdots \mathrm{d}t_k\\
&  =  \mathcal{O}\left(\frac{\beta\Gamma^2\log(\sigma)}{\sigma^2}\|\widetilde \gamma'\|_{L_1}\right)\,.\\
\end{aligned}
\end{equation}
\end{proof}

Plugging the bounds in~\cref{lm:diff} in~\eqref{eqn:separation}, we obtain the upper bound
    \begin{equation}
    \begin{aligned}
        &\|\Phi_\Gamma \sigma_\beta - \sigma_\beta \|_1<\mathcal{O}\left(\frac{\beta\Gamma^2}{\sigma^2}\left( (\|\gamma'\|_{L^1}+\|\widetilde{\gamma}'\|_{L^1})\log(\sigma)\right)
        + \frac{\Gamma^2\sigma}{T} e^{-T^2/(4\sigma^2)}\right)\,.
    \end{aligned}
    \end{equation} Furthermore, using the similar techniques as~\cite[Equation D2]{ding2025endtoendefficientquantumthermal}, we have
\begin{equation}\label{eqn:thermalbd}
\begin{aligned}
    \|\rho_{\rm fix}(\Phi_\Gamma) - \sigma_\beta\|_1 &\leq \|\Phi^{\tau_{\rm mix,\Phi_\Gamma}(\epsilon/2)}_\Gamma\sigma_\beta - \sigma_\beta\|_1 +\|\Phi^{\tau_{\rm mix,\Phi_\Gamma}(\epsilon/2)}_\Gamma\sigma_\beta - \rho_{\rm fix}(\Phi_\Gamma)\|_1\\
    &\leq \tau_{\rm mix,\Phi_\Gamma}(\epsilon/2)\|\Phi_\Gamma\sigma_\beta - \sigma_\beta\|_1 + \epsilon/2 \\
    &\leq \mathcal{O}\left(\frac{\Gamma^2\tau_{{\rm mix},\Phi_\Gamma}(\epsilon/2)}{\sigma}\frac{\beta}{\sigma}\left( (\|\gamma'\|_{L^1}+\|\widetilde{\gamma}'\|_{L^1})\log(\sigma)\right)
        + \frac{\Gamma^2\tau_{{\rm mix},\Phi_\Gamma}(\epsilon/2)}{\sigma}\frac{\sigma^2}{T} e^{-T^2/(4\sigma^2)}\right)+ \epsilon/2\\
        & \leq \mathcal{O}\left(t_{{\rm mix},\Phi_\Gamma}(\epsilon/2)\frac{\beta}{\sigma}\left( (\|\gamma'\|_{L^1}+\|\widetilde{\gamma}'\|_{L^1})\log(\sigma)\right)
        + t_{{\rm mix},\Phi_\Gamma}(\epsilon/2)\frac{\sigma^2}{T} e^{-T^2/(4\sigma^2)}\right)+ \epsilon/2
\end{aligned}
\end{equation}
It is bounded by $\epsilon$ by choosing  $\sigma = \widetilde{\Theta}(\beta t_{\rm mix,\Phi_\Gamma}(\epsilon/2)/\epsilon), T = \widetilde{\Theta}(\sigma \log^{1/2}(t_{\rm mix,\Phi_\Gamma}(\epsilon/2)\sigma/\epsilon)), \Gamma = \mathcal{O}(1)$.

\subsection{Proof of \texorpdfstring{\cref{lm:AvoidDB}}{Lg}}\label{appd: AvoidDB}
We note that the expression of $\widetilde G_{k,A_S}$ depends on the system Hamiltonian $H$ and the coupling operator $A_S$. {In this proof, we make this dependence explicit by writing $\widetilde{G}_{k,H,A_S}(\omega)  \coloneqq\widetilde{G}_{k,A_S}(\omega)$ with $\widetilde{G}_{k,A_S}(\omega)$ defined in~\cref{eq:tildeGk}.}
Then,
\[
\begin{aligned}
\widetilde{G}^\dagger_{k,H,A_S}(\omega) =&\int_{-\infty<t_1\leq \cdots \leq t_k < \infty} \cdots A_S(t_2)A^\dagger_S(t_1) e^{i\omega\sum_{p=1}^k (-1)^p t_p} f(t_1)\cdots f(t_k)\mathrm{d}t_1 \cdots\mathrm{d}t_k\\
=&\int_{-\infty<s_1\leq \cdots \leq s_k < \infty} \cdots A_S(-s_{k-1})A^\dagger_S(-s_k) e^{-i\omega(-1)^{k-1}\sum_{p=1}^k (-1)^p s_p} f(s_1)\cdots f(s_k)\mathrm{d}t_1\cdots\mathrm{d}t_k,\quad s_j=-t_{k-j+1}\\
=&\left\{\begin{aligned}
&\widetilde{G}_{k,-H,A^\dagger_S}(\omega),\quad k\ \rm{odd}\,,\\
&\widetilde{G}_{k,-H,A_S}(-\omega),\quad k\ \rm{even}\,.
\end{aligned}\right.
\end{aligned}
\]
Similarly, we have {$\widetilde{G}_{k,H,A_S}(\omega) = \widetilde{G}^\dagger_{k,-H,A^\dagger_S}(\omega)$ for odd $k$ and $\widetilde{G}_{k,H,A_S}(\omega) =\widetilde{G}^\dagger_{k,-H,A_S}(-\omega)$ for even $k$.}

Now, we generate evolution operator $\widetilde \Psi_\Gamma$, defined in analogy with $\widetilde \Phi_\Gamma$, by replacing Hamiltonian in $\widetilde G_k$  with $-H$ and inverse temperature with $-\beta$,
\[
    \begin{aligned}
        \widetilde\Psi_\Gamma \rho_0
        & = U_S(2T)\rho_0 U_S^\dag(2T) \\
        & +\mathbb{E}_{A_S}\left(\sum_n \Gamma^{2n} (-1)^n \sum_{k = 0}^{2n}(-1)^k \int \underbrace{\frac{g(\omega)+g(-\omega)}{1+ e^{-\beta\omega}}}_{=\gamma(-\omega)}U_S(T)\widetilde{G}_{2n-k,-H,A_S}^\dag(\omega)U_S(T) \rho_0 U_S^\dag(T)  \widetilde{G}_{k,-H,A_S}(\omega) U_S^\dag(T)\mathrm{d}\omega\right)\\
    \end{aligned}
\]
It straightforward to see that $\widetilde\Psi_\Gamma$ is a quantum channel. This implies
\[
\widetilde{\Psi}_\Gamma^{\dagger}[I]=I\ \Rightarrow\ \mathbb{E}_{A_S}\left(\sum_n \Gamma^{2n} (-1)^n \sum_{k = 0}^{2n}(-1)^k \int \gamma(-\omega)\widetilde{G}_{2n-k,-H,A_S}(\omega)\widetilde{G}^\dagger_{k,-H,A_S}(\omega) \mathrm{d}\omega\right)=0\,.
\]
Because $\left\{(A_i)^\dagger\right\}=\left\{A_i\right\}$, we have
\begin{equation}
\begin{aligned}
&\mathbb{E}_{A_S}\left(\sum_{n} \Gamma^{2n}(-1)^{n}
\sum_{k=0}^{2n} (-1)^{k}
\int \gamma\big( (-1)^{k}\,\omega \big)\,
\widetilde{G}^{\dagger}_{2n-k,H,A_S}(\omega)\,
\widetilde{G}_{k,H,A_S}(\omega)\mathrm d\omega\right) \\
=& \mathbb{E}_{A_S}\left(\sum_{n} \Gamma^{2n}(-1)^{n}
\sum_{k=0,\text{even}}^{2n}(-1)^k
\int \gamma(\omega)\,
\widetilde{G}_{2n-k,-H,A_S}(-\omega)\,
\widetilde{G}^\dagger_{k,-H,A_S}(-\omega)\mathrm d\omega\right)\\
&+\underbrace{\mathbb{E}_{A_S}\left(\sum_{n} \Gamma^{2n}(-1)^{n}
\sum_{k=0,\text{odd}}^{2n}(-1)^k
\int \gamma(-\omega)\,
\widetilde{G}_{2n-k,-H,A^\dagger_S}(\omega)\,
\widetilde{G}^{\dagger}_{k,-H,A^\dagger_S}(\omega)\mathrm d\omega\right)}_{=\mathbb{E}_{A_S}\left(\sum_{n} \Gamma^{2n}(-1)^{n}
\sum_{k=0,\text{odd}}^{2n}(-1)^k
\int \gamma(-\omega)\,
\widetilde{G}_{2n-k,-H,A_S}(\omega)\,
\widetilde{G}^{\dagger}_{k,-H,A_S}(\omega)\mathrm d\omega\right)}\\
=&\mathbb{E}_{A_S}\left(\sum_n \Gamma^{2n} (-1)^n \sum_{k = 0}^{2n}(-1)^k \int \gamma(-\omega)\widetilde{G}_{2n-k,-H,A_S}(\omega)\widetilde{G}^\dagger_{k,-H,A_S}(\omega) \mathrm{d}\omega\right)=0\,.
\end{aligned}
\end{equation}
The proof is complete.

\subsection{Proof of \texorpdfstring{\cref{lm:1}}{Lg}}\label{appd:lm1proof}
We prove the stated expression for $\sigma_\beta^{-1}\widetilde G_{k,A_S}^\dag \sigma_\beta$.
For notation simplicity, we assume $A_S = A_S^\dag$. The extension to the non-hermitian $A_S$ is straightforward.

We first calculate $\sigma_\beta^{-1}\widetilde G_{k,A_S} \sigma_\beta$, the expression of $\sigma_\beta^{-1}\widetilde G_{k,A_S}^\dag \sigma_\beta$ can be obtained by applying the complex conjugate on it.
We use the change of variable, $s_{p}=t_{p}-t_{p+1}$ for $1\leq p<k$ and $s_{k}=t_{k}$,
to $\widetilde G_{k,A_S} $ in~\cref{eq:tildeGk}, then
\begin{equation}
\begin{aligned}
\widetilde{G}_{k,A_S}(\omega) & = \int_{-\infty<s_1, \cdots s_{k-1}\leq 0, -\infty<s_{k}<\infty} A_S\left(\sum_{p =1}^{k}s_p\right)\cdots A_S(s_{k}) e^{-i\omega \sum_{p = 1}^{k}(-1)^p \sum_{q = p}^{k} s_q}
f\left(\sum_{p =1}^{k}s_p\right)\cdots f(s_{k})\mathrm{d}s_1 \cdots \mathrm{d}s_{k}\,.
\end{aligned}
\end{equation}
Denote the adjoint map as $\ad_H(A) = [H, A]$. Since $\ad_H(A_S(t)) = -i\partial_t A_S(t)$, for the product appearing in $\widetilde{G}_{k,A_S}(\omega)$, we have
\begin{equation}
    \ad_H(A_S\left(\sum_{p =1}^{k}s_p\right)\cdots A_S(s_{k})) =-i\partial_{s_{k}}\left(A_S\left(\sum_{p =1}^{k}s_p\right)\cdots A_S(s_{k})\right).
\end{equation}
By integration by parts,
\begin{equation}
\begin{aligned}
&\ad_H^m\left(\widetilde{G}_{k,A_S}\right) = \int_{\substack{-\infty<s_1, \cdots s_{k-1}\leq 0\\ -\infty<s_{k}<\infty}}  A_S\left(\sum_{p =1}^{k}s_p\right)\cdots A_S(s_{k})
 i^m\partial_{s_{k}}^m\left(
e^{-i\omega \sum_{p = 1}^{k}(-1)^p \sum_{q = p}^{k} s_q}
\prod_{q=1}^kf\left(\sum_{p =q}^{k}s_p\right)\right) \mathrm{d}s_1 \cdots \mathrm{d}s_{k}\,.\\
\end{aligned}
\end{equation}
By Baker–Campbell–Hausdorff formula, we have $\sigma_\beta \widetilde{G}_{k,A_S} \sigma_\beta^{-1} = e^{-\beta \operatorname{ad}_H}(\widetilde{G}_{k,A_S})$, then
\begin{equation}
\begin{aligned}
&\sigma_\beta \widetilde{G}_{k,A_S} \sigma_\beta^{-1}
=\int_{\substack{-\infty<s_1, \cdots s_{k-1}\leq 0\\ -\infty<s_{k}<\infty}} A_S\left(\sum_{p =1}^{k}s_p\right)\cdots A_S(s_{k})\sum_{n\geq 0}\frac{1}{m!}(-\beta)^mi^m\partial_{s_{k}}^m\left(
e^{-i\omega \sum_{p = 1}^{k}(-1)^p \sum_{q = p}^{k} s_q}
\prod_{q=1}^kf\left(\sum_{p =q}^{k}s_p\right)\right) \mathrm{d}s_1 \cdots \mathrm{d}s_{k}\\
&= \int_{\substack{-\infty<s_1, \cdots s_{k-1}\leq 0\\ -\infty<s_{k}<\infty}}  A_S\left(\sum_{p =1}^{k}s_p\right)\cdots A_S(s_{k})\left(
e^{-i\omega \sum_{p=1}^{k-1} (-1)^p \sum_{q=p}^{k-1} s_q-i\omega (s_{k} - i\beta)\Lambda(k)} \prod_{q=1}^kf\left(\sum_{p =q}^{k}s_p- i\beta\right)\right) \mathrm{d}s_1 \cdots \mathrm{d}s_{k}\,.\\
\end{aligned}
\end{equation}
where the last equality follows from Taylor expansion in $s_k$ and $\Lambda(k)$ is defined in the statement of \cref{lm:1}.

Substituting the expression of $f(t)$ and rescaling the variable $s_q \rightarrow 2\sigma s_q$ for $q = 1, \cdots, k$, we have
\begin{equation}
\label{eq:Gtilde}
\begin{aligned}
&\sigma_\beta \widetilde{G}_{k,A_S} \sigma_\beta^{-1}=  \frac{1}{\pi^{k/4}}\int_{\substack{-\infty<s_1, \cdots s_{k-1}\leq 0\\ -\infty<s_{k}<\infty}}A_S\left(2\sigma\sum_{p =1}^{k}s_p\right)\cdots A_S(2\sigma s_{k})\\
&\exp\left(-i2\sigma\omega \sum_{p=1}^{k} (-1)^p \sum_{q=p}^{k} s_q-\omega \beta\Lambda(k)-\sum_{p=1}^{k} \left(\sum_{q = p}^{k} s_q\right)^2+\frac{k\beta^2}{4\sigma^2}+i\frac{\beta}{\sigma}\sum_{p=1}^k ps_p\right) \mathrm{d}s_1 \cdots \mathrm{d}s_{k}\,.
\end{aligned}
\end{equation}

By applying the conjugate transpose to~\cref{eq:Gtilde} and another change of variable $\tau_p = \sum^{k}_{q =p}s_q$,
 we obtain \cref{eqlm:sigmaGk}. This concludes the proof.

\subsection{Bounds of Term 1.1 and Term 1.2}\label{appd:term1.1-2}
We bound Term 1.1 and Term 1.2 by separating the integration domain into a neighborhood of the resonant hyperplane and its complement. We define $z(\boldsymbol{s}, \boldsymbol{t}) = \sum_{p = 1}^{2n-k}(-1)^p s_p-\sum_{p= 1}^k (-1)^pt_p$ where $\boldsymbol{s} = (s_1, \cdots, s_{2n-k})$ and $\boldsymbol{t} = (t_1,\cdots, t_{k})$. We also write $\hat{\gamma}(2\sigma z(\boldsymbol{s}, \boldsymbol{t})) \coloneqq \int \gamma(\omega) e^{i2\sigma\omega z(\boldsymbol{s}, \boldsymbol{t})}\mathrm{d}\omega$.

Using $|\sin(x)|\leq |x|$, we obtain
    \begin{equation}\label{eq:term11_start}
        \begin{aligned}
            \text{Term 1.1} &= {2}\exp\left(\frac{(2n-k)\beta^2}{4\sigma^2}\right)\int_{\substack{-\infty< s_1\leq s_2 \leq \cdots \leq s_{2n-k}<\infty, \\-\infty < t_1 \leq t_2 \leq \cdots \leq t_k <\infty}}
        e^{-\sum_{p=1}^{2n-k}s_p^2 - \sum_{p=1}^k t_p^2}
        \left|\hat{\gamma}\left(2\sigma z\right(\boldsymbol{s}, \boldsymbol{t}\left)\right)\right| \left|\sin\left(\frac{\beta}{2\sigma} \sum_{p = 1}^{2n-k}s_p\right)\right|\mathrm{d}\boldsymbol{s}\mathrm{d}\boldsymbol{t}\\
        &\leq \frac{\beta}{\sigma}\exp\left(\frac{(2n-k)\beta^2}{4\sigma^2}\right)\int_{\substack{-\infty< s_1\leq s_2 \leq \cdots \leq s_{2n-k}<\infty, \\-\infty < t_1 \leq t_2 \leq \cdots \leq t_k <\infty}}
        e^{-\sum_{p=1}^{2n-k}s_p^2 - \sum_{p=1}^k t_p^2}
        \left|\hat{\gamma}\left(2\sigma z\right(\boldsymbol{s}, \boldsymbol{t}\left)\right)\right| \left| \sum_{p = 1}^{2n-k}s_p\right|\mathrm{d}\boldsymbol{s}\mathrm{d}\boldsymbol{t}\,.\\
        \end{aligned}
    \end{equation}
    We split the integral in~\cref{eq:term11_start} into two integrals $\mathrm{I} $ and $\mathrm{II}$ such that
    \begin{equation}\label{eq:bdTerm11}
        \text{Term 1.1}\leq \frac{\beta}{\sigma}\exp\left(\frac{(2n-k)\beta^2}{4\sigma^2}\right) \left(\mathrm{I} + \mathrm{II}\right).
    \end{equation}
    Here $\mathrm{I}$ is the contribution from $|z(\boldsymbol{s}, \boldsymbol{t})|\geq\delta$,
    \begin{equation}
    \begin{aligned}
        \mathrm{I} &\coloneqq\int_{\substack{-\infty< s_1\leq s_2 \leq \cdots \leq s_{2n-k}<\infty,\\ -\infty < t_1 \leq t_2 \leq \cdots \leq t_k <\infty, |z(\boldsymbol{s}, \boldsymbol{t})|\geq\delta}}
        e^{-\sum_{p=1}^{2n-k}s_p^2 - \sum_{p=1}^k t_p^2}
        \left|\hat{\gamma}\left(2\sigma z\right(\boldsymbol{s}, \boldsymbol{t}\left)\right)\right| \left| \sum_{p = 1}^{2n-k}s_p\right|\mathrm{d}\boldsymbol{s}\mathrm{d}\boldsymbol{t},
    \end{aligned}
    \end{equation}
    and $\mathrm{II}$ counts the contribution from the complementary region,
    \begin{equation}
        \begin{aligned}
            \mathrm{II}& \coloneqq \int_{\substack{-\infty< s_1\leq s_2 \leq \cdots \leq s_{2n-k}<\infty,\\ -\infty < t_1 \leq t_2 \leq \cdots \leq t_k <\infty, |z(\boldsymbol{s}, \boldsymbol{t})| < \delta}} e^{-\sum_{p=1}^{2n-k}s_p^2 - \sum_{p=1}^k t_p^2}
        \left|\hat{\gamma}\left(2\sigma z\right(\boldsymbol{s}, \boldsymbol{t}\left)\right)\right| \left| \sum_{p = 1}^{2n-k}s_p\right|\mathrm{d}\boldsymbol{s}\mathrm{d}\boldsymbol{t}\,.\\
        \end{aligned}
    \end{equation}

    On the region $|z(\boldsymbol{s}, \boldsymbol{t})|>\delta$, integration by parts gives
    \begin{equation}\label{eq:inequalD}
            \left|\hat{\gamma}\left(2\sigma z\right(\boldsymbol{s}, \boldsymbol{t}\left)\right)\right|
            \leq  \frac{\|\gamma'\|_{L_1}}{2\sigma|z(\boldsymbol{s}, \boldsymbol{t})|}.
    \end{equation}
    Substituting it to the expression of $\mathrm{I}$, we have
    \begin{equation}
    \begin{aligned}
        \mathrm{I} &\leq \frac{\|\gamma'\|_{L_1}}{2\sigma}\int_{\substack{-\infty< s_1\leq s_2 \leq \cdots \leq s_{2n-k}<\infty,\\ -\infty < t_1 \leq t_2 \leq \cdots \leq t_k <\infty, |z(\boldsymbol{s}, \boldsymbol{t})|\geq\delta}}
        e^{-\sum_{p=1}^{2n-k}s_p^2 - \sum_{p=1}^k t_p^2}\frac{\left| \sum_{p = 1}^{2n-k}s_p\right|}{|z(\boldsymbol{s}, \boldsymbol{t})|}\mathrm{d}\boldsymbol{s}\mathrm{d}\boldsymbol{t}.
    \end{aligned}
    \end{equation}
    We denote the concatenated vector as $\boldsymbol{u} = (\boldsymbol{s}, \boldsymbol{t})$, then $z(\boldsymbol{s}, \boldsymbol{t}) = v\cdot \boldsymbol{u}, \sum_{p = 1}^{2n-k}s_p = c\cdot \boldsymbol{u}$ where
    \begin{equation}
        v \coloneqq (-1, 1, -1, \cdots, (-1)^{2n-k}, 1, -1,\cdots, (-1)^{k+1})\in\mathbb{R}^{2n},\quad c \coloneqq (\underbrace{1,\cdots,1}_{2n-k},0,\cdots,0)\in\mathbb{R}^{2n}.
    \end{equation}
    The normalized vector $v_1 = v/\sqrt{2n}$ and $c_1=c /\sqrt{2n-k}$. Moreover, the vector $v_2 = c_1 - (c_1\cdot v_1) v_1 = c_1 -\Lambda(2n-k)v_1\perp v_1$, $\|v_2\| = 1$ and
    \begin{equation}
        \frac{\left| \sum_{p = 1}^{2n-k}s_p\right|}{|z(\boldsymbol{s}, \boldsymbol{t})|}  = \frac{|c\cdot \boldsymbol{u}|}{|v\cdot\boldsymbol{u}|}= \frac{\sqrt{2n-k}|(\Lambda(2n-k)v_1 +v_2)\cdot \boldsymbol{u}|}{\sqrt{2n}|v_1\cdot\boldsymbol{u}|}\leq \sqrt{\frac{2n-k}{2n}}|\Lambda(2n-k)| + \sqrt{\frac{2n-k}{2n}}\frac{|v_2\cdot \boldsymbol{u}|}{|v_1\cdot \boldsymbol{u}|}.
    \end{equation}
    Therefore,
    \begin{equation}
    \begin{aligned}
        \mathrm{I} &\leq \frac{\|\gamma'\|_{L_1}}{2\sigma}\sqrt{\frac{2n-k}{2n}}|\Lambda(2n-k)|\int_{\substack{-\infty< s_1\leq s_2 \leq \cdots \leq s_{2n-k}<\infty,\\ -\infty < t_1 \leq t_2 \leq \cdots \leq t_k <\infty, |z(\boldsymbol{s}, \boldsymbol{t})|\geq\delta}}
        e^{-\sum_{p=1}^{2n-k}s_p^2 - \sum_{p=1}^k t_p^2}\mathrm{d}\boldsymbol{s}\mathrm{d}\boldsymbol{t}\\
        & + \frac{\|\gamma'\|_{L_1}}{2\sigma}\sqrt{\frac{2n-k}{2n}}\int_{\substack{-\infty< s_1\leq s_2 \leq \cdots \leq s_{2n-k}<\infty,\\ -\infty < t_1 \leq t_2 \leq \cdots \leq t_k <\infty, |z(\boldsymbol{s}, \boldsymbol{t})|\geq\delta}}
        e^{-\sum_{p=1}^{2n-k}s_p^2 - \sum_{p=1}^k t_p^2}\frac{|v_2\cdot \boldsymbol{u}|}{|v_1\cdot \boldsymbol{u}|}\mathrm{d}\boldsymbol{s}\mathrm{d}\boldsymbol{t}.\\
    \end{aligned}
    \end{equation}
    Since $v_1\perp v_2$ and $\|v_i\| = 1$ for $ i = 1,2$, we can complete them so that $\{v_i\}_{i=1}^{2n}$
    is a set of orthonormal basis. We denote $r_i = v_i\cdot \boldsymbol{u}$, then $\|\boldsymbol{u}\|^2 = \sum_i r_i^2$, and
    \begin{equation}
        \begin{aligned}
            \mathrm{I}         & \leq \frac{\|\gamma'\|_{L_1}}{2\sigma}\sqrt{\frac{2n-k}{2n}}|\Lambda(2n-k)|\int_{\mathbb{R}^{2n}}e^{-\|\boldsymbol{u}\|^2}\rm d \boldsymbol{u}
        +\frac{\|\gamma'\|_{L_1}}{2\sigma}\sqrt{\frac{2n-k}{2n}}\int_{|\sqrt{2n}r_1|>\delta}\frac{1}{|r_1|}e^{-r_1^2}\mathrm{d}r_1\int_{\mathbb{R}^{2n-1}} e^{-\sum_{p=2}^n r_p^2}|r_2|\mathrm{d}r_2\cdots\mathrm{d}r_{2n}.
        \end{aligned}
    \end{equation}
    Using the integral of the Gaussian function, $\int_\mathbb{R} e^{-x^2}\mathrm{d}x = \sqrt{\pi}$ and $\int_{\mathbb{R}}|x|e^{-x^2}\mathrm{d}x = 1$, we have
    \begin{equation}
        \mathrm{I}\leq \frac{\|\gamma'\|_{L_1}}{2\sigma}\sqrt{\frac{2n-k}{2n}}\pi^n+\frac{\|\gamma'\|_{L_1}}{2\sigma}\sqrt{\frac{2n-k}{2n}}\times \int_{|r_1|>\delta/\sqrt{2n}}\frac{1}{|r_1|}e^{-r_1^2}\mathrm{d}r_1\times \pi^{n-1}.
    \end{equation}
    For $0<a<1$, we have  $\int_{|r_1|>a}\frac{1}{|r_1|}e^{-r_1^2}\mathrm{d}r_1\leq \mathcal{O}(1+\log(\frac{1}{a}))$. Therefore, choosing $\delta = 1/\sigma$, we obtain
    \begin{equation}
        \mathrm{I} \leq \mathcal{O}(\frac{\|\gamma'\|_{L_1}}{\sigma}\pi^n(\log(\sqrt{2n}\sigma)+1)) = \mathcal{O}(\frac{\|\gamma'\|_{L_1}}{\sigma}\pi^n\log(\sqrt{2n}\sigma)).
    \end{equation}
    For $\mathrm{II}$, since $|\sum_{p = 1}^{2n-k}s_p|
    \leq \sqrt{2n-k}|r_1| + \sqrt{2n-k}|r_2|$ and $|\hat{\gamma}(2\sigma z(\boldsymbol{s}, \boldsymbol{t}))|\leq \|\gamma\|_{L_1}=1$, we have
    \begin{equation}
    \begin{aligned}
        \mathrm{II} &\leq \sqrt{2n-k} \int_{|r_1|\leq\delta/\sqrt{2n}} e^{-\sum_{p=1}^{2n} r_p^2}(|r_1|+|r_2|)\mathrm{d}r_1\cdots\mathrm{d}r_{2n}.\\
    \end{aligned}
    \end{equation}
    With $\delta = 1/\sigma$, this yields
\begin{equation}
    \mathrm{II} \leq \sqrt{2n-k}(\frac{\delta^2}{2n}\pi^{n-\frac{1}{2}}+ \frac{2\delta}{\sqrt{2n}}\pi^{n-1}) = \mathcal{O}(\frac{1}{\sqrt{2n}\sigma}\pi^n) = \mathcal{O}(\frac{\pi^n}{\sigma})
\end{equation}
In the large $\sigma$ regime considered,
combining the bound for $\mathrm{I}$ and $\mathrm{II}$ with~\cref{eq:bdTerm11} yields the scaling of Term 1.1 stated in~\cref{eq:finalbdTerm11}.

We now turn to Term 1.2. Similarly, we decompose the integration into the summation of $\mathrm{III}$ and $\mathrm{IV}$ such that $\text{Term 1.2}\leq \left|e^{\frac{(2n-k)\beta^2}{4\sigma^2}}-1\right|(\mathrm{III} + \mathrm{IV})$. Here $\mathrm{III}$ and $\mathrm{IV}$ are the integral restricted to region $z(\boldsymbol{s}, \boldsymbol{t})\geq \delta$ and $z(\boldsymbol{s}, \boldsymbol{t})< \delta$ respectively.
By~\cref{eq:inequalD}, we have
\begin{equation}
\begin{aligned}
    \mathrm{III} &\coloneqq \int_{\substack{-\infty< s_1\leq s_2 \leq \cdots \leq s_{2n-k}<\infty, \\-\infty < t_1 \leq t_2 \leq \cdots \leq t_k <\infty,|z(\boldsymbol{s}, \boldsymbol{t})|\geq\delta}}
        e^{-\sum_{p=1}^{2n-k}s_p^2 - \sum_{p=1}^k t_p^2}
        \left|\hat{\gamma}\left(2\sigma z\right(\boldsymbol{s}, \boldsymbol{t}\left)\right)\right|\mathrm{d}\boldsymbol{s}\mathrm{d}\boldsymbol{t}\\
        &\leq \frac{\|\gamma'\|_{L_1}}{2\sigma} \int_{\substack{-\infty< s_1\leq s_2 \leq \cdots \leq s_{2n-k}<\infty, \\-\infty < t_1 \leq t_2 \leq \cdots \leq t_k <\infty,|z(\boldsymbol{s}, \boldsymbol{t})|\geq\delta}}
        e^{-\sum_{p=1}^{2n-k}s_p^2 - \sum_{p=1}^k t_p^2}\frac{1}{|z(\boldsymbol{s}, \boldsymbol{t})|}\mathrm{d}\boldsymbol{s}\mathrm{d}\boldsymbol{t}=\mathcal{O}\left(\frac{\|\gamma'\|_{L_1}}{\sigma}\frac{\pi^n}{\sqrt{n}}\log(\sqrt{2n}\sigma)\right)
\end{aligned}
\end{equation}
where in the last line we use the same coordinates $r_i$ as above to approximate the integral and $\delta = 1/\sigma$.

When $z(\boldsymbol{s}, \boldsymbol{t})< \delta$, using again $|\hat{\gamma}(2\sigma z(\boldsymbol{s},\boldsymbol{t}))|\leq 1$, we have
\begin{equation}
\begin{aligned}
    \mathrm{IV} &\coloneqq \int_{\substack{-\infty< s_1\leq s_2 \leq \cdots \leq s_{2n-k}<\infty, \\-\infty < t_1 \leq t_2 \leq \cdots \leq t_k <\infty,|z(\boldsymbol{s}, \boldsymbol{t})|<\delta}}
        e^{-\sum_{p=1}^{2n-k}s_p^2 - \sum_{p=1}^k t_p^2}
        \left|\hat{\gamma}\left(2\sigma z\right(\boldsymbol{s}, \boldsymbol{t}\left)\right)\right|\mathrm{d}\boldsymbol{s}\mathrm{d}\boldsymbol{t}\\
        & \leq \int_{\substack{-\infty< s_1\leq s_2 \leq \cdots \leq s_{2n-k}<\infty, \\-\infty < t_1 \leq t_2 \leq \cdots \leq t_k <\infty,|z(\boldsymbol{s}, \boldsymbol{t})|<\delta}}
        e^{-\sum_{p=1}^{2n-k}s_p^2 - \sum_{p=1}^k t_p^2}\mathrm{d}\boldsymbol{s}\mathrm{d}\boldsymbol{t} = \mathcal{O}(\frac{\pi^n}{\sigma})\\
\end{aligned}
\end{equation}
Consequently, we conclude the proof with the scaling of Term 1.2 stated in~\cref{eq:finalbdTerm12}.

\section{Ground state preparation beyond Lindblad limit}\label{sec:ground_state_appendix}
For ground state preparation, we consider the zero temperature setting, where the environment is initialized in $\rho_E = \ketbra{0}$. Let the system Hamiltonian be decomposed as $H = \sum_j \lambda_j \ketbra{\psi_j}$ with $\lambda_0< \lambda_1 \leq\cdots\leq  \lambda_m$ and denote the spectral gap as $\Delta = \lambda_1 - \lambda_0$. We have the following result:
\begin{thm}\label{thm:main_rigor_ground}
    (Ground state) Assume $H$ has a spectral gap $\Delta$ and let $\ket{\psi_0}$ be the ground state of $H$. Then for any $\epsilon>0$,
    \begin{equation}
    \begin{aligned}
    &\|\rho_{\rm fix}(\Phi_\Gamma) - \ketbra{\psi_0}\|_1\leq \mathcal{O}\left(t_{\rm{mix},\Phi_{\Gamma}}(\epsilon/2)\frac{\sigma^2}{T}\exp(-\frac{T^2}{4\sigma^2})\right) + t_{\rm{mix},\Phi_{\Gamma}}(\epsilon/2)\sum_{n=N+1}^\infty\frac{\sigma}{\Gamma^2} \frac{\left(\mathcal{O}\left(\Gamma\|A_S\|\right)\right)^{2n}}{{(2n)!}}\\
&+t_{\rm{mix},\Phi_{\Gamma}}(\epsilon/2)\exp(-\frac{\sigma^2\Delta^2}{2N})\frac{\sigma}{\Gamma^2}\sum_{n=1}^{N}(2n+1)\frac{\mathcal{O}(\Gamma\|A_S\|\|H\|)^{2n}}{(2n)!}+\epsilon/2 ,
    \end{aligned}
    \end{equation}
    In particular, the choice
    \begin{equation}
\Gamma={\Theta}(1),\quad  \sigma=\widetilde{\Theta}\left(\Delta^{-1}\log\left(\frac{t_{\rm mix,\Phi_\Gamma}(\epsilon/2)}{\epsilon}\right)\right),\quad T=\widetilde{\Theta}\left(\sigma\log^{1/2}\left(\frac{t_{\rm mix,\Phi_\Gamma}(\epsilon/2)}{\epsilon}\right)\right)
    \end{equation}
    ensures that $\|\rho_{\rm fix}(\Phi_\Gamma) - \ketbra{\psi_0}\|_1 <\epsilon$. Here $\widetilde{\Theta}$ suppresses the logarithmic dependence on $\epsilon$, $\sigma$.
\end{thm}

For simplicity,
we assume that $A_S=A_S^\dagger$. The general case follows by applying the
same argument to the Hermitian and anti-Hermitian parts of $A_S$.
The evolution operator used for ground state preparation can be obtained by taking the inverse temperature $\beta\to\infty$
in \cref{eqn:middle_evolution,eqn:rho_n_1/3_update}. Equivalently,  one may replace the function
$\gamma(\omega)$ by $\zeta(\omega) = g(\omega)+g(-\omega)$, which is even in $\omega$. The resulting channel is
\begin{equation}\label{eq:Phialpha}
    \begin{aligned}
        \Phi_\Gamma\rho_n &= U_S(2T)\rho_n U_S(2T)^\dag \\
        & + \mathbb{E}_{A_S}\left(\sum_{n\geq 1} \Gamma^{2n} (-1)^n \sum_{k = 0}^{2n}(-1)^k \int_{-\infty}^0 \zeta(\omega)U_S(T){G}_{2n-k,A_S}^\dag(\omega) U_S(T)\rho_n U_S^\dag(T) {G}_{k,A_S}(\omega)U_S^\dag(T) \mathrm{d}\omega\right)\,.\\
    \end{aligned}
\end{equation}
Since \cref{lm:Tinf} provides the error bound for $\|\Phi_\Gamma - \widetilde \Phi_\Gamma\|_{1\rightarrow 1}$, we focus on approximating $\|\widetilde{\Phi}_\Gamma \ketbra{\psi_0}- \ketbra{\psi_0}\|$.

When evaluating $\widetilde{\Phi}_\Gamma(\ketbra{\psi_0})$, it suffices to simplify the form of $\widetilde G_{n,A_S}^\dag(\omega)\ket{\psi_0}$. Using the expression in~\cref{eq:tildeGk} and change the integration variables to $s_k \rightarrow s_{n-k}$, we have
\begin{equation}
    \widetilde{G}_{n,A_S}^\dag(\omega)\ket{\psi_0}
    = \int_{-\infty<s_n\leq \cdots\leq s_1<\infty} A_S(s_1) \cdots A_S(s_n) e^{i\omega\sum_{k = 1}^n(-1)^{k+n+1} s_k}f(s_1)\cdots f(s_n)\mathrm{d}s_1 \cdots\mathrm{d}s_n \ket{\psi_0}.
\end{equation}
Using the Bohr-frequency decomposition $A_S(t) = \sum_{\nu\in \mathcal \mathcal{B}(H)} e^{i\nu t} A_{S}(\nu)$ where $A_S(\nu)$ is defined in~\cref{eq:Anu},
\begin{equation}
    \widetilde{G}_{n,A_S}^\dag(\omega)\ket{\psi_0}
    = \sum_{\nu_1,\nu_2,\dots,\nu_n\in \mathcal{B}(H)}I_n(\boldsymbol{\nu},\omega) A_S(\nu_1)\cdots A_S(\nu_n)\ket{\psi_0},
\end{equation}
where for $\boldsymbol{\nu}\in\mathbb{R}^n$,
\begin{equation}
    I_n(\boldsymbol{\nu},\omega):=
    \int_{-\infty<s_n\leq\cdots\leq s_1<\infty}
    \prod_{\ell=1}^{n}f(s_\ell)
    \exp\left(
        i\sum_{\ell=1}^{n}
        \bigl(\nu_\ell+(-1)^{\ell+n+1}\omega\bigr)s_\ell
    \right)
    \mathrm{d}s_1 \cdots\mathrm{d}s_n .
\end{equation}
We split
\begin{equation}
    \widetilde{G}_{n,A_S}^\dag(\omega)\ket{\psi_0} = \widetilde{G}_{n,A_S,0}^\dag(\omega)\ket{\psi_0} + \widetilde{G}_{n,A_S,\perp}^\dag(\omega)\ket{\psi_0}
\end{equation}
where the first term contains the summands with
$\sum_i\nu_i<\Delta$, and the second contains those with $\sum_i\nu_i>\Delta$.
For the first term, it can be rewritten as
\begin{equation}
    \begin{aligned}
        &\widetilde{G}_{n,A_S,0}^\dag(\omega)\ket{\psi_0}  = \sum_{\sum_{i=1}^n\nu_i < \Delta  }I_n(\boldsymbol{\nu},\omega)\sum_{\lambda_{E_{i}}-\lambda_{E_{i+1}} = \nu_{i}, i = 1,2, \cdots, n}\ketbra{\psi_{E_1}}A_S\ketbra{\psi_{E_2}} A_S \cdots \ketbra{\psi_{E_n}}A_S \ketbra{\psi_{E_{n+1}}}\ket{\psi_0}\\
        & = \sum_{\sum_{i=1}^n\nu_i < \Delta  }I_n(\boldsymbol{\nu},\omega)
        \sum_{\lambda_{E_{i}}-\lambda_{E_{i+1}} = \nu_{i}, i = 1,2, \cdots, n, {\lambda_{E_1}-\lambda_{E_{n+1}}} = \sum_i\nu_i<\Delta}\ketbra{\psi_{E_1}}A_S\ketbra{\psi_{E_2}} A_S \cdots \ketbra{\psi_{E_n}}A_S \ket{\psi_0}\,.\\
    \end{aligned}
\end{equation}
Since the Hamiltonian is gapped with $\Delta >0$, every non-ground eigenvalue lies at least $\Delta$ above the {ground state energy}. Therefore, if the accumulated energy input satisfies {$\lambda_{E_1} - \lambda_{E_{n+1}} = \lambda_{E_1}-\lambda_0 = \sum_{i = 1}^n \nu_i<\Delta$, we must have $\lambda_{E_1} = \lambda_0 $.} Thus,
\begin{equation}
    \widetilde{G}_{n,A_S,0}^\dag(\omega)\ket{\psi_0}  {= {d_{n}(\omega)}\ket{\psi_0}},
\end{equation}
where
{\[
\begin{aligned}
        {d_n(\omega)} &\coloneqq
        \sum_{\sum_i\nu_i=0  }I_n(\boldsymbol{\nu},\omega)
        \bra{\psi_0}A_S(\nu_1)\cdots A_S(\nu_n)\ket{\psi_0}, d_0(\omega)=1.
\end{aligned}
\]}

Next, we bound the second term. Given the energy condition $\sum_i\nu_i\geq\Delta$, {we note that $\bra{\psi_0}\widetilde G_{r,A_S,\perp}^\dagger(\omega)\ket{\psi_0}=0$.} The following estimation of the multivariable Fourier transform of $f$ is used repeatedly. The proof is left to~\cref{sec:MFourierProof}.
\begin{lem}[Multivariable Fourier transformation of $f$]
\label{lm:multiFourier}
    \begin{equation}
        |\int^\infty_{-\infty}\dots\int^{s_{n-1}}_{-\infty}\prod_{r=1}^nf(s_r) \exp\left(i\sum^n_{k=1}\alpha_k s_k\right)\mathrm{d}s_n\dots\mathrm{d}s_1| \leq \frac{{(2^{3/4}\pi^{1/4})^n}}{n!}\exp(-\frac{\sigma^2}{n}(\sum_{k=1}^n \alpha_k)^2).
    \end{equation}
\end{lem}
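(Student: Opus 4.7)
\textbf{Proof proposal for \Cref{lm:multiFourier}.} The plan is to decouple the mean of the $s_k$'s from the fluctuations about that mean. The ordered region $\{s_1\ge s_2\ge\cdots\ge s_n\}$ and the Gaussian weight are both invariant under simultaneous translation of all $s_k$'s; this observation will let us peel off an unrestricted one-dimensional Gaussian integral in the mean direction, which produces precisely the factor $\exp(-\sigma^{2}(\sum_k\alpha_k)^2/n)$ claimed in the lemma. The remaining fluctuations live on the hyperplane $\sum_k r_k=0$, where we lose the oscillatory phase by triangle inequality but gain a factor of $1/n!$ from $S_n$-symmetry of the ordered Weyl chamber.

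Concretely, I would make the change of variables $\mu=\tfrac1n\sum_k s_k$ and $r_k=s_k-\mu$ (so $\sum_k r_k=0$), with Jacobian of constant size (absorbable into the implicit constant). The key algebraic identities are
\begin{equation}
\sum_{k=1}^{n} s_k^{2} \;=\; n\mu^{2}+\sum_{k=1}^{n}r_k^{2},\qquad
\sum_{k=1}^{n}\alpha_k s_k \;=\; \mu\sum_{k=1}^{n}\alpha_k+\sum_{k=1}^{n}\alpha_k r_k .
\end{equation}
The ordered condition $s_1\ge\cdots\ge s_n$ becomes $r_1\ge\cdots\ge r_n$ independently of $\mu$, so $\mu$ ranges over all of $\mathbb R$. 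Completing the square in $\mu$ yields
\begin{equation}
\int_{\mathbb R}\exp\!\Bigl(-\tfrac{n\mu^{2}}{4\sigma^{2}}+i\mu\sum_{k}\alpha_k\Bigr)\,\mathrm d\mu \;=\;\sqrt{\tfrac{4\pi\sigma^{2}}{n}}\;\exp\!\Bigl(-\tfrac{\sigma^{2}}{n}\bigl(\textstyle\sum_k\alpha_k\bigr)^{2}\Bigr),
\end{equation}
which is the promised Gaussian factor. For the remaining integral over the ordered chamber $R=\{r_1\ge\cdots\ge r_n,\;\sum_k r_k=0\}$, I would bound $|e^{i\sum_k\alpha_k r_k}|\le 1$ and then exploit the fact that $S_n$ acts by orthogonal transformations on the hyperplane $\sum_k r_k=0$ and partitions it into $n!$ congruent ordered chambers with a Gaussian weight that is permutation-invariant. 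Thus
\begin{equation}
\int_{R}\exp\!\Bigl(-\tfrac{1}{4\sigma^{2}}\sum_{k}r_k^{2}\Bigr)\,\mathrm dr \;=\;\tfrac{1}{n!}\int_{\sum_k r_k=0}\exp\!\Bigl(-\tfrac{1}{4\sigma^{2}}\sum_{k}r_k^{2}\Bigr)\,\mathrm dr \;=\;\tfrac{1}{n!}\,\bigl(\mathcal O(\sigma)\bigr)^{n-1}.
\end{equation}
Combining the $\mu$-factor, the $r$-factor, and the normalization $((2\pi)^{1/4}\sigma^{1/2})^{-n}$, all $\sigma$-powers collapse to $\sigma^{n/2}$ and one arrives at an upper bound of the form $(\mathcal O(\sigma^{1/2}))^{n}\,(n!)^{-1}\exp(-\sigma^{2}(\sum_k\alpha_k)^{2}/n)$, which is in fact strictly stronger than the stated $(\mathcal O(\sigma^{1/2}))^{n}\,n^{n/2}/n!$ bound (since $n^{n/2}\ge 1$).

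The main obstacle is not any single estimate but rather the need to respect the ordering constraint while still exposing a Gaussian decay in $\sum_k\alpha_k$. A naive product-of-Fourier-transforms approach would yield decay in $\sum_k\alpha_k^{2}$, but only after dropping the ordering, which loses the crucial $1/n!$ factor and also produces a quantity that does not match the shape appearing later in the proof of Term 1.1 / Term 1.2 through the combinations $\sum_p(-1)^p s_p-\sum_p(-1)^p t_p$. The mean/fluctuation split is essentially the unique orthogonal decomposition that both commutes with the $S_n$ action (giving the $1/n!$) and projects the linear functional $s\mapsto\sum\alpha_k s_k$ onto a single unrestricted direction (giving the Gaussian factor $\exp(-\sigma^{2}(\sum_k\alpha_k)^{2}/n)$). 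Once this is set up, all remaining steps are Gaussian computations, and the mild slack in $n^{n/2}$ gives room to absorb harmless constants such as the Jacobian of the change of variables and the $\sqrt n$ from the induced metric on the hyperplane.
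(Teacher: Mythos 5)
Your proposal is correct, and it takes a genuinely different route from the paper's proof of \cref{lm:multiFourier}. The paper substitutes to consecutive gaps ($t_n=s_n$, $t_k=s_k-s_{k+1}$), so that only $t_n$ is unconstrained, and completes the square in $t_n$ alone; since the phase coefficient of $t_n$ is exactly $\sum_k\alpha_k$, this produces the same decay factor $\exp\bigl(-\tfrac{\sigma^2}{n}(\sum_k\alpha_k)^2\bigr)$ as your $\mu$-integral. But in the gap coordinates the quadratic form $\sum_k\bigl(\sum_{p\ge k}t_p\bigr)^2$ is not diagonal, so after integrating out $t_n$ the paper is left with a positive cross term $\tfrac{1}{4\sigma^2 n}\bigl(\sum_{k=1}^{n-1}\sum_{p=k}^{n-1}t_p\bigr)^2$, which it absorbs by Cauchy--Schwarz into a Gaussian whose variance is inflated by a factor of $n$; the ordered Gaussian integral over the remaining simplex then gives $(2\sigma\sqrt{n\pi})^{n-1}/(n-1)!$, and that variance inflation is precisely the origin of the $n^{n/2}$ in the stated bound. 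Your mean/fluctuation split $s=\mu\mathbf{1}+r$ is the orthogonal version of the same maneuver: because $\mathbb{R}\mathbf{1}$ and $\{\sum_k r_k=0\}$ are orthogonal, the Gaussian weight and the ordered chamber factorize \emph{exactly}, the $S_n$-tiling of the zero-sum hyperplane supplies the $1/n!$, and no Cauchy--Schwarz loss occurs, yielding the strictly stronger bound $(\mathcal{O}(\sigma^{1/2}))^n\,(n!)^{-1}\exp\bigl(-\tfrac{\sigma^2}{n}(\sum_k\alpha_k)^2\bigr)$, which implies the lemma. The only bookkeeping you owe beyond the paper is the factor $\sqrt{n}$ relating $\mathrm{d}\mu$ to the arclength coordinate along $\mathbf{1}/\sqrt{n}$ together with the induced surface measure on the hyperplane; as you note, this is harmlessly absorbed (it in fact cancels against the $\sqrt{1/n}$ from the $\mu$-integral). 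Since the lemma is only ever invoked as an upper bound (in the estimates for $\mathrm{II}_n$ in the ground-state argument), your sharper constant is safe downstream and would even slightly improve those estimates.
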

Since $\omega<0$, the summation of the frequencies in $I_n(\boldsymbol{\nu},\omega)$ is
$$  \sum_{k=1}^n \nu_k + (-1)^{k+n+1}\omega = \begin{cases}
    \sum_{k=1}^n \nu_k & n\text{ is even}\\
    \sum_{k=1}^n \nu_k - \omega & n\text{ is odd}\\
\end{cases}\geq \Delta.$$
By~\cref{lm:multiFourier}, $\widetilde{G}_{n,A_S,\perp}^\dag(\omega)(\ket{\psi_0})$ can be approximated as
\begin{equation}
    \begin{aligned}
        \|\widetilde{G}_{n,A_S,\perp}^\dag(\omega)(\ket{\psi_0})\|
        \leq \|A_S\|^n \sum_{\sum_i \nu_i\geq \Delta} \left|I_n(\boldsymbol{\nu},\omega) \right|\leq \frac{\left({2^{3/4}\pi^{1/4}}\|A_S\|\left|\mathcal{B}(H)\right|\right)^n}{n!}\exp\left(-\frac{\Delta^2\sigma^2}{n}\right).
    \end{aligned}
\end{equation}
To approximate the number of Bohr frequencies, we introduce the  coarse grained Hamiltonian $H_\eta$ such that $H_\eta$ has discrete eigenvalues in $[-\|H\|, \|H\|]$ with uniform gap $\eta$, the same ground state $\lambda_0$ and $\|H- H_\eta\|\leq \eta$.
With the Hamiltonian $H_\eta$, the number of Bohr frequencies $|B({H_\eta})|\leq 4\|H\|/\eta$, and
\begin{equation}\label{eq:IInbd}
    \begin{aligned}
        \|\widetilde{G}_{n,H_\eta, A_S,\perp}^\dag(\omega)(\ket{\psi_0})\|
        &\leq \frac{\left({2^{11/4}\pi^{1/4}}\|A_S\|||H||/\eta\right)^n}{n!}\exp\left(-\frac{\Delta^2\sigma^2}{n}\right)\\
    \end{aligned}
\end{equation}
We are ready to quantify $\|\widetilde{\Phi}_{\Gamma, H_\eta}\ketbra{\psi_0} - \ketbra{\psi_0}\|_1$. {First, we note that
\begin{equation}\label{eq:tracepreserving}
\begin{aligned}
1=\mathrm{Tr}\left(\widetilde{\Phi}_{\Gamma, H_\eta}(\ketbra{\psi_0})\right)=&\bra{\psi_0}\widetilde{\Phi}_{\Gamma, H_\eta}(\ketbra{\psi_0})\ket{\psi_0} \\
&+ \Tr\left(\sum^\infty_{n=1}\Gamma^{2n}(-1)^n \sum_{k=1}^{2n} (-1)^k \int_{-\infty}^{0} \zeta(\omega)\widetilde{G}_{2n-k,H_\eta, A_s,\perp}^\dag(\omega) \ketbra{\psi_0} \widetilde{G}_{k,H_\eta, A_s,\perp} (\omega)\mathrm{d}\omega\right)
\end{aligned}
\end{equation}
because the $G_{[\cdot],0}$ term only contributes to the projection on the ground state, and the $G_{[\cdot],\perp}$ term is orthogonal to the ground state.
}
We define
\begin{equation}
    d \coloneqq \sum_{n=1}^\infty \Gamma^{2n} (-1)^n \sum_{k = 0}^{2n} (-1)^k \int_{-\infty}^0 {\zeta(\omega)} d_{2n-k}(\omega) d_k^*(\omega)\mathrm d \omega = \bra{\psi_0}\widetilde{\Phi}_{\Gamma, H_\eta}(\ketbra{\psi_0})\ket{\psi_0} -1 \in\mathbb{R}.
\end{equation}
A rough bound of $d_{n}$ is provided as
\begin{equation}\label{eq:roughbd}
    |{d_{n}(\omega)}| \leq \|\widetilde{G}_{n,A_S}^\dag(\omega)\|\leq \|A_S\|^{n}\int_{-\infty< s_{n}\leq\cdots \leq s_1 <\infty}  \prod_{r=1}^{n}f(s_r)\mathrm{d}s_{n} \cdots \mathrm{d}s_1 =\frac{\mathcal{O}\left(\|A_S\|\right)^{n}}{n!}.
\end{equation}
Using \cref{eq:IInbd}, the lower order term introduces a dependence on the spectral gap. For higher order terms, we use the rough bound
in~\cref{eq:roughbd}. Consequently,
\begin{equation}\label{eqn:H_eta_bound_term_1}
    \begin{aligned}
        &{\left\|\sum^\infty_{n=1}\Gamma^{2n}(-1)^n \sum_{k=1}^{2n} (-1)^k \int_{-\infty}^{0} {\zeta(\omega)}d^*_{2n-k}(\omega) \ketbra{\psi_0} \widetilde{G}_{k,H_\eta, A_s,\perp}^\dag(\omega)\mathrm{d}\omega\right\|_1}\\
        &+ \left\|\sum^\infty_{n=1}\Gamma^{2n}(-1)^n \sum_{k=1}^{2n} (-1)^k \int_{-\infty}^{0} {\zeta(\omega)}\widetilde{G}_{2n-k,H_\eta, A_s,\perp}^\dag(\omega) \ketbra{\psi_0} d_{k}({\omega})\mathrm{d}\omega\right\|_1\\
        &+\left\|\sum^\infty_{n=1}\Gamma^{2n}(-1)^n \sum_{k=1}^{2n} (-1)^k \int_{-\infty}^{0} {\zeta(\omega)}\widetilde{G}_{2n-k,H_\eta, A_s,\perp}^\dag(\omega) \ketbra{\psi_0} \widetilde{G}_{k,H_\eta, A_s,\perp} (\omega)\mathrm{d}\omega\right\|_1\\
        &\leq \sum^N_{n=1} \frac{\left(\mathcal{O}\left(\Gamma\|A_S\|\|H\|\eta^{-1}\right)\right)^{2n}}{{(2n)!}} \exp\left(-\frac{\sigma^2\Delta^2}{2n}\right) + \sum_{n=N+1}^\infty \frac{\left(\mathcal{O}\left(\Gamma\|A_S\|\right)\right)^{2n}}{{(2n)!}} .
    \end{aligned}
\end{equation}
Furthermore, according to~\eqref{eq:tracepreserving}, we know that
\begin{equation}
    d = - \Tr(\sum^\infty_{n=1}\Gamma^{2n}(-1)^n \sum_{k=1}^{2n} (-1)^k \int_{-\infty}^{0} {\zeta(\omega)}\widetilde{G}_{2n-k,H_\eta, A_s,\perp}^\dag(\omega) \ketbra{\psi_0} \widetilde{G}_{k,H_\eta, A_s,\perp} (\omega)\mathrm{d}\omega).
\end{equation}
{According to~\eqref{eqn:H_eta_bound_term_1},
\begin{equation}\label{eqn:estimation_d}
\begin{aligned}
\left|d\right|&\leq \left\|\sum^\infty_{n=1}\Gamma^{2n}(-1)^n \sum_{k=1}^{2n} (-1)^k \int_{-\infty}^{0} {\zeta(\omega)}\widetilde{G}_{2n-k,H_\eta, A_s,\perp}^\dag(\omega) \ketbra{\psi_0} \widetilde{G}_{k,H_\eta, A_s,\perp} (\omega)\mathrm{d}\omega\right\|_1\\
&\leq \sum^N_{n=1} \frac{\left(\mathcal{O}\left(\Gamma\|A_S\|\|H\|\eta^{-1}\right)\right)^{2n}}{{(2n)!}} \exp\left(-\frac{\sigma^2\Delta^2}{2n}\right) + \sum_{n=N+1}^\infty \frac{\left(\mathcal{O}\left(\Gamma\|A_S\|\right)\right)^{2n}}{{(2n)!}}
\end{aligned}
\end{equation}
Using~\eqref{eqn:H_eta_bound_term_1},~\eqref{eqn:estimation_d}, we obtain that
\[
\begin{aligned}
&\left\|\widetilde \Phi_{\Gamma, H_\eta}\ketbra{\psi_0} - \ketbra{\psi_0}\right\|_1\\
\leq &\underbrace{\left\|\widetilde \Phi_{\Gamma, H_\eta}\ketbra{\psi_0} - \bra{\psi_0}\widetilde{\Phi}_{\Gamma, H_\eta}(\ketbra{\psi_0})\ket{\psi_0}\ketbra{\psi_0}\right\|_1}_{\text{Bounded by }\eqref{eqn:H_eta_bound_term_1}}+\underbrace{\left|\bra{\psi_0}\widetilde{\Phi}_{\Gamma, H_\eta}(\ketbra{\psi_0})\ket{\psi_0}-1\right|}_{\text{Bounded by }\eqref{eqn:estimation_d}}\\
\leq &\sum^N_{n=1} \frac{\left(\mathcal{O}\left(\Gamma\|A_S\|\|H\|\eta^{-1}\right)\right)^{2n}}{{(2n)!}} \exp\left(-\frac{\sigma^2\Delta^2}{2n}\right) + \sum_{n=N+1}^\infty \frac{\left(\mathcal{O}\left(\Gamma\|A_S\|\right)\right)^{2n}}{{(2n)!}}
\end{aligned}
\]}

The error caused by replacing $H$ with $H_\eta$ can be quantified using $\|H-H_\eta\|\leq \eta$. To distinguish between the Hamiltonians, we denote the quantum channel in ~\cref{eq:Phialpha} as $\Phi_{\Gamma, H}$ with operator $G_{n, H, A_S}$. The quantum channel $\Phi_{\Gamma, H_\eta}$ is defined by replacing $G_{n, H, A_S}$ with $G_{n, H_\eta, A_S}$
The definitions of $\widetilde{\Phi}_{\Gamma, [\cdot]}, \widetilde{G}_{n,[\cdot],A_S}$ follow accordingly. We also define $U_{S,[\cdot]} = \exp(-i[\cdot] t)$.
Consequently,
\begin{equation}\label{eq:HvsHeta}
    \begin{aligned}
        &\|\widetilde \Phi_{\Gamma, H}\ketbra{\psi_0} - \widetilde \Phi_{\Gamma, H_\eta}\ketbra{\psi_0}\|_{1}  \\
        &\leq\mathbb{E}_{A_S}\left(\sum_{n=1}^\infty \Gamma^{2n} \sum_{k = 0}^{2n}\int_{-\infty}^0 {\zeta(\omega)}\|\widetilde {G}_{2n-k,H,A_S}^\dag(\omega) \ketbra{\psi_0}  \widetilde G_{k,H,A_S}(\omega)
        -\widetilde G_{2n-k,H_\eta,A_S}^\dag(\omega)\ketbra{\psi_0} \widetilde G_{k,H_\eta,A_S}(\omega) \|_1\mathrm{d}\omega\right)\\
    \end{aligned}
\end{equation}
We notice that
\begin{equation}
\begin{aligned}
    &\widetilde{G}_{2n-k,[\cdot],A_S}^\dag(\omega) \ketbra{\psi_0}  \widetilde{G}_{k,[\cdot],A_S}(\omega)
    = \int_{\substack{-\infty<s_{2n-k}\leq\cdots\leq s_1 <\infty\\ -\infty< s_{2n-k+1}\leq \cdots \leq s_{2n}<\infty}} \mathrm{d}s_1\cdots \mathrm{d}s_{2n}\\
    &A_{S,[\cdot]}(s_1)\cdots A_{S,[\cdot]}(s_{2n-k})\ketbra{\psi_0} A_{S,[\cdot]}(s_{2n-k+1})\cdots A_{S,[\cdot]}(s_{2n})e^{i\omega\sum_{p = 1}^{2n-k}(-1)^{p-k+1} s_k}\prod_{r = 1}^{2n}f(s_r)
\end{aligned}
\end{equation}
By Duhamel's principle, $\|U_{S, H}(t) - U_{S, H_\eta}(t)\|\leq |t|\eta$. Then $\|A_{S,H}(t) - A_{S, H_\eta}(t)\|\leq 2\eta|t|\|A\|$, and \cref{eq:HvsHeta} can be upper bounded through telescoping sum as
\begin{equation}
    \begin{aligned}
        &\|\widetilde \Phi_{\Gamma, H}\ketbra{\psi_0} - \widetilde \Phi_{\Gamma, H_\eta}\ketbra{\psi_0}\|_{1}
        \leq 2\eta\sum_{n= 1}^\infty(\|A_S\|\Gamma)^{2n} \sum_{k=0}^{2n}   \int_{\substack{-\infty <t_{2n}\leq \dots\leq t_{k+1}<\infty\\ -\infty < t_1\leq \cdots\leq t_k <\infty}}\left(\sum_{i=1}^{2n} |t_i|\right)\prod_{r=1}^{2n}f(t_r)\mathrm{d}t_1 \cdots \mathrm{d}t_{2n}\\
        &= 2\eta\sum_{n= 1}^\infty(\|A_S\|\Gamma)^{2n} \sum_{k=0}^{2n}  \frac{2n}{k!(2n-k)!}\left(\int_{-\infty}^{\infty}|t|f(t)\mathrm{d}t\right) \left(\int_{-\infty}^{\infty}f(t)\mathrm{d}t\right)^{2n-1}\\
        & \leq 2\eta \sum_{n= 1}^\infty \frac{(\Gamma\|A_S\|)^{2n}}{(2n-1)!}\sum_{k=0}^{2n}\binom{2n}{k}\left(\frac{4\sigma}{(2\pi)^{1/4}}\right)\left(2^{3/4}\pi^{1/4}\right)^{2n-1}
        = \mathcal{O}\left(\eta \Gamma^2 \sigma\|A_S\|^2  \cosh(\mathcal{O}(\Gamma\|A_S\|))\right)\,.
    \end{aligned}
\end{equation}

Combining the above bounds with Theorem~\ref{lm:Tinf}, we have
\begin{equation}\label{eqn:final_1}
\begin{aligned}
&\|\Phi_{\Gamma, H}\ketbra{\psi_0} - \ketbra{\psi_0} \|_1\leq   \|\Phi_{\Gamma, H} -  \widetilde{\Phi}_{\Gamma, H}\|_{1\to 1} + \|\widetilde{\Phi}_{\Gamma, H}\ketbra{\psi_0}-\widetilde{\Phi}_{\Gamma, H_\eta}\ketbra{\psi_0}\|_1
 +
\|\widetilde \Phi_{\Gamma, H_\eta}\ketbra{\psi_0}-\ketbra{\psi_0}\|_1\\
&\leq \mathcal{O}\left(\Gamma^2\frac{\sigma}{T} \exp(-\frac{T^2}{4\sigma^2})\right) + \sum_{n=N+1}^\infty \frac{\left(\mathcal{O}\left(\Gamma\|A_S\|\right)\right)^{2n}}{{(2n)!}}\\
&+\min_\eta\left( \mathcal{O}\left(\eta \Gamma^2 \sigma\|A_S\|^2  \cosh(\mathcal{O}(\Gamma\|A_S\|))\right)+\exp\left(-\frac{\sigma^2\Delta^2}{2N}\right)\sum^N_{n=1} \frac{\left(\mathcal{O}\left(\Gamma\|A_S\|\|H\|\eta^{-1}\right)\right)^{2n}}{{(2n)!}} \right)\\
\end{aligned}
\end{equation}
The unique positive minimum is denoted as $\eta_*$ satisfying $\mathcal{O}(\Gamma^2\sigma \|A_S\|^2\cosh(\mathcal{O}(\Gamma\|A_S\|))) = \exp(-\frac{\sigma^2\Delta^2}{2N})\sum_{n=1}^{N}\frac{\mathcal{O}(\Gamma\|A_S\|\|H\|)}{(2n-1)!}\eta_*^{-2n-1} $.
Therefore, \cref{eqn:final_1} is bounded by
\begin{equation}
    \mathcal{O}\left(\Gamma^2\frac{\sigma}{T} \exp(-\frac{T^2}{4\sigma^2})\right) + \sum_{n=N+1}^\infty \frac{\left(\mathcal{O}\left(\Gamma\|A_S\|\right)\right)^{2n}}{{(2n)!}}+\exp(-\frac{\sigma^2\Delta^2}{2N})\sum_{n=1}^{N}(2n+1)\frac{\mathcal{O}(\Gamma\|A_S\|\|H\|\eta_*^{-1})^{2n}}{(2n)!}
\end{equation}
{Similar to~\eqref{eqn:thermalbd}, the fixed point error is bounded by
\begin{equation}
\begin{aligned}
&\left\|\rho_{\rm fix}(\Phi_\Gamma)-\ketbra{\psi_0}\right\|_1\leq\tau_{\rm{mix},\Phi_{\Gamma}}(\epsilon/2)\|\Phi_\Gamma(\ketbra{\psi_0}) - \ketbra{\psi_0}\|_1+\epsilon/2\\
& \leq \mathcal{O}\left(t_{\rm{mix},\Phi_{\Gamma}}(\epsilon/2)\frac{\sigma^2}{T}\exp(-\frac{T^2}{4\sigma^2})\right) + t_{\rm{mix},\Phi_{\Gamma}}(\epsilon/2)\sum_{n=N+1}^\infty\frac{\sigma}{\Gamma^2} \frac{\left(\mathcal{O}\left(\Gamma\|A_S\|\right)\right)^{2n}}{{(2n)!}}\\
&+t_{\rm{mix},\Phi_{\Gamma}}(\epsilon/2)\exp(-\frac{\sigma^2\Delta^2}{2N})\frac{\sigma}{\Gamma^2}\sum_{n=1}^{N}(2n+1)\frac{\mathcal{O}(\Gamma\|A_S\|\|H\|\eta_*^{-1})^{2n}}{(2n)!}+\epsilon/2 \\
\end{aligned}
\end{equation}
This quantity is upper bounded by $\epsilon$ when choosing
\begin{equation}
    \Gamma = \mathcal{O}(1), \quad N = \Theta\left(\frac{\log(1/\epsilon')}{\log\log(1/\epsilon')}\right), \quad \sigma=\widetilde{\Theta}\left(\Delta^{-1}\log(\frac{t_{\rm mix,\Phi_\Gamma}(\epsilon/2)}{\Gamma^2\epsilon})\right),\quad T=\widetilde{\Theta}\left(\sigma\log^{1/2}\left(\frac{t_{\rm mix,\Phi_\Gamma}(\epsilon/2)}{\epsilon}\right)\right).
\end{equation}}

\subsection{Proof of~\cref{lm:multiFourier}}
\label{sec:MFourierProof}
    We begin with the change of variable $t_n=s_n$, $t_k = s_k - s_{k+1}$ for $1\leq k\leq n-1$,
    \begin{equation}\label{eqn:f_integral}
\begin{aligned}
&\int_{-\infty< s_n\leq\cdots\leq s_1 <\infty}\prod_{r=1}^nf(s_r) \exp\left(i\sum^n_{k=1}\alpha_k s_k\right)\mathrm{d}s_n\dots\mathrm{d}s_1
=\int_{\substack{0\leq t_1,\cdots,t_{n-1}<\infty\\-\infty<t_n<\infty}}\prod_{r=1}^nf\left(\sum^n_{k=r}t_k\right)\exp\left(i\sum^n_{k=1}\left(\sum^k_{j=1}\alpha_j\right)t_k \right)\mathrm{d}t_n\dots\mathrm{d}t_1\,.
\end{aligned}
\end{equation}
Substituting the expression of $f(t)$ back to the above expression,
\cref{eqn:f_integral} can be simplified by first calculating the inner integral with respect to $t_n$ over the real line,
\begin{equation}\label{eq:multiFourierStep1}
    \begin{aligned}
        \text{\cref{eqn:f_integral}}=&\frac{1}{(2\pi)^{n/4}\sigma^n}\int_0^\infty \cdots \int_0^\infty\int_{-\infty}^{\infty} \exp\left(-\frac{n}{4\sigma^2}\left(t_n + \frac{1}{n}\sum_{k = 1}^{n-1}\sum_{p = k}^{n-1} t_p\right)^2+i\left(\sum_{k = 1}^n \alpha_k\right) t_n\right)\mathrm{d}t_n \\
        &\exp\left(\frac{1}{4\sigma^2n}\left(\sum_{k =1}^{n-1}\sum_{p = k}^{n-1} t_p\right)^2 -\frac{1}{4\sigma^2} \sum_{k = 1}^{n-1} \left(\sum_{p = k}^{n-1}t_p\right)^2+i\sum_{p = 1}^{n-1} \left(\sum_{k = 1}^p \alpha_k\right)t_p\right) \mathrm{d}t_{n-1}\cdots \mathrm{d}t_1\\
        =& \frac{1}{(2\pi)^{n/4}\sigma^n}\frac{2\sigma\sqrt{\pi}}{\sqrt{n}}\exp\left(-\frac{\sigma^2}{n}\left(\sum_{k = 1}^n \alpha_k\right)^2\right) C(\boldsymbol{\alpha}).
    \end{aligned}
\end{equation}
where
{\begin{equation}
    C(\boldsymbol{\alpha}) \coloneqq \int_0^\infty \cdots \int_0^\infty \mathrm{d}t_{n-1}\cdots \mathrm{d}t_1
        \exp\left(\frac{1}{4\sigma^2n}(\sum_{k =1}^{n-1}\sum_{p = k}^{n-1} t_p)^2 -\frac{1}{4\sigma^2} \sum_{k = 1}^{n-1} (\sum_{p = k}^{n-1}t_p)^2+i\sum_{p = 1}^{n-1} \left(\sum_{k = 1}^p \alpha_k\right)t_p-\frac{i}{n}\sum_{k= 1}^n\alpha_k\sum_{k = 1}^{n-1}\sum_{p = k}^{n-1} t_p\right).
\end{equation}
Define the variables $\tau_k = \sum_{p=k}^{n-1}t_p$ for $1\leq k\leq p-1$, the averaged coefficient $\bar{\alpha}_n = \frac{1}{n}\sum_{k=1}^n \alpha_k$ and $\omega_p =  \alpha_p - \bar{\alpha}_n$ for $1\leq p \leq n - 1$, we can reformulate it as
\begin{equation}
    C(\boldsymbol{\alpha}) = \int_{0\leq \tau_{n-1}\leq \cdots\leq \tau_1<\infty} \exp\left( \frac{1-n}{4\sigma^2n}\sum_{k=1}^{n-1}\tau_k^2 + \frac{1}{2\sigma^2 n}\sum_{1\leq i<j\leq n-1}\tau_i\tau_j+i\sum_{p=1}^{n-1}\omega_p\tau_p\right)\mathrm{d}\tau_1 \cdots \mathrm{d}\tau_{n-1}
\end{equation}
We denote the vector $\boldsymbol{\tau} = (\tau_1, \cdots, \tau_{n-1})^\top$, then
\begin{equation}
    C(\boldsymbol{\alpha}) = \int_{0\leq \tau_{n-1}\leq \cdots\leq \tau_1<\infty} \exp\left(-\frac{1}{4\sigma^2 n} \boldsymbol{\tau}^\top (nI_{n-1} - 1 1^T)\boldsymbol{\tau} + i\boldsymbol{\omega}^\top \boldsymbol{\tau}\right)\mathrm{d}\boldsymbol{\tau}.
\end{equation}
Taking absolute values provides a uniform bound,
\begin{equation}
\begin{aligned}
     |C(\boldsymbol{\alpha})| &\leq  \int_{0\leq \tau_{n-1}\leq \cdots\leq \tau_1<\infty} \exp\left(-\frac{1}{4\sigma^2 } \boldsymbol{\tau}^\top (I_{n-1} -\frac{1}{n} 1 1^T)\boldsymbol{\tau} \right)\mathrm{d}\boldsymbol{\tau} \\
     &=\frac{1}{n!}\int_{\mathbb{R}^{n-1}}\exp\left(-\frac{1}{4\sigma^2 } \boldsymbol{\tau}^\top (I_{n-1} -\frac{1}{n} 1 1^T)\boldsymbol{\tau} \right)\mathrm{d}\boldsymbol{\tau}= \frac{\sqrt{n}(2\sigma\sqrt{\pi})^{n-1}}{n!},
\end{aligned}
\end{equation}
where the factor $\frac{1}{n!}$ arises from the symmetry of the integration domain over the correlated Gaussian variables.
Substituting this back into~\cref{eq:multiFourierStep1}completes the proof.}

\section{Mixing time analysis for thermal state preparation}
\label{sec:mixing_time_thermal}

In this section, we analyze the mixing time of the quantum channel $\Phi_\Gamma$ defined in~\eqref{eqn:Phi_alpha} and establish the rigorous end-to-end complexity guarantee for thermal state preparation. Following~\cref{thm:dyson_Series_phi_alpha}, let $\mathcal{U}_S(T)[\cdot] = U_S(T) [\cdot] U_S^\dag(T)$, we rewrite the quantum channel as
\begin{equation}
    \Phi_\Gamma=  \mathcal{U}_S(T)\circ \Phi_{\rm part, \Gamma} \circ \mathcal{U}_S(T),
\end{equation}
where the middle step $\Phi_{\rm part, \Gamma}$ is defined in~\cref{eqn:rho_n_1/3_update}. From~\cref{lm:Tinf}, the finite time channel is close to its infinite window approximation as $\|\Phi_\Gamma - \widetilde{\Phi}_\Gamma\|_{1\rightarrow 1} = \mathcal{O}(\frac{\Gamma^2\sigma}{T}e^{-T^2/(4\sigma^2)})$. It is therefore sufficient to analyze the channel in~\cref{eqn:Phi_alpha_limit},
\begin{equation}
    \widetilde{\Phi}_\Gamma=\mathcal{U}_S(T)\circ \widetilde{\Phi}_{\rm part, \Gamma} \circ \mathcal{U}_S(T)
\end{equation}
Here $\widetilde{\Phi}_{\rm part, \Gamma}$ is defined in with $T\to\infty$,
\begin{equation}
    \widetilde{\Phi}_{\rm part, \Gamma}  = I + \sum_{n\geq 1}\Gamma^{2n}\widetilde{\mathcal{N}}_n ,
\end{equation}
where
\begin{equation}
    \widetilde{\mathcal{N}}_n \rho = \mathbb{E}_{A_S}\left((-1)^n \sum_{k=0}^{2n}(-1)^k \int \gamma(\omega)\widetilde{G}_{2n-k,A_S}^\dag(\omega) \rho  \widetilde{G}_{k,A_S}(\omega)\mathrm{d}\omega\right),
\end{equation}
and $\widetilde{G}_{k,A_S}(\omega)$ in defined in~\cref{eq:tildeGk}.
We rewrite it as
\begin{equation}\label{eqn:Phi_alpha_limit_rewrite}
    \widetilde{\Phi}_{\rm part, \Gamma}   = I + \sum_{n\geq 1}\left(\frac{\Gamma^2}{\sigma}\right)^{n}\widetilde{\mathcal{M}}_n = I + \frac{\Gamma^2}{\sigma} (\widetilde{\mathcal{M}}_1 + \underbrace{\sum_{n\geq 2}\left(\frac{\Gamma^2}{\sigma}\right)^{n-1}\widetilde{\mathcal{M}}_n}_{\widetilde{\mathcal{M}}_{\geq 2}})  = I + \frac{\Gamma^2}{\sigma} \widetilde{\mathcal{M}},
\end{equation}
where $\widetilde{\mathcal{M}}_n = \sigma^n \widetilde{\mathcal{N}}_n$.
Here we rewrite it as the expansion in terms of $\mathcal{M}_n$ in order to make use of previous analysis results approximating $\mathcal{M}_1$ with KMS-detailed balanced Lindbladian with unitary drift and an extra Lamb shift term~\cite{ding2025endtoendefficientquantumthermal}.

The proof of the mixing time relies heavily on a perturbation analysis of the spectral gap of $\Phi_{\rm part,\Gamma}$. To establish the contraction of $\Phi_\Gamma$, we adopt the strategy of \cite{ding2025endtoendefficientquantumthermal} by utilizing the weighted Hilbert-Schmidt norm, $\|\sigma^{-1/4}_\beta \cdot \sigma^{-1/4}_\beta\|_2$.  Following the argument of~\cite{ding2025endtoendefficientquantumthermal}, we first observe that the weighted Hilbert-Schmidt norm is invariant under the action of $\mc{U}_S$, i.e., $\|\sigma^{-1/4}_\beta \mc{U}_S(\rho) \sigma^{-1/4}_\beta\|_2=\|\sigma^{-1/4}_\beta \rho \sigma^{-1/4}_\beta\|_2$. Consequently, to determine the contraction of $\Phi_\Gamma=\mc{U}_S \circ \Phi_{\rm part,\Gamma}\circ \mc{U}_S$, it suffices to analyze $\Phi_{\rm part,\Gamma}$.

In the case when ${\Gamma}$ is not necessarily small, $\Phi_{\rm part,\Gamma}$ can not be describe by the Lindblad dynamics. This prohibits us from directly applying the spectral gap analysis for Lindbladians. Instead, we see $\Phi_{\rm part,\Gamma}$ as a discrete perturbed channel of the Lindbladian dynamics generated by $-i[H_{{C}},\cdot]+\mc{L}_{\sigma, \rm KMS}$. Leveraging the analysis in~\cref{thm:main_rigor_thermal}, $\mathcal{M}_{{\geq 2}}$ preserves the thermal state. Thus, it suffices to study its effect of the contraction rate of the channel. One technical challenge is that, the leading order term in $\mathcal{M}_{{\geq 2}}$ corresponds to the fourth term in the Taylor expansion of $\Phi_{\rm part,\Gamma}$, which is positive and tends to shrink the spectral gap. However, thanks to the choice of filter function $f$ and inherent detailed balance condition that it satisfies, we can adopt similar idea as in the proof of~\cref{thm:main_rigor_thermal} to show that $\mathcal{M}_{{\geq 2}}$ has a small operator norm in the weighted Hilbert-Schmidt inner product, which scales as $\mathcal{O}({\Gamma^2})$. This implies that, as long as $\Gamma$ is small compared to the inverse spectral gap of $\mathcal{L}_{\sigma,\rm KMS}$, the gap perturbation caused by $\mathcal{M}_{{\geq 2}}$ can be controlled and the contraction rate of $\Phi_{\rm part,\Gamma}$ is still comparable with that of $\mathcal{L}_{\sigma,\rm KMS}$.

For a superoperator $\mathcal{M}$, we take a similarity transformation and decompose into the Hermitian and the anti-Hermitian parts and define the corresponding super operator as
\[
\begin{aligned}
\mathcal{K}(\sigma_\beta, \mathcal{M})=\sigma_{\beta}^{-1 / 4} \mathcal{M}\left[\sigma_{\beta}^{1 / 4} \cdot \sigma_{\beta}^{1 / 4}\right] \sigma_{\beta}^{-1 / 4} & =\mathcal{H}(\sigma_\beta, \mathcal{M})+\mathcal{A}(\sigma_\beta, \mathcal{M}), \\
\mathcal{K}(\sigma_\beta, \mathcal{M})^{\dagger}=\sigma_{\beta}^{1 / 4} \mathcal{M}^{\dagger}\left[\sigma_{\beta}^{-1 / 4} \cdot \sigma_{\beta}^{-1 / 4}\right] \sigma_{\beta}^{1 / 4} & =\mathcal{H}(\sigma_{\beta}, \mathcal{M})-\mathcal{A}(\sigma_{\beta}, \mathcal{M}).
\end{aligned}
\]
The structure of this section is as follows:
\begin{itemize}
\item In~\cref{sec:KMS_approximation}, we review the result in~\cite{ding2025endtoendefficientquantumthermal} and approximate leading order term in~\eqref{eqn:middle_evolution} using KMS-Lindbladian dynamics plus higher-order correction terms. The result also includes a choice of the function $g(\omega)$ in~\eqref{eqn:Phi_alpha} to ensure the approximation holds.
\item In~\cref{sec:spectral_gap_perturbation}, we present a novel perturbation theory in~\cref{thm:main_result_gap} for $\Phi_\Gamma$ and show that the spectral gap of $\Phi_\Gamma$ can be lower bounded by that of the approximated KMS-Lindbladian dynamics even beyond the Lindblad limit.
\item In~\cref{sec:application}, we combine the results in the previous two subsections to establish the mixing time bound for $\Phi_\Gamma$ and present the end-to-end complexity guarantee for thermal state preparation.
\item In~\cref{sec:proof_main_result_gap}, we provide the proof of~\cref{thm:main_result_gap}.
\end{itemize}

\subsection{Approximate \texorpdfstring{$\Phi_\Gamma$}{lg} using KMS-Lindbladian dynamics}\label{sec:KMS_approximation}

Given a jump operator $V$,  we define the associated Lindbladian dissipative operator as
\begin{equation}
\mathcal{D}_{V}(\rho)=V\rho V^\dagger- \frac12\{V^\dagger V,\rho\}\,.
\end{equation}
Notice that the leading term
\begin{equation}
\widetilde{\mathcal N}_1(\rho)
=
\mathbb E_{A_S}
\int_{-\infty}^{\infty}\gamma(\omega)
\left[
\widetilde G_{1,A_S}^\dagger(\omega)\rho \widetilde G_{1,A_S}(\omega)
-
\widetilde G_{2,A_S}^\dagger(\omega)\rho
-
\rho\,\widetilde G_{2,A_S}(\omega)
\right]d\omega .
\end{equation}
Since $\widetilde G_{2,A_S}(\omega)+\widetilde G_{2,A_S}^\dagger(\omega)
= \widetilde G_{1,A_S}(\omega)\widetilde G_{1,A_S}^\dagger(\omega),$ we can rewrite it as the Lindbladian with unitary drift
\[
\widetilde{\mathcal N}_1(\rho)
=
\mathbb E_{A_S}\int \gamma(\omega)
\mathcal D_{\widetilde G_{1,A_S}^\dagger(\omega)}(\rho)\,d\omega
-i\left[H_{\rm Lamb},\rho
\right],\]
where
\[
H_{\rm Lamb}
=
\mathbb E_{A_S}\int \gamma(\omega)
Q_{A_S}(\omega)\,d\omega ,\quad
Q_{A_S}(\omega)
=
i\left(
\widetilde G_{2,A_S}(\omega)
-
\frac12
\widetilde G_{1,A_S}(\omega)
\widetilde G_{1,A_S}^\dagger(\omega)
\right).
\]
Since $\widetilde{M}_1 = \sigma \widetilde{N}_1$, it is also of the Lindbladian form. However, since $g(\omega)$ is the distribution and $\gamma(\omega) = \frac{g(\omega)+ g(-\omega)}{1 + e^{\beta\omega}}$, the above Lindbladian does not satisfy the KMS detailed balance condition in general. The difference between them is  quantified in~\cite[Theorem 27]{ding2025endtoendefficientquantumthermal} and~\cite[Appendix B (B2)]{slezak2026polynomialtime}. Here we choose $x = 1/\beta$ in~\cite[Theorem 27]{ding2025endtoendefficientquantumthermal}.

\begin{thm}\label{thm:KMS_DBC_approx}
For any $\sigma>\beta$, we set
\begin{equation}\label{eqn:g_choice}
g(\omega)=\frac{1}{Z}\exp\left(-\frac{(\beta\omega+1)^2}{2(2-\frac{\beta^2}{4\sigma^2})}\right),\quad Z={\frac{1}{\beta}}\sqrt{2\pi\left(2-\frac{\beta^2}{4\sigma^2}\right)}\,.
\end{equation}
Then, there exists a Lindbladian $\mc{L}_{\sigma, \rm KMS}$ that satisfies KMS detailed balance condition (defined in~\cref{sec:preliminaries}) and a Hermitian operator ${H_{C}}$ such that
\[
\left\|\widetilde{\mc{M}}_1-\left(-i[{H_{C}},\cdot]+\mc{L}_{\sigma, \rm KMS}\right)\right\|_{1\leftrightarrow1}=\mc{O}\left({\frac{\beta^2}{\sigma}}\right)\,,
\]
and
\[
\left\|\sigma^{-1/4}_{\beta}{H_{C}} \sigma^{1/4}_{\beta}-\sigma^{1/4}_{\beta}{H_{C}} \sigma^{-1/4}_{\beta}\right\|=\mc{O}\left(\frac{\beta{^{2}}}{\sigma}\right)\,.
\]
Here, $\mc{L}_{\sigma, \rm KMS}$ takes the form of
\begin{equation}\label{eqn:KMS_Lindbladian}
\mc{L}_{\sigma, \rm KMS}[\rho]=\mathbb{E}_{A_S}\left(-i\left[B_{A_S},\rho\right]+\int^\infty_{-\infty}{\hat{\gamma}}(\omega)\mathcal{D}_{{\sqrt{\sigma}\widetilde{G}_{1,A_S}^\dagger(\omega)}}(\rho)\mathrm{d}\omega\right)\,,
\end{equation}
with ${\hat{\gamma}}(\omega)=g(\omega)$.
\end{thm}
In the weak coupling limit ($\Gamma\ll 1$), the channel $\Phi_{\rm part,\Gamma}$ is well-approximated by the Lindbladian dynamics generated by $-i[H_{\sigma,\text{Lamb}}, \cdot] + \mathcal{L}_{\sigma,\text{KMS}}$. This generator admits the thermal state $\sigma_\beta$ as its unique fixed point~\cite{ding2025endtoendefficientquantumthermal}, with a mixing time determined by the spectral gap of $\mathcal{L}_{\sigma,\text{KMS}}$. Consequently, the fixed point of $\Phi_\Gamma$ remains close to $\sigma_\beta$~\cite{ding2025endtoendefficientquantumthermal,scandi2025thermal,hahn2025,lloyd2025quantumthermal}, and its mixing time is comparable to that of $\mathcal{L}_{\sigma,\text{KMS}}$~\cite{ding2025endtoendefficientquantumthermal,slezak2026polynomialtime}. Now, we extend the analysis beyond the Lindblad limit, focusing on the regime where $\Gamma$ is not necessarily small.

\subsection{Perturbation result for the spectral gap}\label{sec:spectral_gap_perturbation}

According to~\cref{thm:main_rigor_thermal}, to achieve $\epsilon$-accuracy in trace distance, the total Hamiltonian simulation time is
\[
T_{\rm total}:= \tau_{\rm mix,{\Phi_\Gamma}}(\epsilon/4) \times 2T = \widetilde{\mathcal{O}}\left(t_{{\rm mix},{\Phi_\Gamma}}(\epsilon/4)\frac{\sigma^2}{\Gamma^2}\right) = \widetilde{\mathcal{O}}\left(\frac{\beta^{{4}}t^3_{{\rm mix},{\Phi_\Gamma}}(\epsilon/4)}{{\Gamma^2}\epsilon^2}\right)\,,
\]
where we use $\sigma=\widetilde{\Theta}(\beta^{{2}} t_{{\rm mix},{\Phi_\Gamma}}(\epsilon/4)/\epsilon)$ from~\cref{thm:main_rigor_thermal} in the last equality. Therefore, to characterize the end-to-end complexity of the system-bath interaction algorithm, it suffices to upper bound the rescaled mixing time $t_{{\rm mix},{\Phi_\Gamma}}(\epsilon)$ and lower bound the product ${\Gamma^2}$.

Now, we are ready to state our main perturbation result, which gives an upper bound on the rescaled mixing time $t_{{\rm mix},{\Phi_\Gamma}}(\epsilon)$ beyond the Lindblad limit
\begin{thm}\label{thm:main_result_gap} Given $0\leq \beta<\infty$. We choose $g(\omega)$ as in~\cref{thm:KMS_DBC_approx}~\cref{eqn:g_choice}. Let $\mc{L}_{\sigma, \rm KMS}$ be the Lindbladian defined in~\cref{eqn:KMS_Lindbladian}. If  ${\Gamma^2}=\mathcal{O}(\lambda_{\rm gap}(\mc{L}_{\sigma, \rm KMS}))$, $\sigma=\widetilde{\Omega}\left(\beta{^2} \lambda^{-1}_{\rm gap}(\mc{L}_{\sigma, \rm KMS})\right)$, and $T=\widetilde{\Omega}(\sigma)$, then for any $\epsilon>0$, we have
\[
t_{\rm mix,\Phi_\Gamma}(\epsilon)=\widetilde{\mathcal{O}}\left(\frac{1}{\lambda_{\rm gap}{(\mc{L}_{\sigma, \rm KMS})}}\log\left(\frac{\left\|\sigma^{-1/2}_\beta\right\|}{\epsilon}\right)\right)\,.
\]
\end{thm}

A detailed proof is provided in \cref{sec:proof_main_result_gap}. A potential concern regarding this theorem is the dependence of $\lambda_{\rm gap}$ on $\sigma$: if the gap decreases as $\sigma$ increases, the condition $\sigma=\widetilde{\Omega}(\beta{^2} \lambda^{-1}_{\rm gap})$ might fail. Fortunately,~\cite{ding2025endtoendefficientquantumthermal} demonstrated that the spectral gap of $\mathcal{L}_{\rm KMS}$ remains non-vanishing as $\sigma \to +\infty$ for simple physical models.~\cite{slezak2026polynomialtime} subsequently generalized this to a much broader class of models, proving that with the $g(\omega)$ defined in \cref{thm:KMS_DBC_approx}~\cref{eqn:g_choice}, the spectral gap is monotonically non-decreasing with respect to $\sigma$. We extend these results by combining this monotonicity with \cref{thm:main_result_gap} to demonstrate that the system-bath interaction algorithm achieves fast mixing times even beyond the Lindblad limit. Crucially, our analysis framework is highly general; it can be integrated with \textbf{all existing} spectral gap estimates for KMS generators to derive upper bounds on the mixing time beyond the Lindblad limit.

\subsection{Total Runtime Bounds Beyond the Lindblad Limit}\label{sec:application}
Combining~\cref{thm:main_result_gap} with the exising literature of spectral gap estimation for KMS generators~\cite{rouz2024,tong2024fast,smid2025polynomial,smid2025rapidmixingquantumgibbs,rouze2024optimal,bergamaschi2026fastmixingquantumspin}, we can derive explicit total runtime bounds for the system-bath interaction algorithm beyond the Lindblad limit for various physical models~\footnote{Some of the result here prove the rapid mixing property for KMS generators using oscillator norm technique instead of spectral gap. However, rapid mixing directly implies a lower bound for the spectral gap. Thus, we can directly apply their result in our case.}. We summarize the models and present our results below:
\begin{itemize}
\item High temperature local spin Hamiltonian~\cite{rouz2024,rouze2024optimal}:

Let $H = \sum_i h_i$ be a local Hamiltonian defined on a $D$-dimensional lattice, where each local term $h_i$ is supported on a ball of constant radius. Furthermore, each qubit $j$ is acted upon by only a constant number of terms $h_i$. Assume the system coupling operators are chosen as $\mc{A}=\{\pm X_j,\pm Y_j,\pm Z_j\}_{j=1}^N$. Then, there exists a constant $\beta_c$ that only depends on the locality of $H$ such that for any $0<\beta<\beta_c$ and $\sigma>\beta$, we have $\lambda_{\rm gap}(\mc{L}_{\sigma,\rm KMS})=\Omega(1/N)$~\footnote{Although it has not been explicitly proved in the literature, we expect similar results should also hold for local fermionic Hamiltonians using similar techniques as in~\cite{rouz2024,rouze2024optimal,tong2024fast,smid2025polynomial}.}.

\item Weak interaction fermionic system at all temperatures~\cite{smid2025rapidmixingquantumgibbs}:

Let a local fermionic Hamiltonian $H$ defined on a $D$-dimensional lattice of fermionic systems, $\Lambda = [0,L]^D$, given by
\begin{equation}\label{eqn:H_fermion}
H = H_0+ H_1= \sum_{ij} M_{i,j} c^\dagger_i c_j + \varepsilon \sum_j h_j, \quad \|h_j\|\leq 1,
\end{equation}
where $(M_{i,j})$ is a Hermitian matrix, and $c^\dagger_j$ and $c_j$ are the creation and annihilation operators at site $j$. The terms $\{h_j\}$ are local fermionic perturbations and are parity preserving, meaning that each $h_j$ contains an even number of creation and annihilation operators. We further assume that $H_0$ is $(1,l)$-geometrically local and $\sum_j h_j$ are $(r_0,l)$-geometrically local. Specifically, each term in $H_0$ is a product of fermionic operators acting on a set of sites whose Manhattan diameter is at most $1$, and each $h_j$ is a product of fermionic operators acting on a set of sites whose Manhattan diameter is at most $r_0$. In addition, each site $i$ appears in at most $l$ non-trivial $c^\dagger_i c_j$ and $h_j$ terms. Assume the system coupling operators are chosen as $\mc{A}=\{\pm c^\dagger_j,c_j\}_{j=1}^N$. For any fixed $\beta>0$, there exists a constant $\varepsilon_c=\Omega(1)$ that only depends on $r_0,l,D,\beta$ such that for any $0\leq \varepsilon<\varepsilon_c$ and $\sigma>\beta$, we have $\lambda_{\rm gap}(\mc{L}_{\sigma,\rm KMS})=\Omega(1/N)$.

\item Weak interaction spin system at all temperatures~\cite{tong2024fast,smid2025polynomial,smid2025rapidmixingquantumgibbs}:
Let a local Hamiltonian $H$ over a $D$-dimensional lattice of spin systems $\Lambda= [0,L]^D$ with the following form (the system size is $N=(L+1)^D$):
\begin{equation}\label{eqn:H}
H=H_0+H_1=-\sum_{i} Z_i+ \varepsilon \sum_{j} h_j, \quad \|h_j\|\leq 1.
\end{equation}
Here, $H_0 = -\sum_{i} Z_i$ is referred to as the non-interacting term because its indices do not overlap. The choice of $Z_i$ as the non-interacting term is made for convenience, and can be substituted with other simple, gapped local terms that also have non-overlapping indices. We assume the interacting term $H_1$ is an $(r_0,l)$-geometrically local Hamiltonian. A specific example of the Hamiltonian in \cref{eqn:H} is the $D$-dimensional TFIM model, which is a $(2,2D+1)$-local Hamiltonian. Assume the system coupling operators are chosen as $\mc{A}=\{\pm X_j,\pm Y_j,\pm Z_j\}_{j=1}^N$. For any fixed $\beta>0$, there exists a constant $\varepsilon_c=\Omega(1)$ that only depends on $r_0,l,D,\beta$ such that for any $0\leq \varepsilon<\varepsilon_c$ and $\sigma>\beta$, we have $\lambda_{\rm gap}(\mc{L}_{\sigma,\rm KMS})=\Omega(1/N)$.

\item 1D local Hamiltonian at all temperatures~\cite{bergamaschi2026fastmixingquantumspin}:
Let a 1D local Hamiltonian $H=\sum_{j=1}^{n} h_{j}$, where $\{h_j\}$ satisfies the same assumption as~\eqref{eqn:H} but is defined on a 1D lattice. Assume the system coupling operators are chosen as $\mc{A}=\{\pm X_j,\pm Y_j,\pm Z_j\}_{j=1}^N$. Then, for any $0\leq \beta<\infty$ and $\sigma>\beta$, we have $\lambda_{\rm gap}(\mc{L}_{\sigma,\rm KMS})=\Omega(1/N)$.

\end{itemize}

Using the above spectral gap estimations, {$\log\left(\frac{\|\sigma_\beta^{-1/2}\|}{\epsilon}\right) = \mathcal{O}(N)$ in~\cite{slezak2026polynomialtime}} and~\cref{thm:main_result_gap}, it is straightforward to derive the following total runtime bounds for the system-bath interaction algorithm beyond the Lindblad {dynamics}:
\begin{cor}[Total runtime bound beyond Lindblad limit]\label{cor:total_run_time} For all the above models with proper choice of $\beta$, by choosing {$\Gamma=\Theta(\sqrt{1/N})$}, $\sigma=\widetilde{\Theta}(N^2/\epsilon)$, and $T=\widetilde{\Theta}(\sigma)$, we have $t_{\rm mix,\Phi_\Gamma}(\epsilon)=\widetilde{\mathcal{O}}(N^2)$. Therefore, the system-bath interaction algorithm can prepare an $\epsilon$-approximation of the thermal state $\sigma_\beta$ with total Hamiltonian simulation time
\[
T_{\rm total}=\widetilde{\mathcal{O}}\left(\frac{N^7}{\epsilon^2}\right)\,.
\]
\end{cor}
\begin{rem}\label{re:improvement}
The above theorem is a direct generalization of the end-to-end complexity results in~\cite{slezak2026polynomialtime} beyond the Lindblad limit. Because $\Gamma$ is not required to be small, the above result saves a factor of $N^3/\epsilon^2$ in total runtime compared to the previous best result in~\cite[Theorem 7]{slezak2026polynomialtime}. On the other hand, we believe that the dependence on $N$ in the total runtime is far from optimal and can be further improved in the case of different models. There are several directions of the improvement:
\begin{itemize}
  \item Relaxing the requirement of ${\Gamma^2}=\mathcal{O}(\lambda_{\rm gap})$ in~\cref{thm:main_result_gap}. Currently, \cref{thm:main_result_gap} requires ${\Gamma^2}=\mathcal{O}(1/N)$. In our perturbation framework, this strict condition is necessary to control the gap perturbation caused by the higher-order term $\mathcal{M}_{{\geq}2}$. Although $\mathcal{M}_{{\geq}2}$ preserves the thermal state, its dominant term tends to shrink the spectral gap. Since the norm of $\mathcal{M}_{{\geq}2}$ scales as $\mathcal{O}({\Gamma^2})$, ${\Gamma^2}$ must be sufficiently small relative to the inverse gap to prevent the gap from closing. However, our perturbation analysis does not currently exploit the fact that the terms constituting $\mathcal{M}_{{\geq}2}$ are quasi-local when $A_S$ is local and $H$ is local Hamiltonian. When $\sigma$ is not large, these terms produce only constant perturbations to the local gaps. Existing spectral gap analyses often rely on deriving a global gap from local gaps~\cite{kastoryano2016quantum,bergamaschi2026fastmixingquantumspin}; this approach appears compatible with the quasi-local structure of $\mathcal{M}_{{\geq}2}$. If such a derivation can be extended to this context, it would suffice to bound the local gap perturbation. In that case, the effect of $\mathcal{M}_{{\geq}2}$ would be controlled by ${\Gamma^2}/N$, meaning the relaxed condition ${\Gamma^2}=\mathcal{O}(1)$ should suffice to guarantee a non-vanishing global gap.
  \item Relaxing the dependence of $\log\left(\left\|\sigma^{-1/2}_\beta\right\|/\epsilon\right)$. In~\cref{cor:total_run_time}, the factor of $N^3$ arises from the dependence on the initial condition, a known limitation of mixing time analysis based solely on the spectral gap. For certain systems where the Davies generator can be efficiently approximated, it might be possible to improve this dependence to $\log(N)$ using techniques like Modified Logarithmic Sobolev Inequalities (MLSI)~\cite{TemmeKastoryanoRuskaiEtAl2010,BardetCapelGaoEtAl2023}. The main technical challenge is to extend the entropy decay analysis from continuous Lindbladian dynamics to the discrete channel $\Phi_\Gamma$, which prohibits us from directly taking derivative of the entropy functional. Another possible way to improve the dependence on the initial condition is to use the oscillator norm technique developed in~\cite{smid2025rapidmixingquantumgibbs,rouze2024optimal,zhan2025rapidquantumgroundstate}. This technique can directly boundthe convergence in observables without going throughthe spectral gap, thus avoidingthe dependence onthe initial condition. However, even inthe Lindblad limit regime, the forward unitary map $\mc{U}_S$ and nonvanishing coherent part $-i[H_{\rm Lamb},\rho]$ prohibits us from directly applying this technique. We leave this as an interesting future direction to investigate.

  \item Relaxing the interaction strength requirement to strong coupling regime. In our analysis, we require ${\Gamma}=\Theta(1)$. Although this goes beyond the Lindblad limit, the interaction strength is still relatively weak since ${\Gamma/\sigma}\ll 1$. Surprisingly, our numerical results suggest that even in the strong coupling regime where ${\Gamma/\sigma}=\Omega(1)$, the fixed point of $\Phi_\Gamma$ remains close to the thermal state, and the mixing time still decreases with ${\frac{\sigma}{\Gamma^2}}$ factor. It would be an interesting future direction to theoretically understand the performance of the system-bath interaction algorithm in this strong coupling regime. This can potentially further reduce the total runtime.
\end{itemize}
\end{rem}
\subsection{Proof of \texorpdfstring{\cref{thm:main_result_gap}}{lg}}\label{sec:proof_main_result_gap}

We first provide an upper bound for the norm of $\widetilde{\mc{M}}_{{\geq 2}}${, which is defined in~\eqref{eqn:Phi_alpha_limit_rewrite}:}
\begin{lem}\label{lem:M2_bound}
Assuming {$\Gamma = \mathcal{O}(1)$}, we have
\begin{equation}\label{eqn:M2_bound}
\|\widetilde{\mc{M}}_{{\geq 2}}\|_{2\leftrightarrow 2} =\mathcal{O}\left(\Gamma^2\right)\,,
\end{equation}
\begin{equation}\label{eqn:M2_bound_1}
\|\sigma^{-1/4}_\beta\widetilde{\mc{M}}_{{\geq 2}}[\sigma^{1/4}_\beta\cdot \sigma^{1/4}_\beta]\sigma^{-1/4}_\beta\|_{2\leftrightarrow 2} =\mathcal{O}\left(\Gamma^2\right)\,,
\end{equation}
\end{lem}
\begin{proof}[Proof of~\cref{lem:M2_bound}] We first prove~\eqref{eqn:M2_bound}. We fix $n,k$ in $\widetilde{\Phi}_{\rm part, \Gamma}$ and take an arbitrary matrix $\rho$ (not necessary density operator). Following {\cref{lm:1} when $\sigma_0=I$}, we have
\[
\begin{aligned}
&\int \gamma(\omega)\widetilde{G}_{2n-k,A_S}^\dag(\omega) \rho  \widetilde{G}_{k,A_S}(\omega)\mathrm{d}\omega\\
=&{\frac{1}{\pi^{n/2}}}\int_{\substack{-\infty< s_1\leq s_2 \leq \cdots \leq s_{2n-k}<\infty,\\ -\infty < t_1 \leq t_2 \leq \cdots \leq t_k <\infty}}A_S(2\sigma s_{2n-k}) \cdots A_S(2\sigma  s_1)\rho A_S(2\sigma  t_1) \cdots A_S(2\sigma t_k) e^{-\sum_{p=1}^{2n-k}s_p^2 - \sum_{p=1}^k t_p^2}\\
&\cdot\int \gamma(\omega) e^{i2\sigma\omega \sum_{p = 1}^{2n-k}(-1)^p s_p - i2\sigma\omega \sum_{p= 1}^k (-1)^pt_p}\mathrm{d}\omega\mathrm{d}s_1 \cdots\mathrm{d}s_{2n-k}\mathrm{d}t_1 \cdots \mathrm{d}t_k
\end{aligned}
\]
Because $\|BAC\|_2\leq \|B\|\|C\|\|A\|_2$, we have
\[
\begin{aligned}
&\left\|\int \gamma(\omega)\widetilde{G}_{2n-k,A_S}^\dag(\omega) \rho  \widetilde{G}_{k,A_S}(\omega)\mathrm{d}\omega\right\|_2
\leq \frac{1}{\pi^{n/2}}\left\|A_S\right\|^{2n}\|\rho\|_2\int_{\substack{-\infty< s_1\leq s_2 \leq \cdots \leq s_{2n-k}<\infty,\\ -\infty < t_1 \leq t_2 \leq \cdots \leq t_k <\infty}} e^{-\sum_{p=1}^{2n-k}s_p^2 - \sum_{p=1}^k t_p^2}\\
&\cdot\left|\int \gamma(\omega) e^{i2\sigma\omega \sum_{p = 1}^{2n-k}(-1)^p s_p - i2\sigma\omega \sum_{p= 1}^k (-1)^pt_p}\mathrm{d}\omega\right|\mathrm{d}s_1 \cdots\mathrm{d}s_{2n-k}\mathrm{d}t_1 \cdots \mathrm{d}t_k
\leq  \frac{(\Theta(1))^{n}}{\sigma\sqrt{n}}\mathcal{O}\left( \log(\sqrt{2n}\sigma)\|\gamma'\|_{L^1}\right)
\end{aligned}
\]
where we use~\eqref{eq:finalbdTerm12} in the last inequality. Plugging this back into $\widetilde{\mc{M}}_{\geq 2} = \frac{\sigma}{\Gamma^2}\sum_{n\geq 2} \Gamma^n\widetilde{\mc N}_n$ gives
\[
\begin{aligned}
&\|\widetilde{\mc{M}}_{{\geq 2}}\|_{2\leftrightarrow 2} =\mathcal{O}\left(\sum_{n\geq 2}(\Theta(\Gamma^2))^{n-1}\sqrt{n}\mathcal{O}\left( \log(\sqrt{2n}\sigma)\|\gamma'\|_{L^1} \right)\right)
=\Gamma^2\underbrace{\mathcal{O}\left(\sum_{n\geq 2}(\Theta(\Gamma^2))^{n-2}\sqrt{n}\mathcal{O}\left( \log(\sqrt{2n}\sigma)\|\gamma'\|_{L^1} \right)\right)}_{=\mathcal{O}(1)\text{ when }\Gamma=\mathcal{O}(1)}
\end{aligned}
\]
This concludes the proof of~\eqref{eqn:M2_bound}.
To prove~\eqref{eqn:M2_bound_1}, we use~\cref{lm:1} to get the formula of $\sigma^{-1/4}_\beta G\sigma^{1/4}_\beta$. The rest of the proof is similar to the above calculation.
\end{proof}

The following perturbation framework is inspired by~\cite{ChenKastoryanoBrandaoEtAl2023,ChenKastoryanoGilyen2023,Fag_2007}.

Define
\[
{\widehat{\Phi}_{\rm part,\Gamma}}=I+\frac{\Gamma^2}{\sigma}{\widehat{\mathcal{M}}},\quad {\widehat{\mathcal{M}}}=\mathcal{L}_{\sigma,\rm KMS}-i[{H_{C}},\cdot]+{\widetilde{\mathcal{M}}_{\geq 2}}\,.
\]
According to~\cref{lm:Tinf} and~\cref{thm:KMS_DBC_approx}, we have
\[
\left\|{\widehat{\Phi}_{\rm part,\Gamma}}-\Phi_{\rm part,\Gamma}\right\|_{1\to 1}=\mathcal{O}\left(\frac{\Gamma^2\beta^{{2}}}{\sigma^2}+\frac{\Gamma^2\sigma}{T}\exp(-T^2/(4\sigma^2))\right)\,.
\]
Define ${\widehat{\Phi}_{\Gamma}}=\mc{U}_S(T)\circ {\widehat{\Phi}_{\rm part,\Gamma}}\circ \mc{U}_S(T)$, we have
\[
\left\|{\widehat{\Phi}_{\Gamma}}-\Phi_\Gamma\right\|_{1\to 1}=\mathcal{O}\left(\frac{\Gamma^2\beta^{{2}}}{\sigma^2}+\frac{\Gamma^2\sigma}{T}\exp(-T^2/(4\sigma^2))\right)\,.
\]
Given $\epsilon>0$,
according to~\cite[Theorem 8]{ding2025endtoendefficientquantumthermal},
\begin{equation}\label{eqn:mixing_time_approx}
\frac{\sigma t_{\rm mix,{\widehat{\Phi}_{\Gamma}}}(\epsilon/2)}{\Gamma^2}\left\|{\widehat{\Phi}_{\Gamma}}-\Phi_\Gamma\right\|_{1\to 1}\leq \epsilon/2\Rightarrow t_{\rm mix,\Phi_\Gamma}(2\epsilon)\leq t_{\rm mix,{\widehat{\Phi}_{\Gamma}}}(\epsilon/2)\,.
\end{equation}
Thus, to give an upper bound for $t_{\rm mix,\Phi_\Gamma}(\epsilon)$, it suffices to give an upper bound for $t_{\rm mix,{\widehat{\Phi}_{\Gamma}}}(\epsilon)$. Specifically,
\[
\sigma=\Omega\left(\frac{\beta^{{2}} t_{\rm mix,{\widehat{\Phi}_{\Gamma}}}(\epsilon)}{\epsilon}\right),\quad T=\Omega\left(\sigma\sqrt{\log\left(\frac{\sigma t_{\rm mix,{\widehat{\Phi}_{\Gamma}}}(\epsilon)}{\epsilon}\right)}\right)\Rightarrow t_{\rm mix,\Phi_\Gamma}(4\epsilon)\leq t_{\rm mix,{\widehat{\Phi}_{\Gamma}}}(\epsilon)\,.
\]
Here, the form of $\sigma$ looks similar to that in~\cref{thm:main_rigor_thermal}, thus, no extra $\epsilon$-dependence is introduced in the final choice of $\sigma$ by this approximation step.

Now, we focus on upper bounding $t_{\rm mix,{\widehat{\Phi}_{\Gamma}}}(\epsilon)$. First,
\[
\mathcal{K}\left(\sigma_\beta,\widehat{\mathcal{M}}\right)=\mathcal{K}\left(\sigma_\beta,\mathcal{L}_{\sigma,\rm KMS}\right)+\mathcal{K}\left(\sigma_\beta,-i[H_{C},\cdot]\right)+\mathcal{K}\left(\sigma_\beta,\widetilde{\mathcal{M}}_{{\geq 2}}\right)\,,
\]
we decompose each superoperator into Hermitian and anti-Hermitian parts, $\mathcal{K} = \mathcal{H} + \mathcal{A}$, and analyze the properties of each Hermitian superoperator $\mathcal{H}_1, \mathcal{H}_2, \mathcal{H}_3$.
\begin{itemize}
    \item According to~\cref{thm:KMS_DBC_approx},
\[
\mathcal{K}\left(\sigma_\beta,\mathcal{L}_{\sigma,\rm KMS}\right)=\mathcal{K}^\dagger\left(\sigma_\beta,\mathcal{L}_{\sigma,\rm KMS}\right)=\mathcal{H}\left(\sigma_\beta,\mathcal{L}_{\sigma,\rm KMS}\right):=\mathcal{H}_1\,.
\]
\item The Hermitian part of $\mathcal{K}\left(\sigma_\beta,-i[H_{C},\cdot]\right)$ follows
\[
\left\|\underbrace{\mathcal{H}\left(\sigma_\beta,-i[H_{C},\cdot]\right)}_{:=\mathcal{H}_2}\right\|_{2\leftrightarrow 2}=\left\|-\frac{i}{2}\left\{\sigma^{-1/4}_{\beta}H_{C} \sigma^{1/4}_{\beta}-\sigma^{1/4}_{\beta}H_{C} \sigma^{-1/4}_{\beta},\rho\right\}\right\|_{2\leftrightarrow 2}=\mathcal{O}(\beta^{{2}}/\sigma)\,.
\]
\item Furthermore, according to~\cref{lem:M2_bound}~\cref{eqn:M2_bound_1}, we have \[
\left\|\underbrace{\mathcal{H}\left(\sigma_\beta,\widetilde{\mathcal{M}}_{{\geq 2}}\right)}_{:=\mathcal{H}_3}\right\|_{2\leftrightarrow 2}\leq \left\|\mathcal{K}\left(\sigma_\beta,\widetilde{\mathcal{M}}_{{\geq 2}}\right)\right\|_{2\leftrightarrow 2}=\mathcal{O}({\Gamma^2})\,.
\]
\end{itemize}

Because $\mathcal{H}_1$ is Hermitian with respect to Hilbert–Schmidt inner product and $\mathcal{H}_1(\sqrt{\sigma_\beta})=0$, we can define the spectral gap of $\mathcal{H}_1$ as follows: $\lambda_{\rm gap}(\mc{H}_1):=\inf_{\mathrm{Tr}(X\sqrt{\sigma_\beta})=0,X\neq 0}\frac{-\left\langle X,\mathcal{H}_1(X)\right\rangle_2}{\left\langle X,X\right\rangle_2}\,.$
It is straightforward to check that this gap matches with the spectral gap of $\mathcal{L}_{\rm KMS}$ in the KMS inner product:
\[
\begin{aligned}
&{\lambda_{\rm gap}}(\mc{L}_{\sigma, \rm KMS})=\inf_{\mathrm{Tr}(A\sigma_\beta)=0,A\neq 0}\frac{-\left\langle A,\mathcal{L}^\dagger_{\rm KMS}(A)\right\rangle_{1/2,\sigma_\beta}}{\left\langle A,A\right\rangle_{1/2,\sigma_\beta}}
=\inf_{\mathrm{Tr}(A\sigma_\beta)=0,A\neq 0}\frac{-\left\langle \sigma^{1/4}_\beta A\sigma^{1/4}_\beta,\mathcal{K}^\dagger(\sigma_\beta,\mathcal{L}_{\rm KMS})(\sigma^{1/4}_\beta A\sigma^{1/4}_\beta)\right\rangle}{\left\langle \sigma^{1/4}_\beta A\sigma^{1/4}_\beta,\sigma^{1/4}_\beta A\sigma^{1/4}_\beta\right\rangle}\\
=&\inf_{\mathrm{Tr}(\sigma^{1/4}_\beta A\sigma^{1/4}_\beta\sqrt{\sigma_\beta})=0,A\neq 0}\frac{-\left\langle \sigma^{1/4}_\beta A \sigma^{1/4}_\beta,\mathcal{H}_1(\sigma^{1/4}_\beta A \sigma^{-1/4}_\beta)\right\rangle}{\left\langle \sigma^{1/4}_\beta A \sigma^{1/4}_\beta,\sigma^{1/4}_\beta A \sigma^{1/4}_\beta\right\rangle}=\lambda_{\rm gap}(\mc{H}_1)\,,
\end{aligned}
\]
where $\langle A,B\rangle=\mathrm{Tr}\left(A^\dagger B\right)$ is the Hilbert-Schemitz inner product. In the third equality, we use that, if $\mathrm{Tr}(A\sigma_\beta)=0$, then $\sigma^{-1/4}_\beta A \sigma^{-1/4}_\beta$ is orthogonal to $\sqrt{\sigma_\beta}$ under Hilbert Schemitz inner product. The spectral gap of $\mathcal{H}_1$ is same as the spectral gap of $\mathcal{L}_{\rm KMS}$, which has been well studied in the framework in~\cite{rouz2024,tong2024fast,smid2025polynomial,smid2025rapidmixingquantumgibbs,rouze2024optimal,bergamaschi2026fastmixingquantumspin}.

Now, we are ready to prove~\cref{thm:main_result_gap}.

\begin{proof}[Proof of~\cref{thm:main_result_gap}]

  Given any density operator $\rho_1,\rho_2$, we define $\mathcal{E}=\rho_1-\rho_2$. We consider the change of $\left\|\sigma_\beta^{-1/4}\mathcal{E}\sigma_\beta^{-1/4}\right\|_2$ after applying ${\widehat{\Phi}_{\Gamma}}$. First,
  {since $[H, \sigma_\beta] = 0$, we have}
\begin{equation}\label{eqn:Phi_infty_norm}
\left\|\sigma_\beta^{-1/4}{\widehat{\Phi}_{\Gamma}}(\mathcal{E})\sigma_\beta^{-1/4}\right\|_2=\left\|\sigma_\beta^{-1/4}(I+\widehat{\mc{M}}{\frac{\Gamma^2}{\sigma}})(\mc{U}_S(\mathcal{E}))\sigma_\beta^{-1/4}\right\|_2=\left\|(I+\mc{K}(\sigma_\beta,\widehat{\mc{M}}){\frac{\Gamma^2}{\sigma}})\left[\sigma^{-1/4}_\beta \mc{U}_S(\mathcal{E})\sigma^{-1/4}_\beta\right]\right\|_2\,.
\end{equation}
Let ${\mc{E}_1}=\mc{U}_S(\mathcal{E})$. Then with $\mathcal{K} = \mc{K}(\sigma_\beta,\widehat{\mc{M}})$,
\[
\left\|(I+\mc{K}(\sigma_\beta,\widehat{\mc{M}}){\frac{\Gamma^2}{\sigma}})\left[\sigma^{-1/4}_\beta {\mc{E}_1}\sigma^{-1/4}_\beta\right]\right\|^2_2
=\left\langle \sigma^{-1/4}_\beta {\mc{E}_1}\sigma^{-1/4}_\beta,\left(I+2(\mc{H}_1+\mc{H}_2+\mc{H}_3){\frac{\Gamma^2}{\sigma}}+\mc{K}^\dagger \mc{K}{\frac{\Gamma^4}{\sigma^2}}\right)\sigma^{-1/4}_\beta {\mc{E}_1}\sigma^{-1/4}_\beta\right\rangle_2\,.
\]
Because ${\mc{E}_1}$ is traceless, we have $\sigma^{-1/4}_\beta {\mc{E}_1}\sigma^{-1/4}_\beta$ is orthogonal to $\sqrt{\sigma_\beta}$ under Hilbert Schemitz inner product. This implies that
\[
\left\langle \sigma^{-1/4}_\beta {\mc{E}_1}\sigma^{-1/4}_\beta,\left(I+2\mc{H}_1{\frac{\Gamma^2}{\sigma}}\right)\sigma^{-1/4}_\beta {\mc{E}_1}\sigma^{-1/4}_\beta\right\rangle_2\leq (1-2\lambda_{\rm gap}(\mc{H}_1){\frac{\Gamma^2}{\sigma}})\left\|\sigma^{-1/4}_\beta {\mc{E}_1}\sigma^{-1/4}_\beta\right\|^2_2\,.
\]
For the other terms, we have
\[
\begin{aligned}
&\left\langle \sigma^{-1/4}_\beta {\mc{E}_1}\sigma^{-1/4}_\beta,\left(2(\mc{H}_2+\mc{H}_3){\frac{\Gamma^2}{\sigma}}+\mc{K}^\dagger \mc{K}{\frac{\Gamma^4}{\sigma^2}}\right)\sigma^{-1/4}_\beta {\mc{E}_1}\sigma^{-1/4}_\beta\right\rangle_2\\
\leq &\left(2\|\mc{H}_2\|_{2\to 2}{\frac{\Gamma^2}{\sigma}}+2\|\mc{H}_3\|_{2\to 2}{\frac{\Gamma^2}{\sigma}}+\left\|\mc{K}\right\|^{{2}}_{2\to 2}{\frac{\Gamma^4}{\sigma^2}}\right)\left\|\sigma^{-1/4}_\beta {\mc{E}_1}\sigma^{-1/4}_\beta\right\|^2_2\\
=&\mathcal{O}\left(\left(\frac{\beta^{{2}}}{\sigma}+{\frac{\Gamma^2}{\sigma}}\sigma+{\Gamma^4}\right){\frac{\Gamma^2}{\sigma}}\left\|\sigma^{-1/4}_\beta {\mc{E}_1}\sigma^{-1/4}_\beta\right\|^2_2\right)\,.
\end{aligned}
\]
Putting the above two bounds together, we have
\[
\left\|(I+{\mc{K}(\sigma_\beta,\widehat{\mc{M}})}{\frac{\Gamma^2}{\sigma}})\left[\sigma^{-1/4}_\beta {\mc{E}_1}\sigma^{-1/4}_\beta\right]\right\|^2_2=\left(1-2\lambda_{\rm gap}(\mc{H}_1){\frac{\Gamma^2}{\sigma}}+\mathcal{O}\left(\left(\frac{\beta^{{2}}}{\sigma}+{\Gamma^2}\right){\frac{\Gamma^2}{\sigma}}\right)\right)\left\|\sigma^{-1/4}_\beta {\mc{E}_1}\sigma^{-1/4}_\beta\right\|^2_2\,.
\]
Plugging this back into~\eqref{eqn:Phi_infty_norm}, we have
\[
\begin{aligned}
&\left\|\sigma_\beta^{-1/4}\widehat{\Phi}_{\Gamma}(\mathcal{E})\sigma_\beta^{-1/4}\right\|_2 = \left(1-\lambda_{\rm gap}(\mc{H}_1){\frac{\Gamma^2}{\sigma}}+\mathcal{O}\left(\left(\frac{\beta^{{2}}}{\sigma}+{\Gamma^2}\right){\frac{\Gamma^2}{\sigma}}\right)\right)\left\|\sigma^{-1/4}_\beta {\mc{E}_1}\sigma^{-1/4}_\beta\right\|_2\\
=&\left(1-\lambda_{\rm gap}(\mc{H}_1){\frac{\Gamma^2}{\sigma}}+\mathcal{O}\left(\left(\frac{\beta^{{2}}}{\sigma}+{\Gamma^2}\right){\frac{\Gamma^2}{\sigma}}\right)\right)\left\|\sigma^{-1/4}_\beta \mc{E}\sigma^{-1/4}_\beta\right\|_2
\leq \left(1-\frac{\lambda_{\rm gap}(\mc{H}_1)}{2}{\frac{\Gamma^2}{\sigma}}\right)\left\|\sigma^{-1/4}_\beta \mc{E}\sigma^{-1/4}_\beta\right\|_2\,,
\end{aligned}
\]
where we use the condition of $\sigma=\Omega(\beta^{{2}}/\lambda_{\rm gap}(\mc{H}_1))$ and ${\Gamma^2}=\mc{O}\left(\lambda_{\rm gap}(\mc{H}_1)\right)$ in the last inequality.

This implies that after $k$ iterations of ${\widehat{\Phi}_{\Gamma}}$,
\[
\left\|\sigma_\beta^{-1/4}{\widehat{\Phi}_{\Gamma}}^k(\mathcal{E})\sigma_\beta^{-1/4}\right\|_2\leq \left(1-\frac{\lambda_{\rm gap}(\mc{H}_1)}{2}{\frac{\Gamma^2}{\sigma}}\right)^k\left\|\sigma^{-1/4}_\beta \mc{E}\sigma^{-1/4}_\beta\right\|_2\,.
\]
{Finally, using the H\"older's inequality of Schatten norm, $\|BAB\|_1\leq \|B\|^2_4\|A\|_2$, we have
\[
\|\mathcal{E}\|_1\leq \left\|\sigma_\beta^{-1/4}\mathcal{E}\sigma_\beta^{-1/4}\right\|_2\leq \left\|\sigma^{-1/2}_\beta\right\|\|\mathcal{E}\|_1\,.
\]}
This implies
\[
\left\|{\widehat{\Phi}_{\Gamma}}^k(\mathcal{E})\right\|_1\leq \left(1-\frac{\lambda_{\rm gap}(\mc{H}_1)}{2}{\frac{\Gamma^2}{\sigma}}\right)^k\left\|\sigma_\beta^{-1/4}\mathcal{E}\sigma_\beta^{-1/4}\right\|_2\leq \left(1-\frac{\lambda_{\rm gap}(\mc{H}_1)}{2}{\frac{\Gamma^2}{\sigma}}\right)^k\left\|\sigma^{-1/2}_\beta\right\|\|\mathcal{E}\|_1\,.
\]
This implies that to ensure $\left\|{\widehat{\Phi}_{\Gamma}}^k(\mathcal{E})\right\|_1\leq \epsilon$,
it suffices to choose $k=\mathcal{O}\left(\frac{1}{{\frac{\Gamma^2}{\sigma}}\lambda_{\rm gap}(\mc{H}_1)}\log\left(\frac{\left\|\sigma^{-1/2}_\beta\right\|}{\epsilon}\right)\right)\,.$
Thus, the rescaled mixing time of ${\widehat{\Phi}_{\Gamma}}$ is $t_{\rm mix,{\widehat{\Phi}_{\Gamma}}}(\epsilon)={\frac{\Gamma^2}{\sigma}} k=\mathcal{O}\left(\frac{1}{\lambda_{\rm gap}(\mc{H}_1)}\log\left(\frac{\left\|\sigma^{-1/2}_\beta\right\|}{\epsilon}\right)\right)\,.$
Putting this back to the earlier approximation step in~\eqref{eqn:mixing_time_approx}, we have
\[
t_{\rm mix,\Phi_\Gamma}(4\epsilon)=\mathcal{O}\left(\frac{1}{\lambda_{\rm gap}(\mc{H}_1)}\log\left(\frac{\left\|\sigma^{-1/2}_\beta\right\|}{\epsilon}\right)\right)
\]
with $\sigma=\Omega\left(\frac{\beta^{{2}}}{\lambda_{\rm gap}(\mc{H}_1)}\right), T=\Omega\left(\sigma\sqrt{\log\left(\frac{\sigma}{\epsilon}\cdot \frac{1}{\lambda_{\rm gap}(\mc{H}_1)}\right)}\right), {\Gamma}=\mathcal{O}\left(\sqrt{\lambda_{\rm gap}(\mc{H}_1)}\right).$
This concludes the proof.
\end{proof}
\section{{Mixing time analysis for ground state preparation}}
\label{sec:mixing_time_ground}
We consider a free fermionic system on the $D$-dimensional lattice $[0,L]^D$. The Hamiltonian is
\begin{equation}
    H = \sum_{i,j= 1}^N h_{ij} a_i^\dag a_j,
\end{equation}
where $N = (L+1)^D$ is the number of fermionic modes, $(h_{ij})$ is a Hermitian matrix and $a_j^\dag, a_j$ are fermionic creation and annihilation operators.

Since $h$ is Hermitian, there exists a unitary matrix \(U\) such that
 $\diag(\mu_1,\cdots,\mu_N) = U^\dag h U$. Defining the new fermionic creation and annihilation operators $c_k
    =
    \sum_{j=1}^N U_{jk}^* a_j,
    c_k^\dagger
    =
    \sum_{j=1}^N U_{jk} a_j^\dagger,$
the Hamiltonian becomes
\begin{equation}
    H = \sum_{k=1}^N \mu_k c_k^\dagger c_k .
\end{equation}
The ground state energy is $\lambda_0
    =
    \sum_{k:\mu_k<0}\mu_k,$
corresponding to filling all negative energy modes. The spectral gap $\Delta = \min_{1\leq k\leq N: \mu_k\neq0}|\mu_k|$. Assuming for simplicity that there are no zero modes, the ground state is
unique. The excitation number operator relative to the ground state is $\hat N
    =
    \sum_{k:\mu_k>0} c_k^\dagger c_k
    +
    \sum_{k:\mu_k<0} c_k c_k^\dagger .$
Let $\ket{\psi_0}$ be the ground state such that $ \hat N\ket{\psi_0}=0$ and every excited eigenstate has excitation number at least one. Hence $I-\ketbra{\psi_0}
    \preceq
    \hat N.$

 By Fuchs-van de Graaf inequality, for any density matrix \(\rho\),
\begin{equation}
    \|\rho - \ketbra{\psi_0}\|_1 \leq 2\sqrt{1-\bra{\psi_0}\rho \ket{\psi_0}}\leq 2\sqrt{\Tr(\rho\hat{N})}.
\end{equation}
It is thus sufficient to prove a contraction estimate for the expected
excitation number, namely, $\Tr\left(\Phi_\Gamma(\rho)\hat N\right)
    = \Tr\left(\rho\Phi_\Gamma^\dagger(\hat N)\right)\leq(1-\delta)\Tr(\rho\hat N)$.

By applying the particle-hole transformation
\(d_k=c_k\) for \(\mu_k>0\) and \(d_k=c_k^\dagger\) for \(\mu_k<0\), the Hamiltonian can be written as $H=\lambda_0 I+\sum_k |\mu_k| d_k^\dagger d_k, \hat N=\sum_k d_k^\dagger d_k$,
and since the scalar shift \(\lambda_0 I\) does not affect the Heisenberg evolution or the adjoint channel \(\Phi_\Gamma^\dagger\), it suffices to prove the contraction estimate in the positive energy case \(\mu_k>0\).

    Furthermore, from~\cref{lm:Tinf}, the finite time channel is close to its infinte window approximation as $\|\Phi_\Gamma - \widetilde{\Phi}_\Gamma\|_{1\rightarrow 1} = \mathcal{O}(\frac{\Gamma^2\sigma}{T}e^{-T^2/(4\sigma^2)})$. It is therefore sufficient to analyze the channel with the even function $\zeta(\omega) = g(\omega)+g(-\omega)$
\begin{equation}\label{eq:Phi_Nhat}
    \widetilde{\Phi}_\Gamma^\dagger(\hat{N})
         = \hat{N} + \mathbb{E}_{A_S}\Bigg(
\sum_{n\geq 1}\Gamma^{2n}(-1)^n
\sum_{k=0}^{2n}(-1)^k
\int_{-\infty}^{0}
\zeta(\omega)
U_S^\dagger(T)\widetilde{G}_{k,A_S}(\omega)\hat{N}\widetilde{G}_{2n-k,A_S}^\dagger(\omega)U_S(T)
\,d\omega
\Bigg)
\end{equation}
Here $\widetilde{G}_{k,A_S}(\omega)$ denotes the operator ${G}_{k,A_S}(\omega)$ in~\cref{eq:GF} by taking $T = \infty$.
\subsection{When $A_S \in\{\pm c_k^\dag, \pm c_k\}$}\label{sec:ground_mixing_time_1}
Throughout this subsection, assume that $\mu_j\geq \Delta>0$ for all $j$.
Then the ground state is the vacuum of the $c$-modes and $\hat N=\sum_{j=1}^N c_j^\dagger c_j .$
We show that one step of the channel decreases the expected excitation
number.
\begin{thm}
    Let $g(\omega) = \frac{1}{2\|h\|}1_{-2\|h\|, 0}$ then $\zeta(\omega) = \frac{1}{2\|h\|}1_{[-2\|h\|, 2\|h\|]}$.  Consider the gapped free fermionic system with spectral gap $\Delta>0$ and
    $A_S \in\{\pm c_k^\dag, \pm c_k\}$. For any $\epsilon>0$, if $\sigma = \widetilde{\Omega}(\Delta^{-1}\log(N/\epsilon))$, $T = \widetilde{\Omega}(\sigma) $ and coupling strength $\Gamma = \Theta(1)$, we have
    \begin{equation}
    t_{\rm mix, \Phi_\Gamma}(\epsilon) = \mathcal{O}\left(\frac{\|h\|\Gamma^2 N\log(2N/\epsilon)}{|\sin(\|f\|_{L_1}\Gamma)|^2}\right) = \mathcal{O}\left(\|h\|N\log^2(2N/\epsilon)\right).
    \end{equation}
\end{thm}

The adjoint quantum channel with $A_S \in\{\pm c_k^\dag, \pm c_k\}$ is applied to observable $O$
\begin{equation}\label{eq:adjointObserve}
    \begin{aligned}
        \widetilde{\Phi}_\Gamma^\dagger(O)
& = O + \mathbb{E}_{c_j}\Bigg(
\sum_{n\geq 1}\Gamma^{2n}(-1)^n
\sum_{k\text{ is even}, 0\leq k \leq 2n}(-1)^k
\int_{-\infty}^{0}
\zeta(\omega)
F_k(\omega-\mu_j) F_{2n-k}^*(\omega-\mu_j)\,d\omega
 U_S^\dagger(T)c_j c_j^\dag O  c_j c_j^\dag U_S(T)\Bigg)
\\
& + \mathbb{E}_{c_j}\Bigg(
\sum_{n\geq 1}\Gamma^{2n}(-1)^n
\sum_{k\text{ is odd}, 0\leq k \leq 2n}(-1)^k
\int_{-\infty}^{0}
\zeta(\omega)
F_k(\omega-\mu_j) F_{2n-k}^*(\omega-\mu_j) \,d\omega
U_S^\dagger(T)c_j O   c_j^\dag U_S(T)\Bigg)
\\
&+  \mathbb{E}_{c_j^\dag}\Bigg(
\sum_{n\geq 1}\Gamma^{2n}(-1)^n
\sum_{k\text{ is even}, 0\leq k \leq 2n}(-1)^k
\int_{-\infty}^{0}
\zeta(\omega)
F_k(\omega+\mu_j) F_{2n-k}^*(\omega+\mu_j)\,d\omega
 U_S^\dagger(T)c_j^\dag c_j O c_j^\dag c_j U_S(T)\Bigg)
\\
& + \mathbb{E}_{c_j^\dag}\Bigg(
\sum_{n\geq 1}\Gamma^{2n}(-1)^n
\sum_{k\text{ is odd}, 0\leq k \leq 2n}(-1)^k
\int_{-\infty}^{0}
\zeta(\omega)
F_k(\omega+\mu_j) F_{2n-k}^*(\omega+\mu_j) \,d\omega
U_S^\dagger(T)c_j^\dag O   c_j U_S(T)\Bigg)
\\
    \end{aligned}
\end{equation}
where $F_k(\omega)= \int_{-\infty< t_1 \leq t_2 \leq \cdots \leq t_{k} <\infty}e^{-i\omega\sum_{r=1}^k(-1)^r t_r} \prod_{r=1}^k f(t_r)\mathrm{d}t_1\cdots \mathrm{d}t_k.$ In this case,
\begin{equation}
    \widetilde{G}_{k, c_j} = F_k(\omega-\mu_j) \times \begin{cases}
        c_j c_j^\dag & k\text{ is even}\\
         c_j  & k\text{ is odd}\\
    \end{cases}, \qquad \widetilde{G}_{k, c_j^\dag} = F_k(\omega+\mu_j) \times\begin{cases}
        c_j^\dag c_j & k\text{ is even}\\
         c_j^\dag  & k\text{ is odd}\\
    \end{cases}.
\end{equation}
\begin{itemize}
    \item When $O = \hat{N}$, we obtain the expression of $\widetilde{\Phi}_\Gamma^\dag(\hat{N})$ in~\cref{eq:Phi_Nhat}. To simplify this expression, we use the canonical
commutation relations with the number operator,
    \begin{equation}
        [c_j c_j^\dag, \hat{N}]=0,  \quad  [c_j,\hat{N}]  = c_j , \quad [c_j^\dag c_j, \hat{N}]=0,\quad  [c_j^\dag, \hat{N}] =-c_j^\dag ,
    \end{equation}
    and the projection identities $(c_j c_j^\dag)^n = c_j c_j^\dag$, $(c_j^\dag c_j )^n =c_j^\dag c_j $. Furthermore, the system Hamiltonian $H$ commutes with $\hat{N}c_jc_j^\dag, \hat{N}c_j^\dag c_j, c_jc_j^\dag$ and $c_j^\dag c_j$.
    \item When $O = I$, the adjoint quantum channel is unital,  $\widetilde{\Phi}_\Gamma^\dag (I) = I $. It gives
    \begin{equation}\label{eq:unital}
\begin{aligned}
   & \mathbb{E}_{j}\Bigg(
\sum_{n\geq 1}\Gamma^{2n}(-1)^n
\sum_{ 0\leq k \leq 2n}(-1)^k
\int_{-\infty}^{0}
\zeta(\omega)
F_k(\omega-\mu_j) F_{2n-k}^*(\omega-\mu_j)\,d\omega
  c_j c_j^\dag \Bigg)
\\
&+  \mathbb{E}_{j}\Bigg(
\sum_{n\geq 1}\Gamma^{2n}(-1)^n
\sum_{ 0\leq k \leq 2n}(-1)^k
\int_{-\infty}^{0}
\zeta(\omega)
F_k(\omega+\mu_j) F_{2n-k}^*(\omega+\mu_j)\,d\omega
   c_j^\dag c_j \Bigg)=0.
\\
\end{aligned}
\end{equation}
\end{itemize}
Combining commutation relations with the unital property, $\widetilde{\Phi}_\Gamma^\dag(\hat{N})$ in~\cref{eq:adjointObserve} reduces to
\begin{equation}
    \begin{aligned}
        \widetilde{\Phi}_\Gamma^\dag (\hat{N}) &= \hat{N} + \hat{N}\times (\text{left hand side of \cref{eq:unital}})  \\
        & + \mathbb{E}_{c_j}\Bigg(
\sum_{n\geq 1}\Gamma^{2n}(-1)^n
\sum_{k\text{ is odd}, 0\leq k \leq 2n}(-1)^k
\int_{-\infty}^{0}
\zeta(\omega)
F_k(\omega-\mu_j) F_{2n-k}^*(\omega-\mu_j) \,d\omega
 c_j c_j^\dag\Bigg)
\\
& - \mathbb{E}_{c_j^\dag}\Bigg(
\sum_{n\geq 1}\Gamma^{2n}(-1)^n
\sum_{k\text{ is odd}, 0\leq k \leq 2n}(-1)^k
\int_{-\infty}^{0}
\zeta(\omega)
F_k(\omega+\mu_j) F_{2n-k}^*(\omega+\mu_j) \,d\omega
c_j^\dag    c_j \Bigg).
\\
    \end{aligned}
\end{equation}
Define $C(\omega) = \left| \sum_{k\text{ odd}} (i\Gamma)^k F_k(\omega)\right|$, then  $\Tr(\rho \widetilde{\Phi}^\dag_\Gamma(\hat{N}))$ takes the form
\begin{equation}
    \Tr(\rho \widetilde{\Phi}^\dag_\Gamma(\hat{N})) =\Tr(\rho\hat{N})-\frac{1}{N}\sum_{j} \underbrace{\int_{-\infty}^0 \zeta(\omega)C^2(\omega+\mu_j)\mathrm{d}\omega}_{\text{cooling coefficient}}\Tr(\rho c_j^\dag c_j)+\frac{1}{N}\sum_{j} \underbrace{\int_{-\infty}^0 \zeta(\omega)C^2(\omega-\mu_j)\mathrm{d}\omega}_{\text{heating coefficient}}\Tr(\rho c_j c_j^\dag)
\end{equation}
Since the integral is restricted to $\omega\leq 0$, and $\mu_j\geq \Delta >0$, the heating coefficient depends on variable $\omega-\mu_j\leq -\Delta$. Thus the heating term is suppressed by the spectral gap $\Delta$. In contrast, the cooling coefficient includes $\omega+\mu_j \in[-2\|h\|+\mu_j, \mu_j]$ which contains the resonant frequency $0$. Consequently, the cooling part contains the resonant region of $C(\omega)$ and yields a non-negligible decay rate.
\begin{proof}
In this proof, we consider the regime when $\sigma \gg 1$.
 \begin{itemize}
    \item We first bound the heating coefficient. For $\omega\leq -\Delta$, and $\Gamma = \Theta(1)$, by~\cref{lm:multiFourier},
\begin{equation}
\begin{aligned}
   &  C(\omega) \leq \sum_{m =1}^{\infty} \Gamma^{2m-1}\left|F_{2m-1}(\omega)\right|
    \leq  \sum_{m=1}^\infty  \frac{(2^{3/4}\pi^{1/4}\Gamma)^{2m-1}}{(2m-1)!} e^{-\frac{\sigma^2\Delta^2}{2m-1}}=:S(\sigma\Delta)\\
\end{aligned}
\end{equation}
 Therefore, the heating coefficients $\int_{-\infty}^0 \zeta(\omega)C^2(\omega-\mu_j)\mathrm{d}\omega =\frac{1}{2\|h\|} \int_{-2\|h\|-|\mu_j|}^{-|\mu_j|} C^2(\omega)\mathrm{d}\omega\leq S^2(\sigma\Delta).$
Combining it with $\Tr(\rho c_j c_j^\dag) = 1-\Tr(\rho c_j^\dag c_j)$, the heating term is bounded by
\begin{equation}
    \frac{1}{N}\sum_{j} \int_{-\infty}^0 \zeta(\omega)C^2(\omega-\mu_j)\mathrm{d}\omega\Tr(\rho c_j c_j^\dag) \leq S^2(\Delta)(1-\frac{1}{N}\Tr(\rho \hat{N}))\leq 2S^2(\sigma\Delta).
\end{equation}
    \item We next lower bound the cooling coefficient. Since $ 0<\Delta \leq\mu_j\leq \|h\|$, the interval $[-2\|h\|+\mu_j, \mu_j]$ contains $[-\|h\|,\Delta]$. Therefore, $\int_{-\infty}^{0}
\zeta(\omega)C^2(\omega+\mu_j)\,d\omega
\geq \frac{1}{2\|h\|}
\int_{-\Delta}^{\Delta}
C^2(\nu)\,d\nu \eqqcolon\kappa.$
It remains to lower bound $\kappa$. Notice that 
\begin{equation}
    \begin{aligned}
        &\int_{-\Delta}^\Delta C^2(\nu)\mathrm{d}\nu 
        = \int_{-\Delta}^\Delta \left|\sum_{m=1}^\infty (-1)^m \Gamma^{2m-1}\int_{t_1\leq \cdots \leq t_{2m-1}}e^{-i\omega\sum_{r=1}^{2m-1}(-1)^r t_r}\prod_{r=1}^{2m-1}f(t_r)\mathrm{d}t_1 \cdots\mathrm{d}t_{2m-1}\right|^2\mathrm{d}\nu. 
    \end{aligned}
\end{equation}
At frequency $\omega = 0$, the oscillatory phase vanishes, $C(0) = \left|\sum_{m=1}^\infty (-1)^m \frac{\Gamma^{2m-1}}{(2m-1)!} \left(\int_{-\infty}^{\infty}f(t)\mathrm{d}t\right)^{2m-1} \right| = |\sin(\|f\|_{L_1}\Gamma)|.$
Thus, we should choose the coupling strength $\Gamma$ so that $\|f\|_{L_1}\Gamma$ is bounded away from the zeros of the sine function.
Moreover, we bound the Lipschitz constant as
\begin{equation}
\begin{aligned}
    &\left|\sum_{m =1}^{\infty}(-1)^m \Gamma^{2m-1}\partial_\omega F_{2m-1}(\omega)\right|
    \leq  \sum_{m =1}^{\infty}\Gamma^{2m-1}\left|\int_{-\infty< t_1 \leq t_2 \leq \cdots \leq t_{2m-1} <\infty} \sum_{r=1}^{2m-1} |t_r| \prod_{r=1}^{2m-1}f(t_r)\mathrm{d}t_1\cdots \mathrm{d}t_{2m-1}\right|\\
    \leq& \Gamma\|tf\|_{L_1}\cosh(\|f\|_{L_1}\Gamma) =\frac{4\Gamma\sigma}{(2\pi)^{1/4}}\cosh(\|f\|_{L_1}\Gamma).
\end{aligned}
\end{equation}
Consequently, $C(\omega)>|\sin(\|f\|_{L_1}\Gamma)|/2$ for all frequency $\omega$ such that $|\omega|\leq
\frac{(2\pi)^{1/4}|\sin(\|f\|_{L_1}\Gamma)|}
{8\Gamma\sigma\cosh(\Gamma\|f\|_{L^1})}.$
Therefore, in the large $\sigma$ regime, $\kappa\geq
\Omega\left(\frac{|\sin(\|f\|_{L_1}\Gamma)|^2}{\|h\|\sigma}\right)$.
    The cooling term is lower bounded by
    \begin{equation}
        \frac{1}{N}\sum_{j} \int_{-\infty}^0 \zeta(\omega)C^2(\omega+\mu_j)\mathrm{d}\omega\Tr(\rho c_j^\dag c_j)\geq \frac{\kappa}{N}\Tr(\rho \hat{N}) =\Omega (\frac{|\sin(\|f\|_{L_1}\Gamma)|^2}{N\sigma\|h\|})\Tr(\rho \hat{N}).
    \end{equation}
\end{itemize}
In conclusion, we have
\begin{equation}
    \Tr(\rho \widetilde{\Phi}^\dag_{\Gamma}(\hat{N}))\leq (1 - \frac{|\sin(\|f\|_{L_1}\Gamma)|^2}{\sigma N\|h\|})\Tr(\rho\hat{N})+2S^2(\sigma\Delta).
\end{equation}
This implies that
\begin{equation}
    \|\widetilde{\Phi}_{\Gamma}^m(\rho) -\ketbra{\psi_0}\|\leq 2\sqrt{\Tr(\widetilde{\Phi}_{\Gamma}^m(\rho)\hat{N})} \leq\sqrt{\left(1-\frac{|\sin(\|f\|_{L_1}\Gamma)|^2}{\sigma N\|h\|}\right)^{m}N+\frac{\sigma N\|h\|}{|\sin(\|f\|_{L_1}\Gamma)|^2}S^2(\sigma\Delta)}
\end{equation}
Then for $m = \tau_{\rm{mix},\Phi_{\Gamma}}(\epsilon/2)$,
\begin{equation}\label{eq:c_k_fixed_point}
\begin{aligned}
    &\|\Phi_\Gamma^m - \rho_{\rm fix}(\Phi_\Gamma)\|\leq  \|\Phi_\Gamma^m - \widetilde{\Phi}_\Gamma^m\|+  \|\widetilde{\Phi}_\Gamma^m - \ketbra{\psi_0}\| +  \|\ketbra{\psi_0} - \rho_{\rm fix}(\Phi_\Gamma)\|\\
    & \leq  \sqrt{\left(1-\frac{|\sin(\|f\|_{L_1}\Gamma)|^2}{\sigma N\|h\|}\right)^{m}N+\frac{\sigma N\|h\|}{|\sin(\|f\|_{L_1}\Gamma)|^2}S^2(\sigma\Delta)}
    +\mathcal{O}\left(m\frac{\Gamma^2\sigma}{T}\exp(-\frac{T^2}{4\sigma^2})\right) \\
    &+ m\sum_{n=N+1}^\infty\frac{\left(\mathcal{O}\left(\Gamma\|A_S\|\right)\right)^{2n}}{(2n)!}
+m\exp(-\frac{\sigma^2\Delta^2}{2N})\sum_{n=1}^{N}(2n+1)\frac{\mathcal{O}(\Gamma\|A_S\|\|H\|)^{2n}}{(2n)!}+\epsilon/2
\end{aligned}
\end{equation}
Given $\epsilon>0$, we split the series $S(\sigma \Delta)$ at $R$, then
\begin{equation}
    S(\sigma \Delta)
    \leq
    \sum_{m=1}^R
\frac{{(2^{3/4}\pi^{1/4}\Gamma)^{2m-1}}}{(2m-1)!}\exp\left(-\frac{\sigma^2\Delta^2}{2R-1}\right)
    +
    \sum_{m>R}
\frac{{(2^{3/4}\pi^{1/4}\Gamma)^{2m-1}}}{(2m-1)!}.
\end{equation}
Thus for arbitrary $\epsilon'$,  $S(\Delta)\leq \epsilon'$ by choosing $R=
    O\left(
    \frac{\log(1/\epsilon')}{\log\log(1/\epsilon')}
    \right),
    \sigma=
    \Omega\!\left(
    \frac{\sqrt{2R-1}}{\Delta}
    \sqrt{\log(1/\epsilon')}
    \right) .$
Consequently, for $\sigma = \widetilde{\Omega}(\Delta^{-1}\log(N/\epsilon))$ and $T = \widetilde{\Omega}(\sigma) = \widetilde{\Omega}(\Delta^{-1}\log(N/\epsilon))$,
we have $\tau_{\rm mix,\Phi_\Gamma} = \mathcal{O}(\frac{\|h\|\sigma N\log(N/\epsilon^2)}{|\sin(\|f\|_{L_1}\Gamma)|^2}).$
The rescaled mixing time $t_{\rm mix,\Phi_\Gamma}= \mathcal{O}(\frac{\|h\|\Gamma^2 N\log(N/\epsilon^2)}{|\sin(\|f\|_{L_1}\Gamma)|^2})$.
\end{proof}

\subsection{When $A_S \in\{\pm a_k^\dag, \pm a_k\}$}\label{sec:ground_mixing_time_2}
When the system coupling
operator \(A_S\) is chosen from the set $\{\pm a_j,\pm a_j^\dagger:1\leq j\leq N\}.$ We obtain the following theorem for the mixing times.
\begin{thm}\label{thm:S16}
    Let $g(\omega) = \frac{1}{2\|h\|}1_{[-2\|h\|, 0]}$ then $\zeta(\omega) = \frac{1}{2\|h\|}1_{[-2\|h\|, 2\|h\|]}$. With system operators $A_S \in \{\pm a_k^\dag, \pm a_k\}$, for any $\epsilon>0$, if $\sigma = \widetilde{\Theta}(\Delta^{-1}),  T = \widetilde{\Theta}(\sigma)$ and the coupling strength $\Gamma = {\mathcal{O}}(\frac{1}{N^{3/2}})$
    then we have
    \begin{equation}
    t_{\rm{mix},\Phi_{\Gamma}}(\epsilon) = \mathcal{O}(\|h\|N\log(N/\epsilon)).
    \end{equation}
\end{thm}
\begin{proof}
We prove the result by controlling the expected excitation number. 
Let $P_0 = \ketbra{0}$ be the vacuum projector and set $P_\perp = I - P_0$. We decompose the operator space according to  
\begin{equation}
    \Pi_1\rho \coloneqq P_\perp\rho P_\perp\eqqcolon \rho_1, \quad \Pi_0\rho \coloneqq \rho - \Pi_1\rho = P_0\rho P_0 + P_0\rho P_\perp + P_\perp \rho P_0\eqqcolon\rho_0.
\end{equation}
Using these projections, we decompose the quantum channel as well
\begin{equation}
    \widetilde{\Phi}_\Gamma (\rho) = \underbrace{\Pi_1 \widetilde{\Phi}_\Gamma (\Pi_1 \rho) + \Pi_0 \widetilde{\Phi}_\Gamma (\Pi_0 \rho)+ \Pi_0 \widetilde{\Phi}_\Gamma (\Pi_1 \rho)}_{\widetilde{\Phi}_{\Gamma,1}(\rho)} +\underbrace{\Pi_1 \widetilde{\Phi}_\Gamma (\Pi_0 \rho) }_{\widetilde{\Phi}_{\Gamma,2}(\rho)}
\end{equation}
We denote the first three terms by \(\widetilde{\Phi}_{\Gamma,1}\) and the last term by
\(\widetilde{\Phi}_{\Gamma,2}\).
 Since $\hat{N}P_0 =P_0 \hat{N}= 0$,  for every $\rho\in\mathcal{B}(\mathcal{H})$, $\Tr(\Pi_0(X)\hat N)=0.$
Therefore, $\Tr(\widetilde{\Phi}_{\Gamma,1}(\rho)\hat{N}) = \Tr(\Pi_1 \widetilde{\Phi}_\Gamma ( \rho_1)\hat{N}) = \Tr(\widetilde{\Phi}_\Gamma (\rho_1)\hat{N}) $.
The proof has two parts. First, we show that
\(\widetilde{\Phi}_{\Gamma,1}\) contracts the excitation number on the
excited sector. Second, we show that the leakage term
\(\widetilde{\Phi}_{\Gamma,2}\), which maps the vacuum sector into
the excited sector, is exponentially small in the filter width \(\sigma\).

We first compute the expression of $\widetilde{\Phi}_\Gamma (\hat{N})$. $\widetilde{G}_{k, a_j}$ and $\widetilde{G}_{k, a_j^\dag}$ of $\widetilde{\Phi}_\Gamma$ in~\cref{eq:Phi_Nhat} are
the expressions for
\begin{equation}
\begin{aligned}
    &\widetilde{G}_{k, a_j}(\omega) = \sum_{\ell_1,\ldots,\ell_{k}=1}^N \prod_{\substack{1\le r\le k\\ r\ {\rm odd}}}
U_{j,\ell_r}
\prod_{\substack{1\le r\le k\\ r\ {\rm even}}}
U_{j,\ell_r}^*  F_k(\omega-\mu_{\boldsymbol{\ell}_{1:k}}) \times \begin{cases}
        c_{\ell_1}c_{\ell_2}^\dagger\cdots
c_{\ell_{k}}^\dagger & k\text{ is even}\\
         c_{\ell_1}c_{\ell_2}^\dagger\cdots
c_{\ell_{k}} & k\text{ is odd}\\
    \end{cases}, \\
    &\widetilde{G}_{k, a_j^\dag}(\omega) = \sum_{\ell_1,\ldots,\ell_{k}=1}^N
    \prod_{\substack{1\le r\le k\\ r\ {\rm odd}}}
U_{j,\ell_r}^*
\prod_{\substack{1\le r\le k\\ r\ {\rm even}}}
U_{j,\ell_r} F_k(\omega+\mu_{\boldsymbol{\ell}_{1:k}})\times\begin{cases}
        c_{\ell_1}^\dagger c_{\ell_2}\cdots
c_{\ell_{k}} & k\text{ is even}\\
         c_{\ell_1}^\dagger c_{\ell_2}\cdots
c_{\ell_{k}}^{^\dagger} & k\text{ is odd}\\
    \end{cases}.
\end{aligned}
\end{equation}
Here $F_k(\omega_1, \cdots, \omega_k)= \int_{ t_1 \leq t_2 \leq \cdots \leq t_{k} }e^{-i\sum_{r=1}^k(-1)^r\omega_r t_r} \prod_{r=1}^k f(t_r)\mathrm{d}t_1\cdots \mathrm{d}t_k.$
For odd $k$, since $f$ is an even function, it satisfies $F_k(\omega_1, \cdots, \omega_k)^* = F_k(\omega_k,\cdots, \omega_1)$. Moreover, the canonical anticommutation
relations imply
$[c_\ell,\hat N]=c_\ell, [c_\ell^\dagger,\hat N]=-c_\ell^\dagger$. For the multi index $\boldsymbol{\ell}_{2n}$, we write $\boldsymbol{\ell}_{2n} = \underbrace{(\ell_1, \cdots,\ell_k}_{\boldsymbol{\ell}_{1:k}},\underbrace{\ell_{k+1},\cdots \ell_{2n})}_{\boldsymbol{\ell}_{k+1:2n}}\in[N]^{2n}.$
Thus, the commutator of a fermionic monomial with $\hat N$ is determined by
its net particle number change. In particular, the even-$k$ strings appearing
in $\widetilde G_{k,a_j}$ and $\widetilde G_{k,a_j^\dagger}$ are number
preserving, whereas the odd-$k$ strings contain one more annihilation operator
or one more creation operator, respectively.
Therefore,
    \begin{equation}
        [\widetilde{G}_{k, a_j}(\omega), \hat{N}] = \begin{cases}
            0 & \text{if } k \text{ is even}\\
            \widetilde{G}_{k, a_j}(\omega) & \text{if } k \text{ is odd}\\
        \end{cases},\quad [\widetilde{G}_{k, a_j^\dag}(\omega), \hat{N}] = \begin{cases}
            0 & \text{if } k \text{ is even}\\
            -\widetilde{G}_{k, a_j^\dag}(\omega) & \text{if } k \text{ is odd}.\\
        \end{cases}
    \end{equation}
        Substituting the expression of $\widetilde{G}_{k,A_S}(\omega )\hat{N} = \hat{N}\widetilde{G}_{k,A_S}(\omega ) + [\widetilde{G}_{k,A_S}(\omega ),\hat{N}]$ into the expression in~\cref{eq:Phi_Nhat},
    \begin{equation}
    \begin{aligned}
        \widetilde{\Phi}_\Gamma^\dagger(\hat{N})  = \hat{N} + \hat{N}(\widetilde{\Phi}_\Gamma^\dagger(I)-I)
        - &\frac{1}{2} \mathbb{E}_{a_j}\Bigg(
\sum_{n\geq 1}\Gamma^{2n}(-1)^n
\sum_{0\leq k \leq 2n, \rm odd}
\int_{-\infty}^{0}
\zeta(\omega)
U_S^\dagger(T)\widetilde{G}_{k,a_j}(\omega)\widetilde{G}_{2n-k,a_j}^\dagger(\omega)U_S(T)
\,d\omega
\Bigg)\\
+& \frac{1}{2} \mathbb{E}_{a_j^\dagger}\Bigg(
\sum_{n\geq 1}\Gamma^{2n}(-1)^n
\sum_{0\leq k \leq 2n, \rm odd}
\int_{-\infty}^{0}
\zeta(\omega)
U_S^\dagger(T)\widetilde{G}_{k,a_j^\dagger}(\omega)\widetilde{G}_{2n-k,a_j^\dagger}^\dagger(\omega)U_S(T)
\,d\omega
\Bigg).\\
    \end{aligned}
\end{equation}
In this case,
the signs
$\pm$ cancel in the products appearing in the channel.
By the unital property of the adjoint channel, $\widetilde{\Phi}_\Gamma^\dag(I) = I$.

We now estimate the contribution from the excited sector.  Write $\Tr(\rho_1 \widetilde{\Phi}_\Gamma^\dagger (\hat{N}))
 \eqqcolon \Tr(\rho_1\hat{N} ) + M_1  + D_1 + R_{\geq 2}.$
Here $M_1$ denotes the leading-order contribution from the $a_j$ term, while
$D_1$ denotes the leading-order contribution from the $a_j^\dagger$ term,
both corresponding to $n=1$,
 \begin{equation}
        M_1\coloneqq  \frac{\Gamma^2}{N}\int_{-\infty}^0 \zeta(\omega) \sum_{\ell=1}^N \left|F_1(\omega-\mu_\ell)\right|^2\mathrm{d}\omega \Tr(\rho_1 c_\ell c_\ell^\dag),\quad D_1 \coloneqq  -\frac{\Gamma^2}{N}\int_{-\infty}^0 \zeta(\omega) \sum_{\ell=1}^N \left|F_1(\omega+\mu_\ell)\right|^2\mathrm{d}\omega\Tr(\rho_1 c_\ell^\dag c_\ell ).
    \end{equation}
The terms $R_{\geq 2}$ collect
the remaining higher-order contributions,
\begin{equation}
    \begin{aligned}
        R_{\geq 2} & \coloneqq  \frac{1}{2}\mathbb{E}_{j}\Bigg(
\sum_{n\geq 2}\Gamma^{2n}(-1)^n
\int_{-\infty}^{0}
\zeta(\omega)
\sum_{\ell_1,\ldots,\ell_{2n}=1}^N \prod_{r=1}^n U_{j,\ell_{2r-1}}\prod_{r=1}^n U_{j,\ell_{2r}}^*  \sum_{k\text{ is odd}, 0\leq k \leq 2n}F_k(\omega-\mu_{\boldsymbol{\ell}_{1:k}})F_{2n-k}(\omega-{\mu}_{\boldsymbol{\ell}_{k+1:2n}})\,d\omega \\
&\times \Tr(\rho_1 U_S^\dagger(T)c_{\ell_1} c_{\ell_2}^\dagger\cdots
c_{\ell_{2n}}^\dagger U_S(T))
\Bigg)\\
-& \frac{1}{2}\mathbb{E}_{j}\Bigg(
\sum_{n\geq 2}\Gamma^{2n}(-1)^n
\int_{-\infty}^{0}
\zeta(\omega)
\sum_{\ell_1,\ldots,\ell_{2n}=1}^N \prod_{r=1}^n U_{j,\ell_{2r-1}}^*\prod_{r=1}^n U_{j,\ell_{2r}}\sum_{k\text{ is odd}, 0\leq k \leq 2n}F_k(\omega+{\mu}_{\boldsymbol{\ell}_{1:k}})F_{2n-k}(\omega+{\mu}_{\boldsymbol{\ell}_{k+1:2n}})\,d\omega \\
&\times \Tr(\rho_1 U_S^\dagger(T)c_{\ell_1}^\dagger c_{\ell_2}\cdots
c_{\ell_{2n}}U_S(T))
\Bigg).\\
    \end{aligned}
\end{equation}
We first consider the leading order term, which is of order $\mathcal{O}(\Gamma^2)$. Since $\omega \in[-2\|h\|,0]$ and $\Delta\leq \mu_j\leq \|h\|$, we have $|\omega-\mu|\geq \Delta$ and $\omega +\mu_\ell \in[-2\|h\|+\mu_\ell, \mu_\ell]\supseteq[-\Delta,\Delta]$. Therefore, by~\cref{lm:multiFourier}, for $\sigma\Delta >1$,
    \begin{equation}
        M_1\leq \frac{\Gamma^2}{N}(2^{3/4}\pi^{1/4})^2 \exp(-2\sigma^2\Delta^2)(N-1)\Tr(\rho_1\hat{N}),\quad D_1 \leq - \frac{\Gamma^2\pi\operatorname{erf}(\sqrt{2})}{\sigma N\|h\| }
        \Tr(\rho_1\hat{N}) .
    \end{equation}
It remains to bound the higher-order remainder. We notice that $\Tr(\rho_1 U_S^\dagger(T)c_{\ell_1}^\dagger c_{\ell_2}\cdots
c_{\ell_{2n}}U_S(T))\leq \Tr(\rho_1\hat{N})$, 
and the analogous bound for
$c_{\ell_1}c_{\ell_2}^\dagger\cdots c_{\ell_{2n}}^\dagger$. Together with ~\cref{lm:multiFourier}, we get
\begin{equation}
\begin{aligned}
    &\int_{-\infty}^0\zeta(\omega)|F_k(\omega-\mu_{\boldsymbol{\ell}_{1:k}})|\,|F_{2n-k}(\omega-{\mu}_{\boldsymbol{\ell}_{k+1:2n}})|\,d\omega\\
    &\leq \frac{(2^{3/4}\pi^{1/4})^{2n}}{k!(2n-k)!}\frac{1}{2\sigma\|h\|}\sqrt{\frac{\pi k(2n-k)}{2n}}\exp\left(-\frac{\sigma^2}{2n}\left(\sum_{p=1}^k(-1)^p\mu_{\ell_p} -\sum_{p=k+1}^{2n}(-1)^p\mu_{\ell_p}\right)^2\right)
    \leq \frac{(2^{3/4}\pi^{1/4})^{2n}}{k!(2n-k)!}\frac{1}{2\sigma\|h\|}\sqrt{\frac{\pi k(2n-k)}{2n}}.\\
\end{aligned}
\end{equation}
Moreover $\mathbb{E}_j \sum_{\ell_1,\ldots,\ell_{2n}=1}^N \prod_{r=1}^n
    |U_{j,\ell_{2r-1}}|\prod_{r=1}^n
    |U_{j,\ell_{2r}}| = \mathbb{E}_j \left(\sum_{\ell = 1}^N |U_{j,\ell}|\right)^{2n}\leq N^{n}.$
Therefore,
\begin{equation}
    \begin{aligned}
        |R_{\geq 2}|&\leq\sum_{n\geq 2}\Gamma^{2n}N^n\sum_{k\text{ is odd}, 0\leq k \leq 2n}\frac{(2^{3/4}\pi^{1/4})^{2n}}{k! (2n-k)!}\sqrt{\frac{\pi k(2n-k)}{2n}}\frac{1}{2\sigma\|h\|}\Tr(\rho_1\hat{N})\leq \frac{1}{\sigma \|h\|}\sum_{n\geq 2}\frac{(\mathcal{O}(\sqrt{N}\Gamma))^{2n}}{(2n)!}\sqrt{n}\Tr(\rho_1 \hat{N}). \\
    \end{aligned}
\end{equation}
Combining the leading-order bounds with the remainder estimate gives,
\begin{equation}
\begin{aligned}
    \Tr(\widetilde{\Phi}_{\Gamma,1}(\rho)\hat{N})
    &\leq \left(1- \frac{\Gamma^2\pi}{\sigma N\|h\|}+\Gamma^2(2^{3/4}\pi^{1/4})^2\exp(-2\sigma^2\Delta^2)+\frac{1}{\sigma \|h\|}\sum_{n\geq 2}\frac{(\mathcal{O}(\sqrt{N}\Gamma))^{2n}}{(2n)!}\sqrt{n}\right)\Tr(\rho_1 \hat{N}) \\
        &\leq (1-\frac{\Gamma^2\pi}{2\sigma N\|h\|})\Tr(\rho \hat{N})
\end{aligned}
\end{equation}
where the last inequality is by choosing  $\Gamma = \mathcal{O}(\frac{1}{N^{3/2}})$ and $\sigma = \widetilde{\Theta}(\Delta^{-1}\sqrt{\log(\frac{1}{N\|h\|})})$.

We next bound the contribution from $\widetilde{\Phi}_{\Gamma,2}$ using $\Tr(\widetilde{\Phi}_{\Gamma,2}(\rho)\hat{N}) = \Tr(\widetilde{\Phi}_{\Gamma}(\rho_0)\hat{N})  $.
By the commutative property $[P_\perp, H]=[P_0, H]=0$, $P_0\hat{N} = \hat{N}P_0 = 0$,  we obtain
\begin{equation}
    \begin{aligned}
    &\Tr(\rho_0\widetilde{\Phi}_{\Gamma}^\dagger (\hat{N}))
         =-\frac{1}{2} \mathbb{E}_{j}\Bigg(
\sum_{n\geq 1}\Gamma^{2n}(-1)^n
\int_{-\infty}^{0}
\zeta(\omega)
\sum_{\ell_1,\ldots,\ell_{2n}=1}^N \prod_{r=1}^n U_{j,\ell_{2r-1}}\prod_{r=1}^n U_{j,\ell_{2r}}^*  \\
&\times\sum_{k\text{ is odd}, 0\leq k \leq 2n}F_k(\omega-\mu_{\boldsymbol{\ell}_{1:k}})F_{2n-k}(\omega-{\mu}_{\boldsymbol{\ell}_{k+1:2n}})\,d\omega e^{i\sum_{p=1}^{2n}(-1)^p\mu_{\ell_p}T}\Tr\left(\rho\Pi_0 (c_{\ell_1}c_{\ell_2}^\dagger\cdots
c_{\ell_{2n}}^{\dagger})\right)
\Bigg)\\
    \end{aligned}
\end{equation}
The vacuum matrix element is nonzero only when the creationa and annihilation operators are paired. More precisely, $\Tr\left(\rho\Pi_0 (c_{\ell_1}c_{\ell_2}^\dagger\cdots
c_{\ell_{2n}}^{\dagger})\right)
\Bigg) =\bra{0}\rho\ket{0} \prod_{j=1}^n\delta_{\ell_{2j-1},\ell_{2j}}$. Hence, after summing over the indices, we have
\begin{equation}
    \begin{aligned}
        \Tr(\widetilde{\Phi}_{\Gamma,2}(\rho) \hat{N}) &=\Tr(\rho_0\widetilde{\Phi}_{\Gamma}^\dagger \hat{N})= -\frac{1}{2} \mathbb{E}_j\sum_{n\geq 1} \Gamma^{2n}(-1)^n\int \zeta(\omega) \mathrm{d}\omega \sum_{\substack{\ell_2,\ell_4,\cdots, \ell_{2n-2},\ell_{2n}=1}}^N\prod_{r=1}^{n}|U_{j,\ell_{2r}}|^2 \\
\times &\prod_{j=1}^n\delta_{\ell_{2j-1},\ell_{2j}}\sum_{k = 0, odd}^{2n} F_{2n-k}^*(\omega-\mu_{\ell_{k+1:2n}})F_{k}^*(\omega-\mu_{\ell_{1:k}})\bra{0}\rho \ket{0}\\
    \end{aligned}
\end{equation}
Here $\ell_{2r-1} =\ell_{2r}$. For odd $k$, the pairing leaves one unpaired frequency in each $F$ factor, by~\cref{lm:multiFourier}, we have
\begin{equation}
\begin{aligned}
    |F_k^*(\omega-\mu_{\boldsymbol{\ell}_{1:k}})|&\leq \frac{(2^{3/4}\pi^{1/4})^k}{k!}\exp\left(-\frac{\sigma^2}{k}(\omega-\mu_{\ell_k})^2\right).\\
\end{aligned}
\end{equation}
Using also $|\omega-\mu_\ell|\geq \Delta$, we obtain
\begin{equation}
\begin{aligned}
    |\Tr(\widetilde{\Phi}_{\Gamma,2}(\rho) \hat{N})|  & \leq \frac{1}{2}\sum_{n\geq 1}(2^{3/4}\pi^{1/4}\Gamma)^{2n}\sum_{k=0,\rm odd}^{2n}\frac{1}{k!(2n-k)!}\exp(-\frac{\sigma^2\Delta^2}{2n-k})\exp(-\frac{\sigma^2\Delta^2}{k})|\bra{0}\rho \ket{0}|\\
    &\leq \sum_{n\geq 1} \frac{(2^{7/4}\pi^{1/4}\Gamma)^{2n}}{(2n)!}\exp(-\frac{2}{n}\sigma^2\Delta^2)|\bra{0}\rho \ket{0}|.
\end{aligned}
\end{equation}

This implies that
\begin{equation}
\begin{aligned}
    &\|\widetilde{\Phi}_{\Gamma}^m(\rho) -\ketbra{\psi_0}\|\leq 2\sqrt{\Tr(\widetilde{\Phi}_{\Gamma}^m(\rho)\hat{N})} \\
    & \leq \mathcal{O}\left(\sqrt{\left(1-\frac{\Gamma^2\pi}{2\sigma N\|h\|}\right)^mN+\underbrace{\frac{\sigma N\|h\|}{\Gamma^2}\sum_{n\geq 1} \frac{(\mathcal{O}(\Gamma))^{2n}}{(2n)!}\exp(-\frac{2}{n}\sigma^2\Delta^2)|\bra{0}\rho \ket{0}|}_{S(\sigma,\Delta,\Gamma,N)}}\right)
\end{aligned}
\end{equation}
Here for any $\epsilon>0$,
\begin{equation}
    S(\sigma,\Delta,\Gamma,N)\leq \frac{\sigma N\|h\|}{\Gamma^2}\exp(-\frac{2}{R}\sigma^2\Delta^2)\sum_{n=1}^R \frac{(\mathcal{O}(\Gamma))^{2n}}{(2n)!} + \frac{\sigma N\|h\|}{\Gamma^2}\sum_{n\geq R+1} \frac{(\mathcal{O}(\Gamma))^{2n}}{(2n)!}\leq \epsilon^2,
\end{equation}
by choosing $R = \widetilde{\Theta}\left(\log\left(\frac{\sigma N }{\epsilon\Gamma}\right)\right), \sigma = \widetilde{\Theta}(\Delta^{-1}\sqrt{R}) = \widetilde{\Theta}(\Delta^{-1}).$
Since $\|{\Phi}_\Gamma \|_{1\to 1} = \|\widetilde{\Phi}_\Gamma \|_{1\to 1} = 1$, we have $\|\Phi_\Gamma^m - \widetilde{\Phi}_\Gamma^m\|\leq m \|\Phi_\Gamma - \widetilde{\Phi}_\Gamma\|$. Then, for $m = \tau_{\rm{mix},\Phi_{\Gamma}}(\epsilon/2)$, using~\cref{eq:c_k_fixed_point},
\begin{equation}
\begin{aligned}
    &\|\Phi_\Gamma^m - \rho_{\rm fix}(\Phi_\Gamma)\|
    \leq  \mathcal{O}\left(\sqrt{\left(1-\mathcal{O}\left(\frac{\Gamma^2\|h\|}{N\sigma}\right)\right)^mN+S(\sigma,\Delta,\Gamma,N)}\right)
    +\mathcal{O}\left(m\frac{\Gamma^2\sigma}{T}\exp(-\frac{T^2}{4\sigma^2})\right) \\
    &+ m\sum_{n=N+1}^\infty\frac{\left(\mathcal{O}\left(\Gamma\|A_S\|\right)\right)^{2n}}{(2n)!}
+m\exp(-\frac{\sigma^2\Delta^2}{2N})\sum_{n=1}^{N}(2n+1)\frac{\mathcal{O}(\Gamma\|A_S\|\|H\|)^{2n}}{(2n)!}+\epsilon/2
\end{aligned}
\end{equation}
By choosing $\sigma = \widetilde{\Theta}(\Delta^{-1}),  T = \widetilde{\Theta}(\sigma)$,
we have $\tau_{\rm{mix},\Phi_{\Gamma}}(\epsilon) = \frac{\|h\|N\sigma}{\Gamma^2}\log(N/\epsilon),$
and the rescaled mixing time becomes $ t_{\rm{mix},\Phi_{\Gamma}}(\epsilon) = \|h\|N\log(N/\epsilon)$.
We conclude the proof.
\end{proof}

\section{Numerical implementation}\label{sec:numerics_appendix}
We consider the transverse field Ising model (TFIM), the Hubbard model, and 1D axial next-nearest-neighbor Ising (ANNNI) model to verify our analysis results and explore the performance of the algorithm in even stronger coupling parameter.  We investigate both thermal state and ground state preparation for the first two models and only ground state for the last model, varying the coupling parameter ${\Gamma}$ to examine its impact on convergence behavior and accuracy.
\subsection{Thermal state preparation}\label{sec:TFIM_numerics}
\paragraph{Transverse field Ising model (TFIM)} Consider the transverse field Ising model (TFIM) in~\eqref{eqn:TFIM} with $ J = 1, g = 1.2$.
We now provide more detailed numerical results for thermal state preparation with different choices of parameters.
\begin{itemize}
\item The thermal state setting with $L=4$ in the regime $\Gamma = \Theta(1)$: The additional numerical results are presented in~\cref{Fig:TFIM_4_thermal}. As shown in~\cref{fig1:dynamics_2}, the state converges to the thermal state with very high accuracy, and the convergence rate improves as ${\frac{\Gamma}{\sqrt{\sigma}}}$ increases. Moreover,~\cref{fig1:error} illustrates that the steady state of the channel approaches the target thermal state as $\sigma$ increases, while remaining nearly constant with respect to variations in ${\frac{\Gamma}{\sqrt{\sigma}}}$. By computing the spectral gap after constructing the discrete quantum channel $\Phi_\Gamma $ explicitly,
~\cref{fig1:spectral2} shows the spectral gap remains independent of $\sigma$ and scales as ${\frac{\Gamma^2}{\sigma}}$, which is similar to the result we showed for Hubbard model ~\cref{fig3:spectral_gap}.

\item Thermal State preparation with $L=4$ in the strong coupling regime $\Gamma = \Theta(\sigma)$: We extend our numerical experiments to the strong coupling regime. For thermal state preparation, we keep $\sigma$, $T$, and $\omega$ the same as before and vary ${\frac{\Gamma}{\sqrt{\sigma}}}$ as ${\frac{\Gamma}{\sqrt{2\sigma}}} = 1, 0.5, 0.25, 0.1, 0.05, 0.01$, as shown in~\cref{fig_large:dynamics_1} and ~\cref{fig_large:dynamics_2}.
We observe that when ${\frac{\Gamma}{{\sigma}}} \leq 1/2$, the fixed point remains close to the target state, and the spectral gap continues to scale as ${\frac{\Gamma^2}{\sigma}}$ in this regime. These observations suggest that accurate thermal state preparation remains feasible even under strong system--bath interactions, with a correspondingly faster convergence rate. Furthermore, the apparent coupling threshold at ${\frac{\Gamma}{{\sigma}}}  \approx 0.5$ seems to be independent of system size, as we observe similar behavior in both 4-qubit and 8-qubit systems. This phenomenon lies beyond our current theoretical guarantees and suggests that the robustness of system–bath interaction models may be even greater than what our analysis presently establishes.

\end{itemize}

\begin{figure}[!htbp]
    \centering
    \begin{subfigure}[b]{0.3\textwidth}
        \centering        \includegraphics[width=\textwidth]{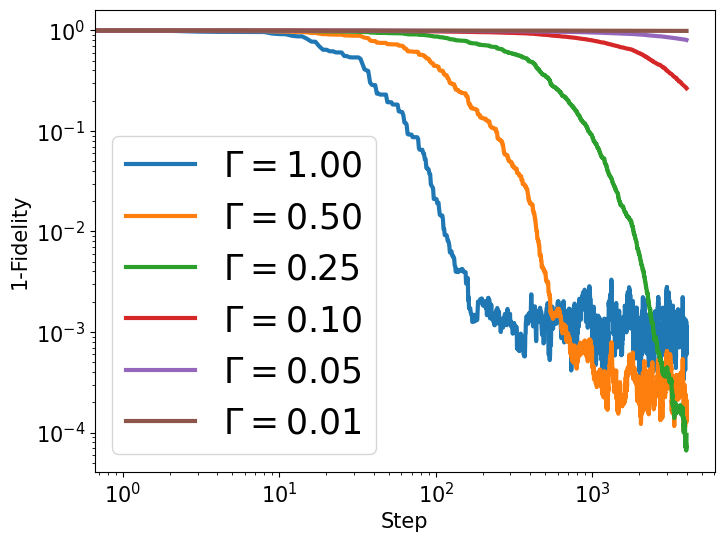}
        \caption{}
        \label{fig1:dynamics_2}
    \end{subfigure}
    \begin{subfigure}[b]{0.3\textwidth}
        \centering
        \includegraphics[width=\textwidth]{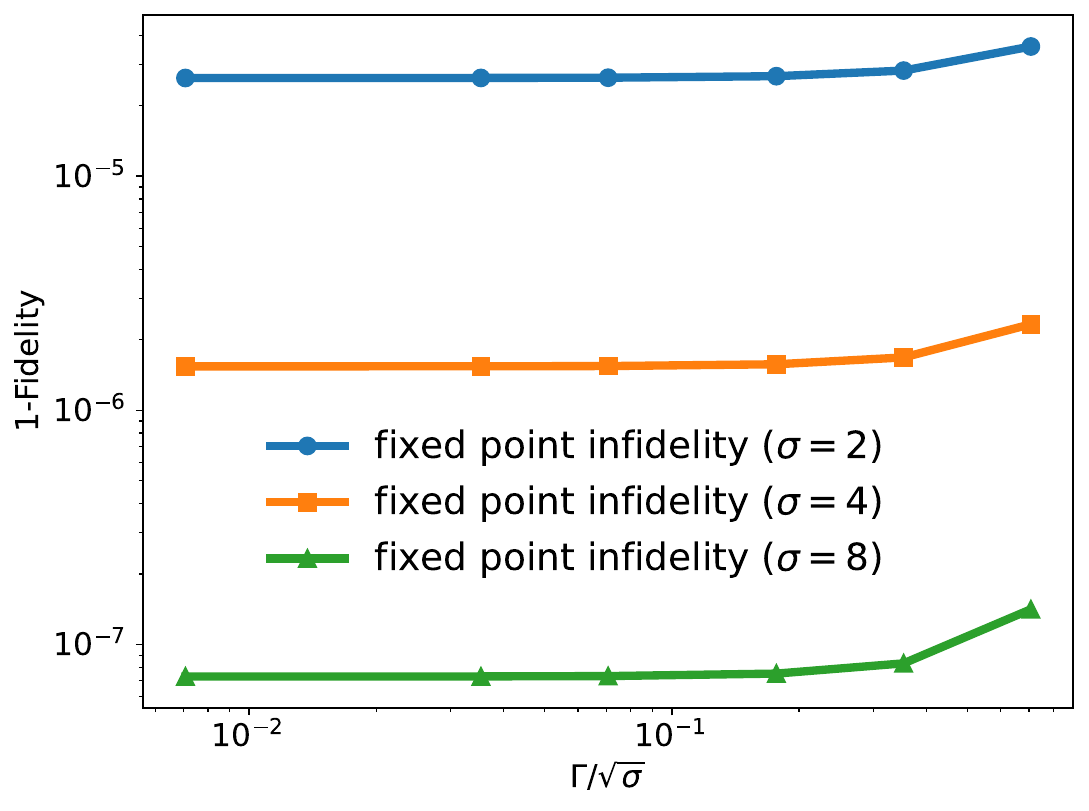}
        \caption{}
        \label{fig1:error}
    \end{subfigure}
    \begin{subfigure}[b]{0.3\textwidth}
        \centering
        \includegraphics[width=\textwidth]{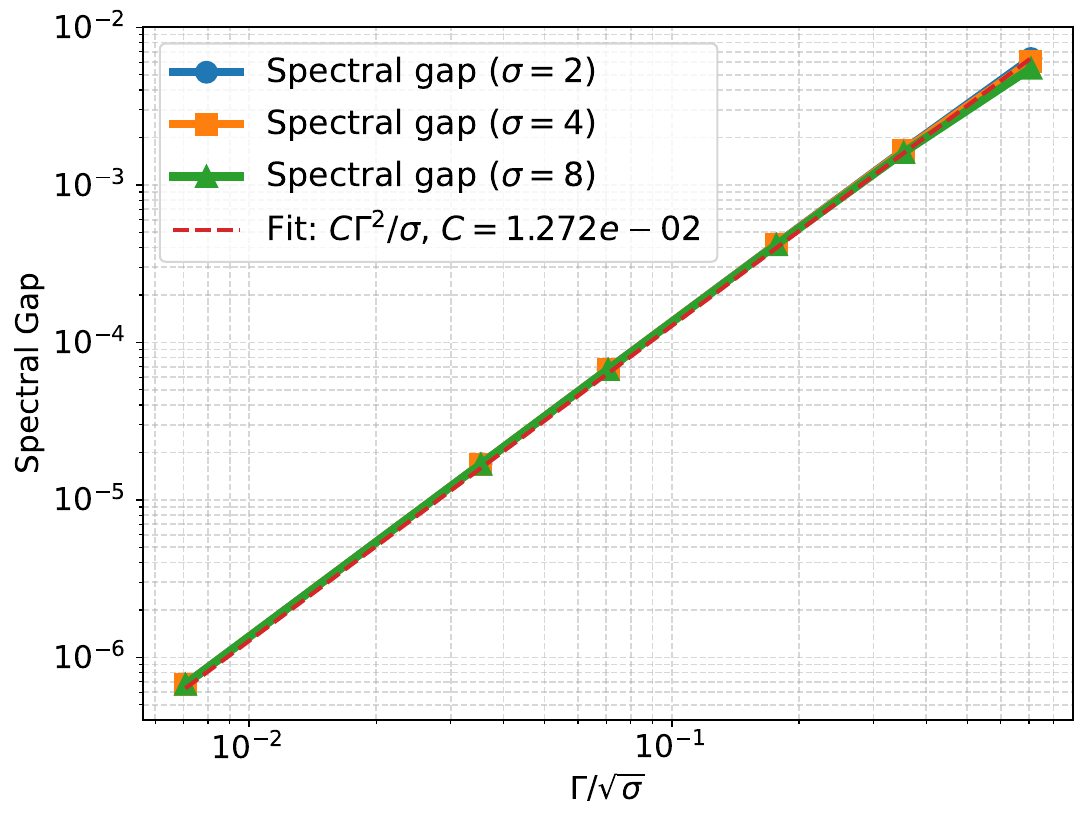}
        \caption{}
        \label{fig1:spectral2}
    \end{subfigure}
    \caption{Thermal state preparation for TFIM with $L=4$ sites in the regime $\Gamma=\Theta(1)$. In (a)--(c), we use $\sigma=2,4,8$, choose the coupling parameter ${\Gamma} = 1.0, 0.5, 0.25, 0.10, 0.05, 0.01$ and set the interaction time $T=5\sigma$; the frequency $\omega$ is sampled uniformly from the interval $[0,5]$. (a)The evolution of infidelity, i.e., $1 - F$, versus the iteration steps. (b) The fix point infidelity between the target thermal state and the stationary state of $\Phi_\Gamma$, shown versus different $\sigma$. (c) The spectral gap of $\Phi_\Gamma$.
    }
    \label{Fig:TFIM_4_thermal}
\end{figure}

\begin{figure}[!htbp]
    \centering
    \begin{subfigure}[b]{0.24\textwidth}
        \centering        \includegraphics[width=\textwidth]{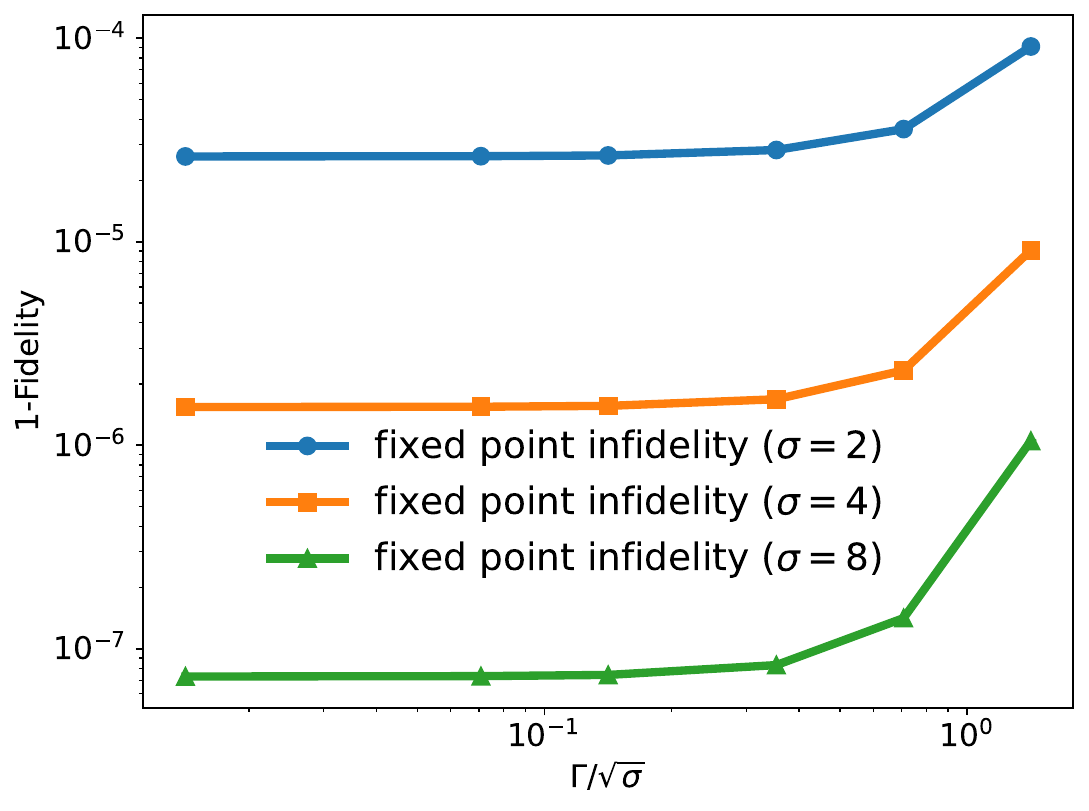}
        \caption{}
        \label{fig_large:dynamics_1}
    \end{subfigure}
    \hfill
    \begin{subfigure}[b]{0.24\textwidth}
        \centering        \includegraphics[width=\textwidth]{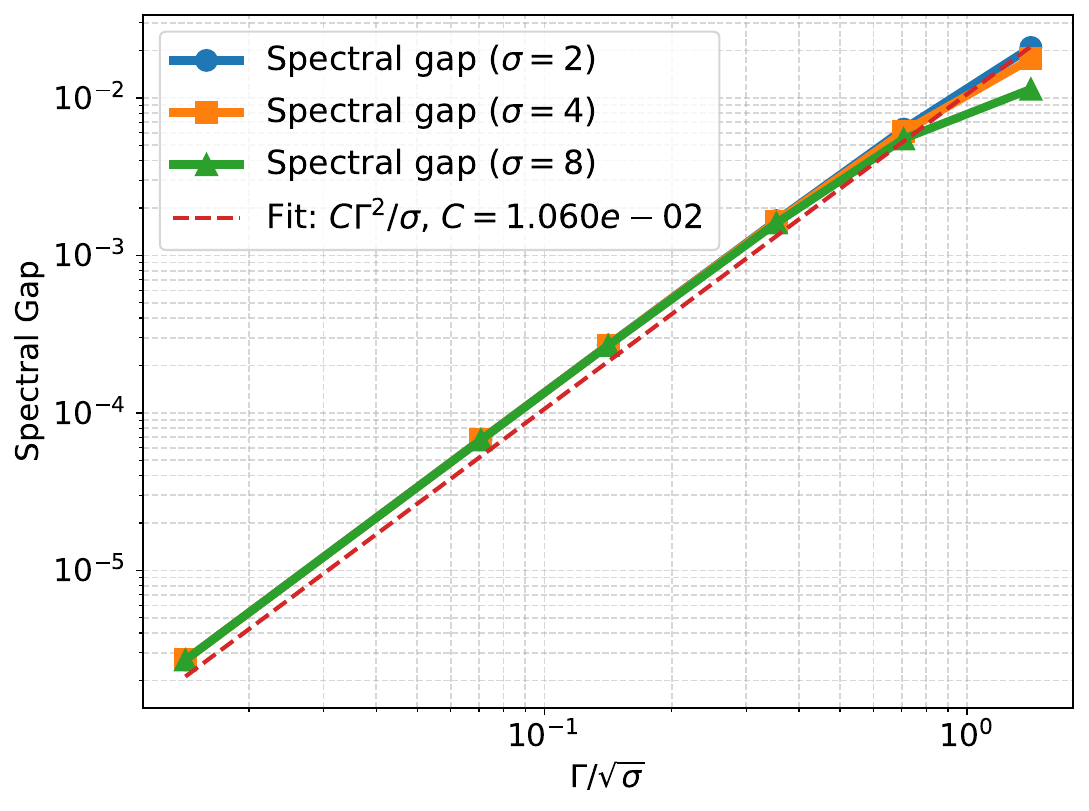}
        \caption{}
        \label{fig_large:dynamics_2}
    \end{subfigure}
    \begin{subfigure}[b]{0.24\textwidth}
        \centering
        \includegraphics[width=\textwidth]{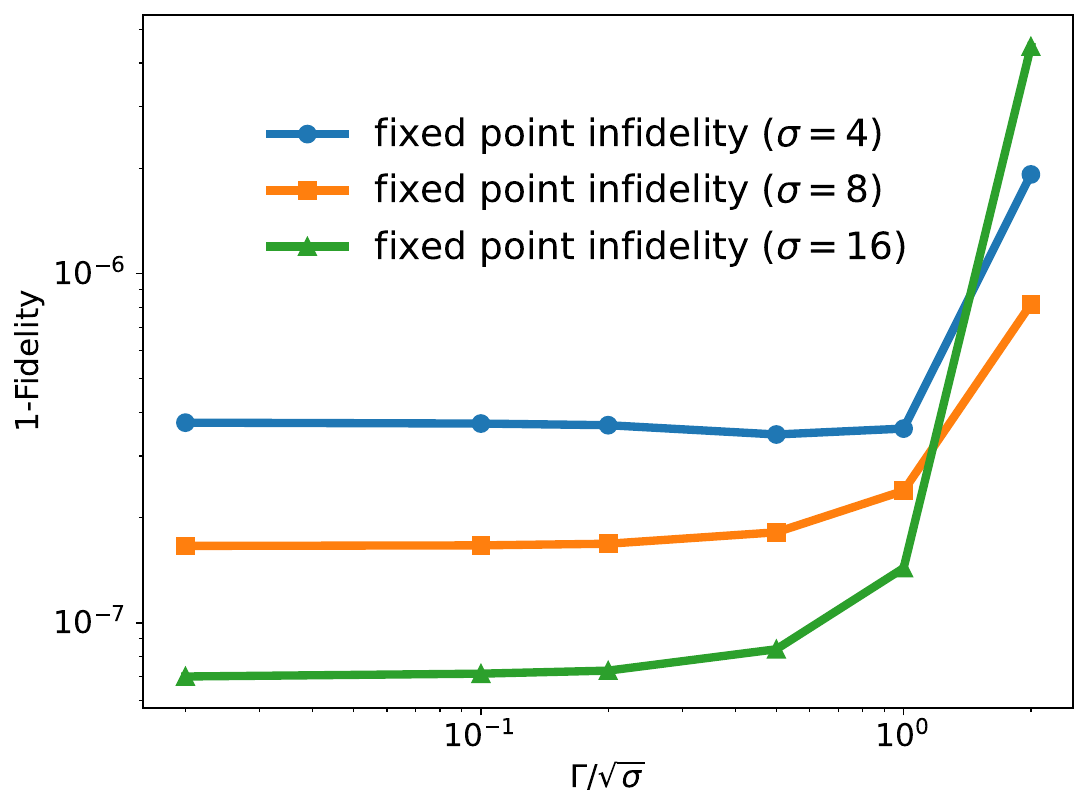}
        \caption{}
        \label{fig_large:error}
    \end{subfigure}
    \hfill
    \begin{subfigure}[b]{0.24\textwidth}
        \centering
        \includegraphics[width=\textwidth]{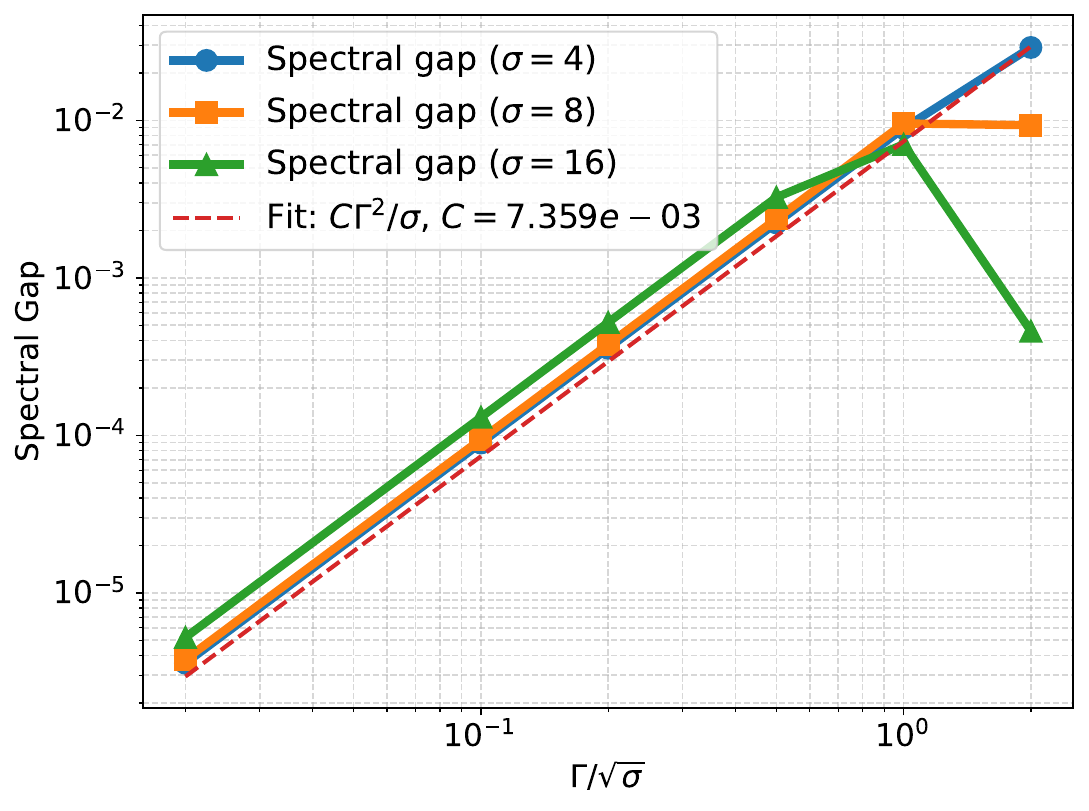}
        \caption{}
        \label{fig_large:spectral_gap}
    \end{subfigure}
    \caption{Thermal and groud state preparation for TFIM with $4$ sites in the regime $\Gamma = \Theta(\sigma)$. (a) Infidelity between target thermal state ($\beta=1$) and the stationary state of $\Phi_\Gamma$ for varying $\sigma$ and ${\frac{\Gamma}{\sqrt{2\sigma}}} = 1, 0.5, 0.25, 0.1, 0.05, 0.01$.(b) Spectral gap of $\Phi_\Gamma$ with different
        ${\frac{\Gamma}{\sqrt{\sigma}}}$, $\sigma$, and $\beta=1$ and the same values of  ${\frac{\Gamma}{\sqrt{2\sigma}}}$. (c) Infidelity between target ground state ($\beta=\infty$) and the stationary state of $\Phi_\Gamma$ for varying $\sigma$ and  ${\frac{\Gamma}{\sqrt{\sigma}}}=2, 1, 0.5, 0.2, 0.1, 0.02$. (d) Spectral gap of $\Phi_\Gamma$ with different
        ${\frac{\Gamma}{\sqrt{\sigma}}}$, $\sigma$, $\beta=\infty$, and the same values of ${\frac{\Gamma}{\sqrt{\sigma}}}$.}
    \label{Fig:TFIM_4_large_alpha}
\end{figure}

\paragraph{1-D Hubbard model}\label{sec:Hubbard_numerics}
Consider the 1-D Hubbard model defined on $L=2,4$ spinful sites with open boundary conditions in~\eqref{eqn:Hubbard} with $t = 1, U = -4$.
Similar to the TFIM case, we consider the thermal state case with the same choice of parameters ${\Gamma}, \sigma$. We observe very similar results in~\cref{Fig:Hubbard_2_thermal} and~\cref{Fig:Hubbard_2_ground}. Furthermore, the numerical experiments can be extended to the regime where $\Gamma = \Theta(\sigma)$, as shown in~\cref{Fig:Hubbard_2_large_alpha}. Similar to the TFIM example, when ${\frac{\Gamma}{{\sigma}}}  \leq 1/2$, the fixed point closely matches the target state, and the spectral gap increases as ${\frac{\Gamma^2}{\sigma}}$.

\begin{figure}[!htbp]
    \centering
    \begin{subfigure}[b]{0.19\textwidth}
        \centering
        \includegraphics[ width=\textwidth]{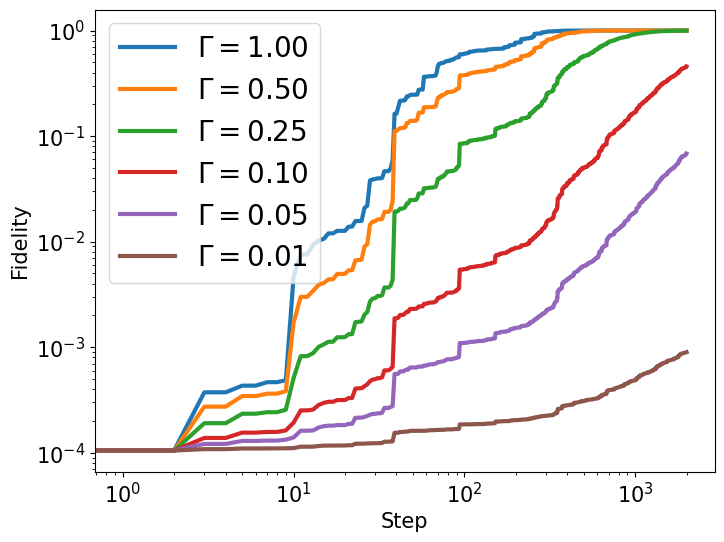}
        \caption{}
        \label{fig3:dynamics_1}
    \end{subfigure}
    \begin{subfigure}[b]{0.19\textwidth}
        \centering        \includegraphics[width=\textwidth]{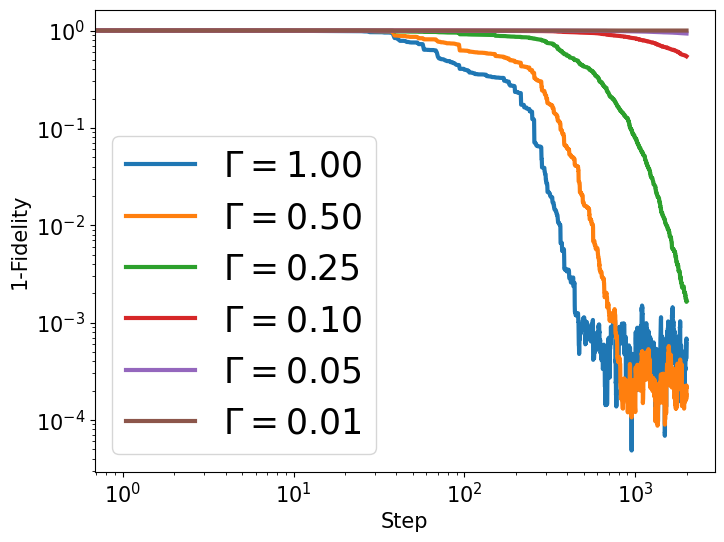}
        \caption{}
        \label{fig3:dynamics_2}
    \end{subfigure}
    \begin{subfigure}[b]{0.19\textwidth}
        \centering
        \includegraphics[width=\textwidth]{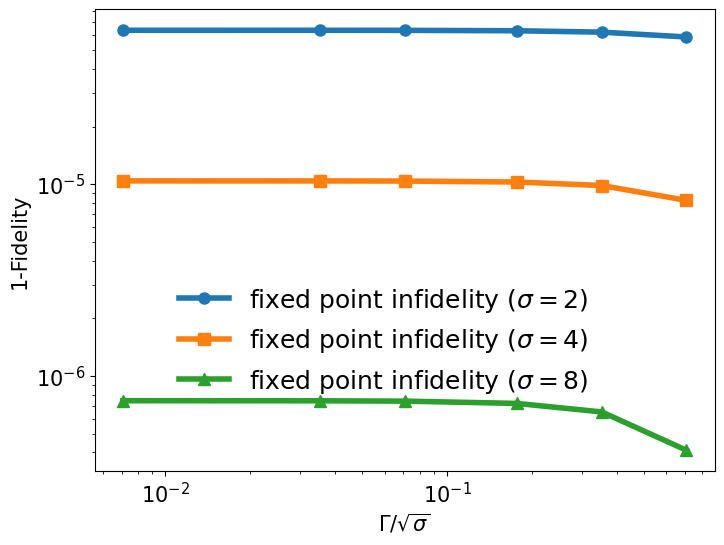}
        \caption{}
        \label{fig3:error}
    \end{subfigure}
    \begin{subfigure}[b]{0.19\textwidth}
        \centering        \includegraphics[width=\textwidth]{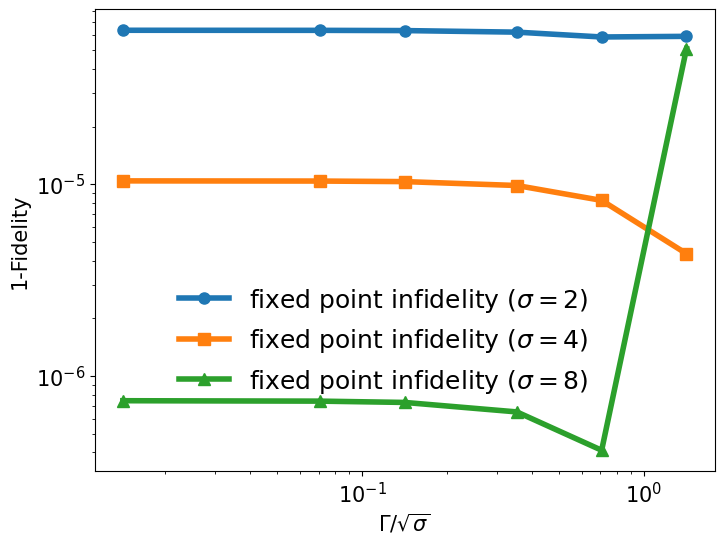}
        \caption{}
        \label{fig2_large:dynamics_1}
    \end{subfigure}
        \begin{subfigure}[b]{0.19\textwidth}
        \centering        \includegraphics[width=\textwidth]{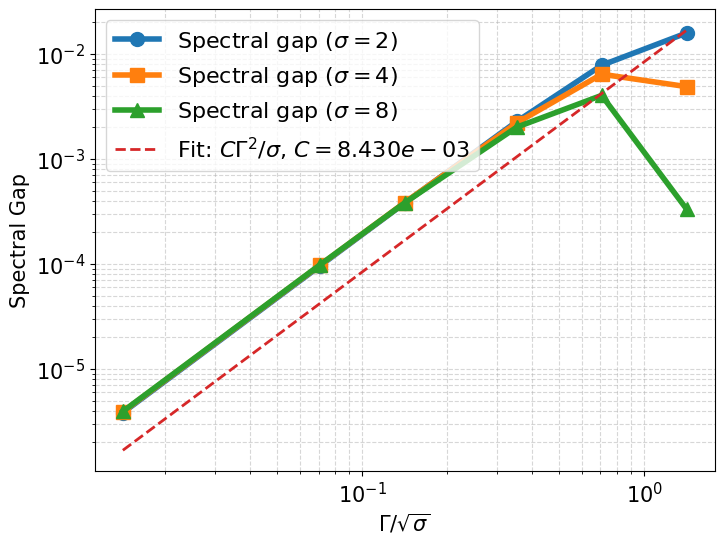}
        \caption{}
        \label{fig2_large_spectraldensity}
    \end{subfigure}
    \caption{Thermal state preparation of Hubbard with $L = 2$ sites. (a)--(c) In the regime $\Gamma=\Theta(1)$.  (a) The evolution of fidelity. We set $\sigma=2$ and ${\Gamma}$. (b) The evolution of infidelity. (c) Infidelity between target thermal state and the stationary state of $\Phi_\Gamma$ with different $\sigma$. Fidelity increases with $\sigma$. (d)--(e) Strong coupling regime $\Gamma = \Theta(\sigma)$.  (d) Infidelity between target thermal state and the stationary state of $\Phi_\Gamma$ with different $\sigma$ and $\beta=1$. Here ${\frac{\Gamma}{\sqrt{2\sigma}}}=1, 0.5, 0.25, 0.1, 0.05, 0.01$. (e) Spectral gap.}
    \label{Fig:Hubbard_2_thermal}
\end{figure}

\subsection{Ground state preparation}\label{sec:ground_state_8qubits}
We also consider the ground state preparation of TFIM, Hubbard and ANNNI model.
For those large scale numerical experiments, including TFIM-8, Hubbard-4, we simulate the state vector instead of density operator to reduce the computational cost. Thus, we report the evolution of energy along a single trajectory. Unless stated otherwise, the remaining parameters are chosen as in the previous experiments. Similar to the thermal state case, even in the strong-coupling regime (${\frac{\Gamma}{{\sigma}}} \approx 0.5$), the algorithm still converges to the ground state with high accuracy, and the convergence speed increases as ${\frac{\Gamma}{\sqrt{\sigma}}}$ grows.
\paragraph{TFIM} We study the ground state preparation for the TFIM model with $L = 4, 8$ sites. In the regime $\Gamma=\Omega(1)$ with $L= 4$ (\cref{Fig:TFIM_4_ground}), we set ${\frac{\Gamma}{\sqrt{\sigma}}}=0.5, 0.25, 0.125, 0.05, 0.025, 0.005$, $\sigma=4,8,16$, $T=5\sigma$, and also sample $\omega$ uniformly from $[0,5]$. Similar to the thermal state case, figures in~\cref{fig2:dynamics_1} and~\cref{fig2:dynamics_2} justifies the convergence behavior of the fidelity and energy in the algorithm. In the strong coupling regime  $\Gamma=\Omega(\sigma)$ with size $L = 4$ (\cref{fig_large:error}, \cref{fig_large:spectral_gap}) and $L= 8$ (\cref{fig9:TFIM_8_enery}), we can still observe the convergence of the energy. Moreover, once the coupling parameter ${\frac{\Gamma}{\sqrt{\sigma}}}$ is below a moderate large constant,
the fidelity and spectral gap are essentially insensitive to the change of $\sigma$.
\begin{figure}[!htbp]
    \centering
    \begin{subfigure}[b]{0.24\textwidth}
        \centering        \includegraphics[width=\textwidth]{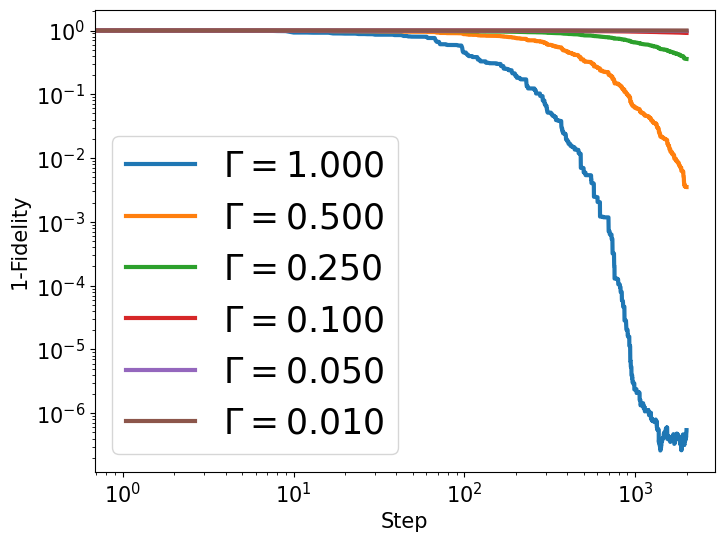}
        \caption{}
        \label{fig2:dynamics_1}
    \end{subfigure}
    \begin{subfigure}[b]{0.24\textwidth}
        \centering        \includegraphics[width=\textwidth]{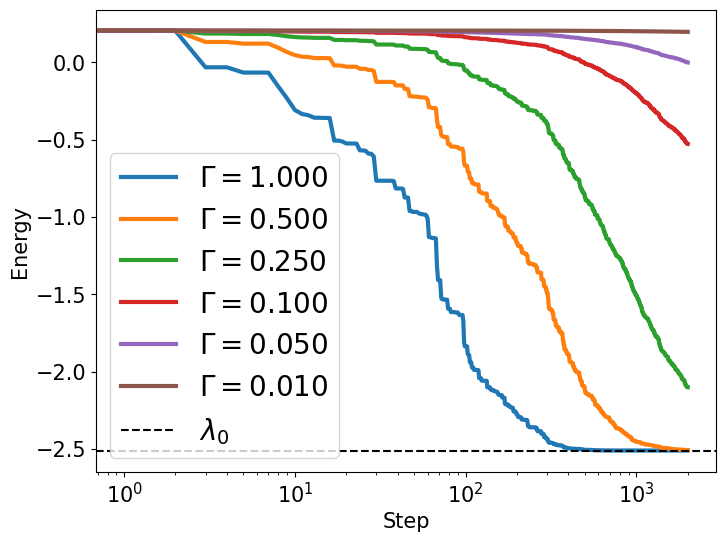}
        \caption{}
        \label{fig2:dynamics_2}
    \end{subfigure}
        \begin{subfigure}[b]{0.24\textwidth}
        \centering
        \includegraphics[width=\textwidth]{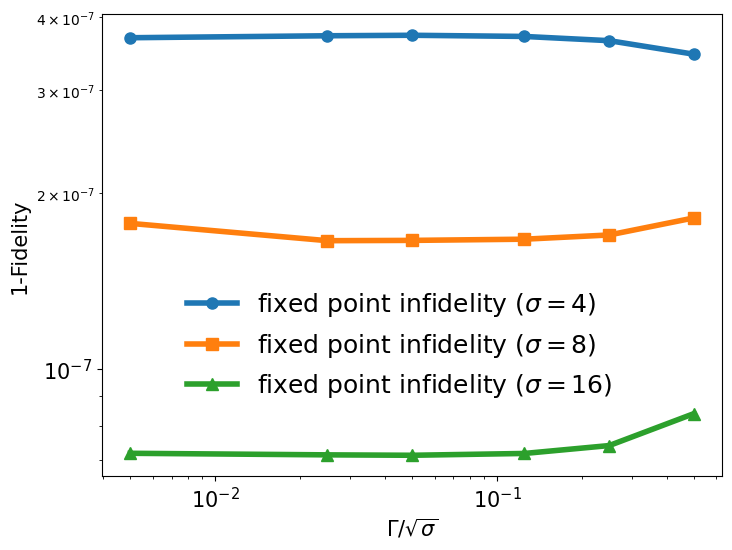}
        \caption{}
        \label{fig2:spectral_gap}
    \end{subfigure}
    \begin{subfigure}[b]{0.24\textwidth}
        \centering
        \includegraphics[width=\textwidth]{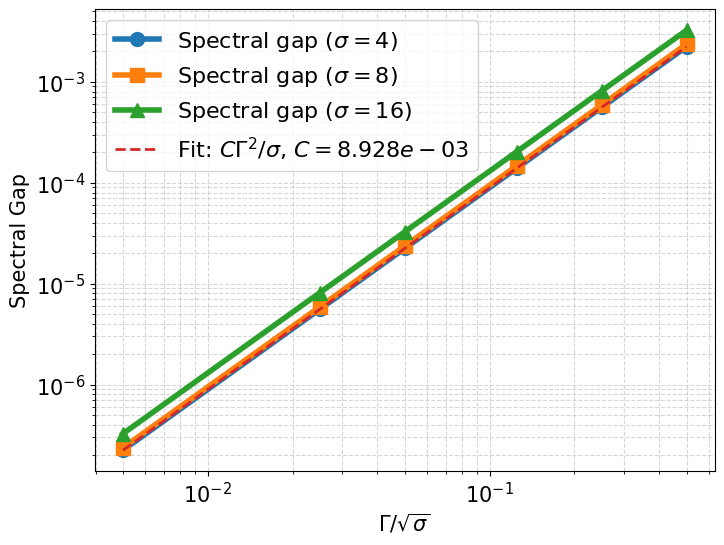}
        \caption{}
        \label{fig2:error}
    \end{subfigure}
    \caption{Ground state preparation for the TFIM with $L=4$ sites with ${\Gamma}=1.00, 0.50, 0.25, 0.10, 0.05, 0.01$ and $\sigma=4$. (a) infidelity dynamics,(b) energy dynamics, (c) spectral gap and (d) spectral gap. The ground state energy is denoted by $\lambda_0$.
    }
    \label{Fig:TFIM_4_ground}
\end{figure}
\paragraph{Hubbard model} We consider the ground state preparation for the Hubbard model with $L = 2, 4$ sites. In the regime  $\Gamma = \Theta(1)$ with sites $L = 2$ (\cref{Fig:Hubbard_2_ground}), the energy trajectories relax toward the ground state energy across a wide range of ${\frac{\Gamma}{\sqrt{\sigma}}}$ and converges faster as ${\frac{\Gamma}{\sqrt{\sigma}}}$ increases. The spectral gap scales approximately quadratically in ${\frac{\Gamma}{\sqrt{\sigma}}}$ while only mild depend on $\sigma$. The infidelity exhibit rapid decay over iterations. Specifically, the fixed point infidelity remains small and is approximately unchanged over different choice of $\sigma$. In the stronger coupling case with $L = 2$ (\cref{fig2_large:error,fig2_large:spectral_gap}) and $L = 4$ (\cref{fig9:Hubbard4_energy}), the fixed point infidelity and spectral gap curves indicate that: up to moderately large coupling, the preparation performance is essentially insensitive to $\sigma$, with noticeable deviation only as ${\frac{\Gamma}{\sqrt{\sigma}}}$ becomes large enough to slow mixing and reduce the achievable final accuracy.
\begin{figure}[!htbp]
    \centering
    \begin{subfigure}[b]{0.24\textwidth}
        \centering
        \includegraphics[width=\textwidth]{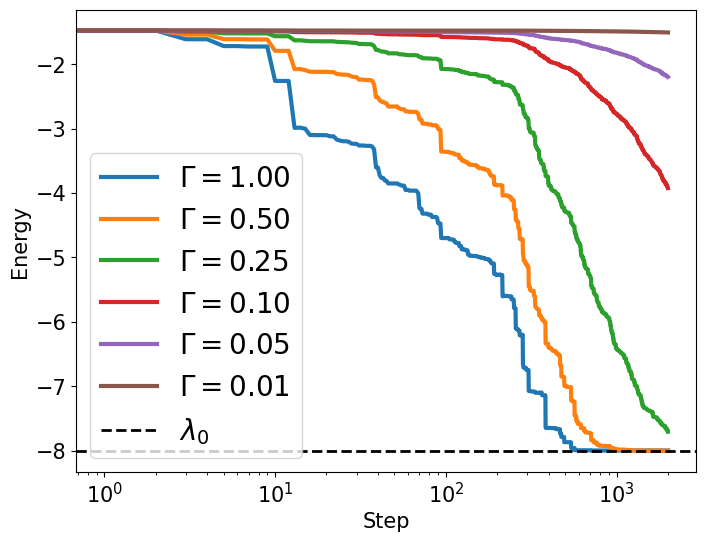}
        \caption{}
        \label{fig4:error}
    \end{subfigure}
    \begin{subfigure}[b]{0.24\textwidth}
        \centering        \includegraphics[width=\textwidth]{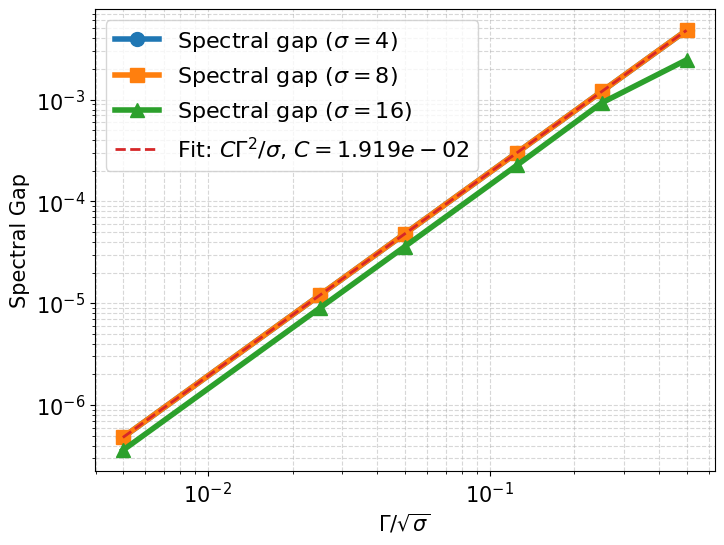}
        \caption{}
        \label{fig4:dynamics_2}
    \end{subfigure}
    \begin{subfigure}[b]{0.24\textwidth}
        \centering        \includegraphics[width=\textwidth]{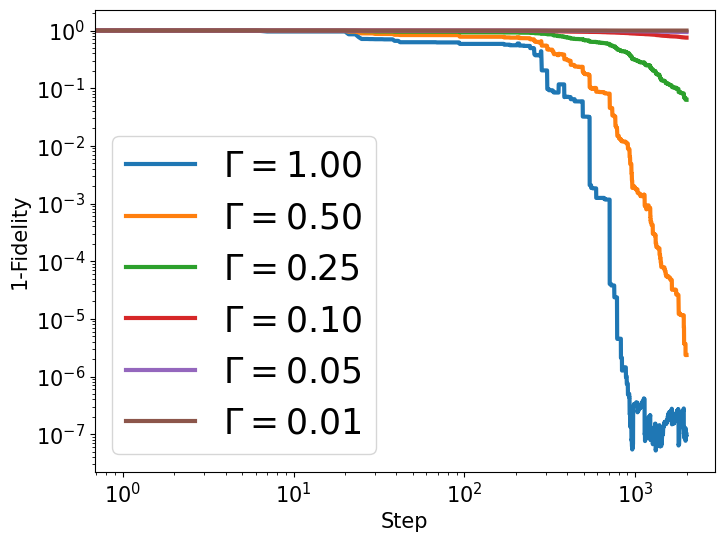}
        \caption{}
        \label{fig4:dynamics_3}
    \end{subfigure}
    \begin{subfigure}[b]{0.24\textwidth}
        \centering
        \includegraphics[width=\textwidth]{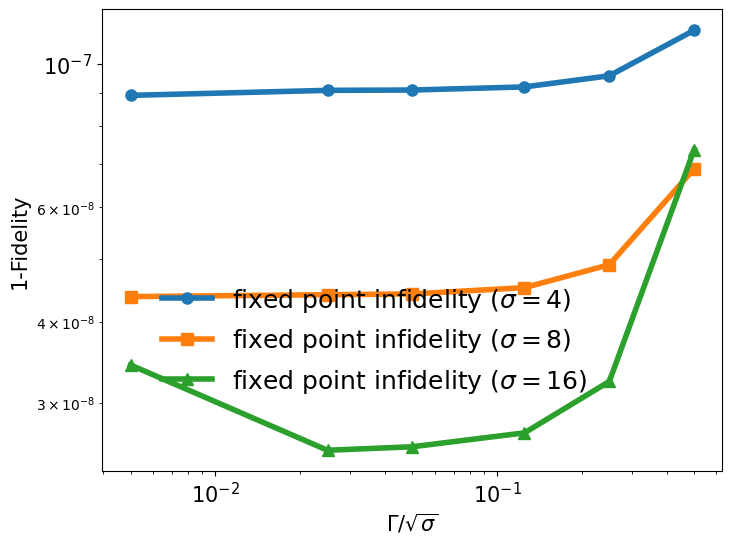}
        \caption{}
        \label{fig4:error2}
    \end{subfigure}
    \caption{Ground state preparation of the Hubbard model with $L=2$ sites in the regime $\Gamma=\Theta(1)$. We set ${\frac{\Gamma}{\sqrt{\sigma}}}=0.5, 0.25, 0.125, 0.05, 0.025, 0.005$, $\sigma=4,8,16$, $T=5\sigma$, and also sample $\omega$ uniformly from $[0,5]$. (a) The evolution of energy, $\lambda_0$ is the ground state energy. (b) Spectral gap of $\Phi_\Gamma$.
    (c) The evolution of infidelity. Here, we set $\sigma=4$ and ${\Gamma}$. (d) Infidelity between the target thermal state and the stationary state of $\Phi_\Gamma$ with different $\sigma$. Fidelity increases with $\sigma$.}
    \label{Fig:Hubbard_2_ground}
\end{figure}
\begin{figure}[!htbp]
    \centering
    \begin{subfigure}[b]{0.3\textwidth}
        \centering
        \includegraphics[width=\textwidth]{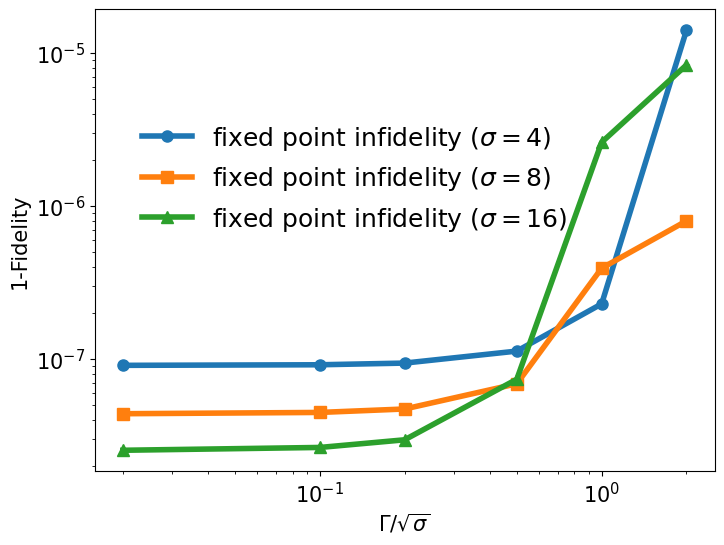}
        \caption{}
        \label{fig2_large:error}
    \end{subfigure}
    \begin{subfigure}[b]{0.3\textwidth}
        \centering
        \includegraphics[width=\textwidth]{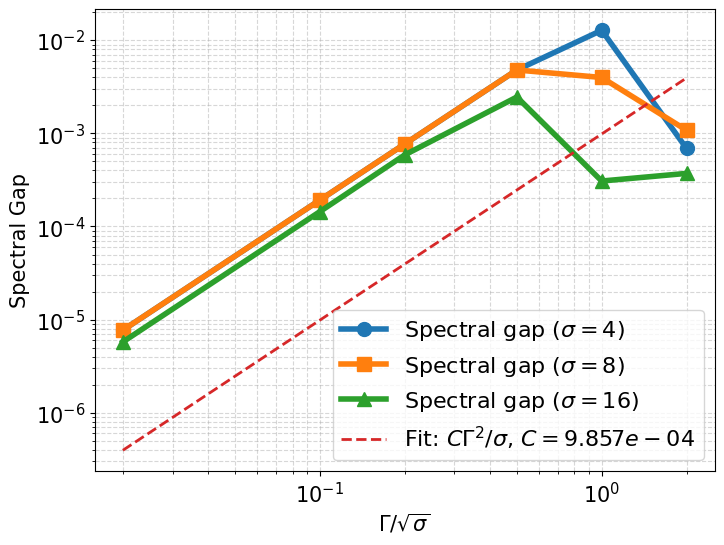}
        \caption{}
        \label{fig2_large:spectral_gap}
    \end{subfigure}
    \begin{subfigure}[b]{0.3\textwidth}
        \centering
        \includegraphics[width=\textwidth]{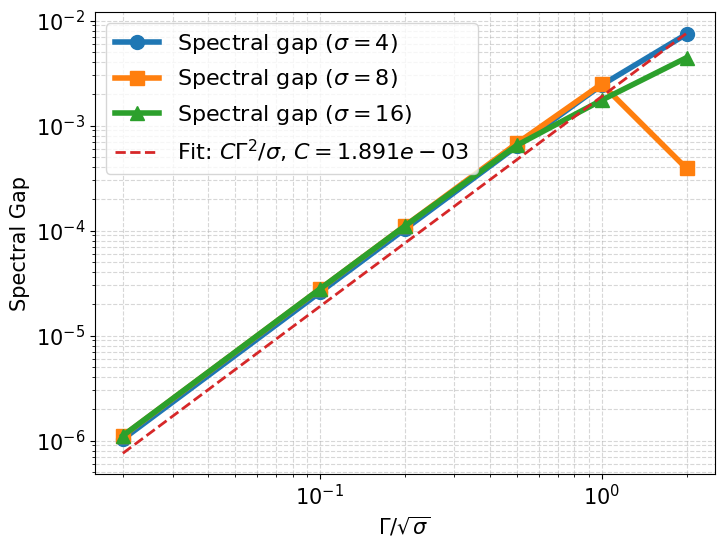}
        \caption{}
        \label{Fig:ANNNI}
    \end{subfigure}
    \caption{Ground state preparation  in the regime $\Gamma = \Theta(\sigma)$. (a)-(b) Hubbard model with $L = 2$ sites. The parameters are ${\frac{\Gamma}{\sqrt{\sigma}}}=2, 1, 0.5, 0.2, 0.1, 0.02$, $\sigma = 4,8,16$. (a) Infidelity between target ground state and the stationary state of $\Phi_\Gamma$. (b) Spectral gap of $\Phi_\Gamma$. (c) Ground state preparation for ANNNI model with $L = 4$ sites in the regime $\Gamma = \Theta(\sigma)$. Spectral gap with different $\sigma$.}
    \label{Fig:Hubbard_2_large_alpha}
\end{figure}

\paragraph{ANNNI model}\label{sec:ANNNI}
Consider the 1-D axial next-nearest-neighbor Ising (ANNNI) model with $L = 4$ sites defined as
\begin{equation}\label{eq:annni}
H_{\mathrm{ANNNI}}
=\frac{J_1}{4} \sum_{i} Z_i Z_{i+1}
+\frac{J_2}{4} \sum_{i} Z_i Z_{i+2}
-\frac{G}{2} \sum_{i} X_i,
\end{equation}
with $J_1=2$, $J_2=0.6$, $G=0.2$, and $\rho_0=\ket{0}\bra{0}$. In our test, we set ${\frac{\Gamma}{\sqrt{\sigma}}}=2, 1, 0.5, 0.2, 0.1, 0.02$, $\sigma =4, 8, 16$, and $T = 5\sigma$. The resulting spectral gaps are shown in~\cref{Fig:ANNNI}. Similar to the TFIM and Hubbard models, our algorithm converges to the correct ground state whenever ${\frac{\Gamma}{{\sigma}}} \le \tfrac{1}{2}$, and the convergence rate increases proportionally to ${\frac{\Gamma^2}{\sigma}}$.

\begin{figure}
    \centering
    \begin{subfigure}[b]{0.24\textwidth}
        \centering
        \includegraphics[width=\textwidth]{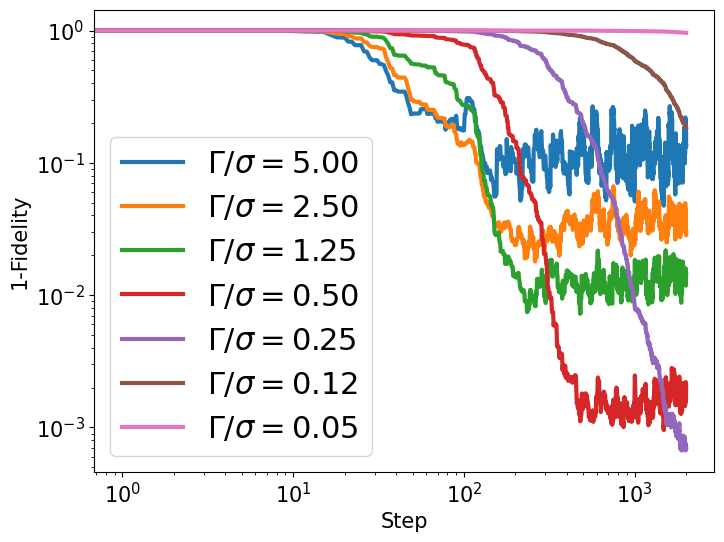}
        \caption{}
        \label{fig9:TFIM_8_infidelity}
    \end{subfigure}
    \begin{subfigure}[b]{0.24\textwidth}
        \centering
        \includegraphics[width=\textwidth]{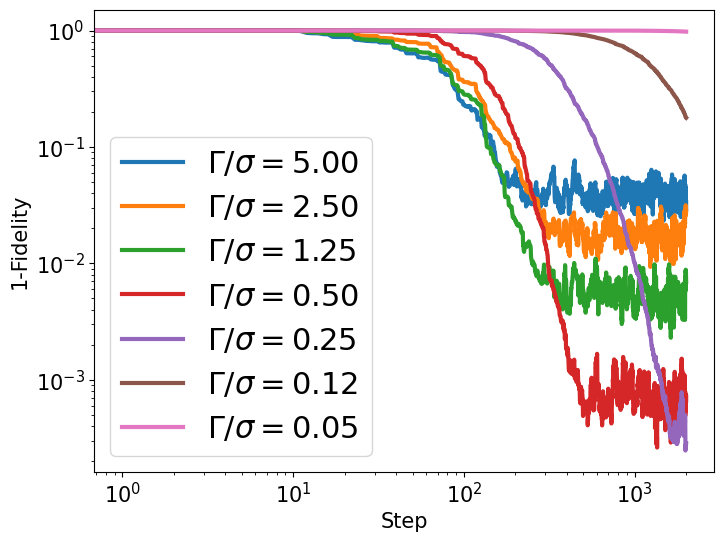}
        \caption{}
        \label{fig9:Hubbard4_infidelity}
    \end{subfigure}
    \begin{subfigure}[b]{0.24\textwidth}
        \centering        \includegraphics[width=\textwidth]{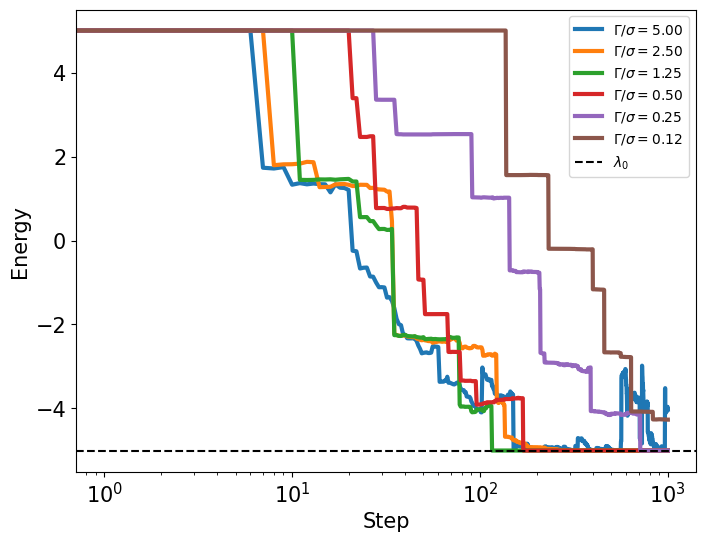}
        \caption{}
        \label{fig9:TFIM_8_enery}
    \end{subfigure}
    \begin{subfigure}[b]{0.24\textwidth}
        \centering
        \includegraphics[width=\textwidth]{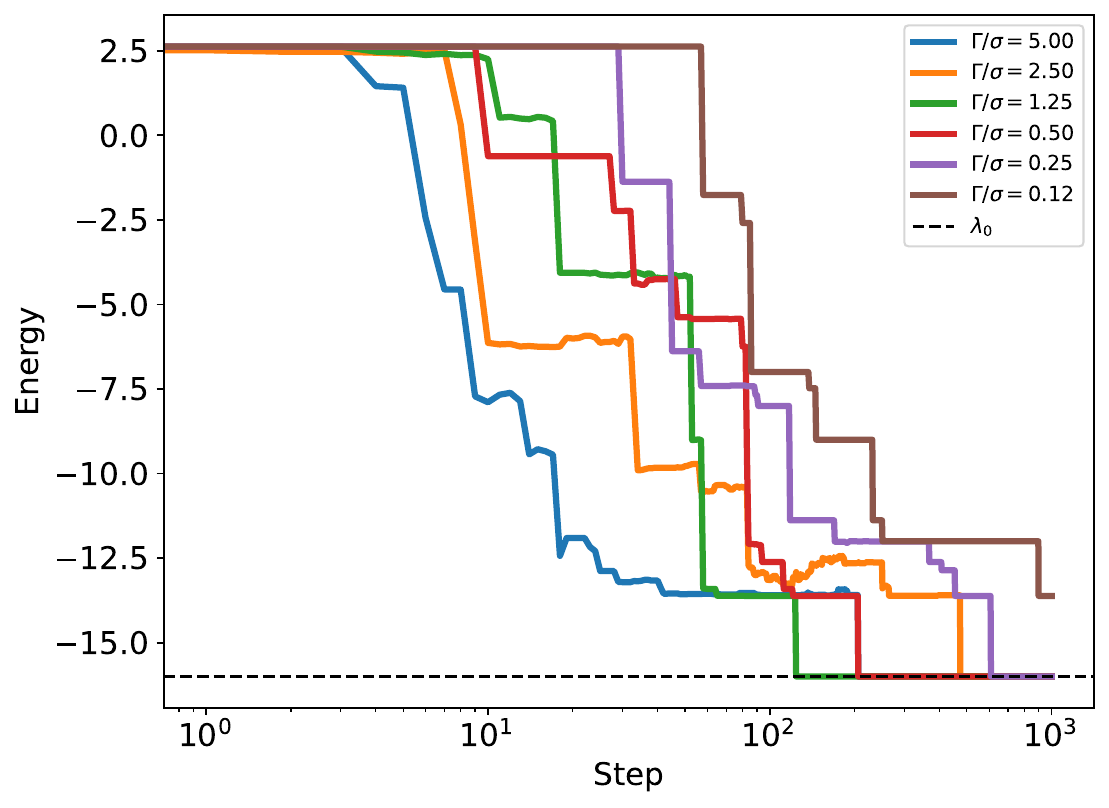}
        \caption{}
        \label{fig9:Hubbard4_energy}
    \end{subfigure}
    \caption{ State preparation for larger systems in the strong coupling regime $\Gamma = \Theta(\sigma)$.
    $\lambda_0$ denotes the ground state energy. (a) Infidelity of the thermal state preparation for TFIM with $L=8$ sites. (b) Infidelity of the thermal state preparation for the Hubbard model with $L=4$ sites. (c) Evolution of energy of ground state preparation for TFIM with $L=8$ sites. (d) Evolution of energy of ground state preparation for the Hubbard model with $L=4$ sites.}
    \label{Fig:large_alpha_ground}
\end{figure}

\end{document}